\documentclass[11pt,a4paper]{article}

\usepackage[margin=1in]{geometry}
\usepackage[english]{babel}
\usepackage{amsmath,amssymb,amsthm}
\usepackage[hidelinks]{hyperref}
\usepackage{cleveref}
\usepackage{cite}
\usepackage[linesnumbered,ruled,vlined,noend]{algorithm2e}
\usepackage{tikz}
\usepackage{placeins}
\usetikzlibrary{positioning,fit,backgrounds,calc}

\newtheorem{theorem}{Theorem}
\newtheorem{lemma}[theorem]{Lemma}
\newtheorem{proposition}[theorem]{Proposition}
\newtheorem{claim}[theorem]{Claim}
\newtheorem{corollary}[theorem]{Corollary}

\theoremstyle{definition}
\newtheorem{definition}[theorem]{Definition}

\theoremstyle{remark}
\newtheorem{remark}[theorem]{Remark}

\newenvironment{claimproof}
{\begin{proof}}
	{\end{proof}}

\crefname{theorem}{theorem}{theorems}
\Crefname{theorem}{Theorem}{Theorems}
\crefname{lemma}{lemma}{lemmas}
\Crefname{lemma}{Lemma}{Lemmas}
\crefname{proposition}{proposition}{propositions}
\Crefname{proposition}{Proposition}{Propositions}
\crefname{claim}{claim}{claims}
\Crefname{claim}{Claim}{Claims}
\crefname{corollary}{corollary}{corollaries}
\Crefname{corollary}{Corollary}{Corollaries}
\crefname{definition}{definition}{definitions}
\Crefname{definition}{Definition}{Definitions}
\crefname{remark}{remark}{remarks}
\Crefname{remark}{Remark}{Remarks}
\crefname{algocf}{Algorithm}{Algorithms}
\Crefname{algocf}{Algorithm}{Algorithms}

\allowdisplaybreaks

\title{Testing Induced-Subgraph Freeness in Outerplanar Graphs \\under the Random-Neighbor Oracle}

\author{
  Pan Peng\footnote{
	School of Computer Science and Technology, University of Science and Technology of China.  Email: \href{ppeng@ustc.edu.cn}{ppeng@ustc.edu.cn}}\\
  \and
  Kefan Yu\footnote{
	School of Computer Science and Technology, University of Science and Technology of China.  Email: \href{ykf97@mail.ustc.edu.cn}{ykf97@mail.ustc.edu.cn}}\\
}

\date{}

\begin{document}
	\hypersetup{pageanchor=false}
	\maketitle
	
	\begin{abstract}
		We prove that, for every fixed nonempty graph \(H\), induced-\(H\)-freeness
		is testable with \(\varepsilon^{-O_H(1)}\) queries on outerplanar graphs with
		no maximum-degree bound in the \emph{random-neighbor model}, where each query
		at a vertex returns a uniformly random neighbor. Thus, the query complexity
		is polynomial in \(1/\varepsilon\) and independent of the number \(n\) of
		vertices. Previously, the best bound known for this problem was the
		\(\operatorname{poly}(\log n)\)-query guarantee that follows from the general
		outerplanar-graph tester of Babu, Khoury, and Newman~\cite{BKN16} in the
		stronger \emph{adjacency-list model}, which provides exact degree queries and
		indexed access to neighbors.
		
		Our tester has \emph{two-sided error}, which
		is necessary in general: induced-\(P_3\)-freeness has no one-sided
		constant-query tester in the random-neighbor model, even on outerplanar graphs of
		maximum degree two.
	\end{abstract}
	
	\thispagestyle{empty}

\newpage
\hypersetup{pageanchor=true}
\pagenumbering{arabic}

	\section{Introduction}
	
	\emph{Property testing} asks whether a large object satisfies a property
	or is far from every object satisfying it, while examining only a small
	portion of the input.  Since the foundational work of
	Rubinfeld and Sudan~\cite{RS96} and Goldreich, Goldwasser, and
	Ron~\cite{GGR98}, property testing has developed into a central area of
	sublinear-time algorithms.  Graphs provide one of its richest settings,
	because the appropriate notion of local access depends strongly on the
	density and structure of the input.
	
	The study of graph property testing began in the \emph{dense graph model}, where
	the graph is accessed through adjacency queries. Alon, Fischer, Newman, and
	Shapira~\cite{AFNS06} characterized the testable graph properties. Alon and
	Shapira~\cite{AS08} gave a corresponding characterization for natural
	\emph{one-sided testers}, which never reject a
	graph that has the property, and, in particular, proved the testability of
	every \emph{hereditary property}, namely, every property preserved under
	vertex deletion. These results illustrate the central role of global
	regularity methods in the dense model.
	
	Goldreich and Ron~\cite{GR97} initiated a different line of research for
	sparse graphs of bounded maximum degree.  Here local neighborhoods, rather
	than dense regularity, become the basic source of information.  A sequence
	of results established constant-query testability for increasingly broad
	sparse graph families and properties.  Czumaj, Shapira, and
	Sohler~\cite{CSS09} studied hereditary properties on nonexpanding
	bounded-degree graphs; Benjamini, Schramm, and Shapira~\cite{BSS10} proved
	that every \emph{minor-closed property}, which is preserved under deleting
	vertices or edges and contracting edges, is testable; and Newman and
	Sohler~\cite{NS11} showed that
	every property of a \emph{hyperfinite graph family} is testable. Such a
	family can be split into bounded-size components by deleting a small
	fraction of its edges. The role of hyperfiniteness in bounded-degree testing was
	further clarified by Fichtenberger, Peng, and Sohler~\cite{FPS19}.
	\emph{Partition oracles}, which give local access to such a decomposition,
	were introduced in this context by Hassidim et al.~\cite{HKNO09}. They have
	become a fundamental algorithmic tool, and the
	polynomial-query partition oracle of Kumar, Seshadhri, and
	Stolman~\cite{KSS21} gives particularly efficient local access to
	decompositions of minor-free graphs. For bounded-degree planar graphs, Basu, Kumar, and
	Seshadhri~\cite{BKS21} further investigated the complexity of testing
	arbitrary properties and the central role of graph isomorphism.
	
	Much less is understood once the maximum-degree restriction is removed.
	Even in a sparse graph class, one vertex may have degree close to \(n\).
	Consequently, a tester cannot inspect every bounded-radius neighborhood with a
	constant number of queries.
	
	One line of work gives constant-query one-sided testers for specific
	properties, including bipartiteness and many forms of
	\emph{non-induced subgraph freeness}, where the forbidden copy need not be
	induced. Czumaj,
	Monemizadeh, Onak, and Sohler~\cite{CMOS11} gave a constant-query
	tester for bipartiteness in planar graphs in the \emph{random-neighbor
		model}, where a query at a vertex returns a uniformly random neighbor.
	Czumaj and
	Sohler~\cite{CS19} then showed that, for every fixed graph \(H\), non-induced
	\(H\)-freeness is one-sided testable on planar graphs and, more generally,
	on minor-free graphs. Esperet and Norin~\cite{EN22} later showed that every
	\emph{monotone property}, which is preserved under edge deletion, is
	testable on every proper minor-closed class. Levi and
	Shoshan~\cite{LS21} also gave a polynomial-query tester for the nonmonotone
	property of Hamiltonicity in minor-free graphs.
	
	Recent work extends non-induced \(H\)-freeness testing to wider sparse
	classes. Awofeso, Greaves, Lachish, Levi, and
	Reidl~\cite{awofeso2025sufficient} gave a general sufficient
	condition for one-sided random-neighbor testing. Humeau, Kant{\'e}, Mock, Picavet, and 
	Vigny~\cite{humeau2025testing} gave constant-query testers for classes of
	bounded expansion. Lachish, Levi, Newman, and
	Reidl~\cite{LLNR26} gave a characterization for bounded-degeneracy graphs.
	It identifies the properties that admit one-sided random-neighbor testers.
	For a fixed
	\(2\)-connected graph \(H\), their result says that non-induced
	\(H\)-freeness is testable exactly when \(H-S\) is connected for every
	independent set \(S\subseteq V(H)\). There are also lower bounds: Eden,
	Levi, and Ron~\cite{ELR24} proved polynomial lower bounds in \(n\) for
	testing several fixed cycle-freeness properties in the
	\emph{adjacency-list model}, where the tester can query exact degrees and
	indexed neighbors. These results show that both the graph class and the
	property matter.
	
	A second line of work gives sublinear testers for \emph{all} properties in a certain class of sparse graphs. Kusumoto and Yoshida~\cite{KY14} showed that every
	property of forests is testable with \(\operatorname{poly}(\log n)\)
	queries. Babu, Khoury, and Newman~\cite{BKN16} proved the same type of bound
	for every property of outerplanar graphs in the adjacency-list model.
	
	These two lines of work do not settle the complexity of testing a basic
	hereditary property in sparse graphs: \emph{induced-subgraph freeness}. For a
	fixed graph \(H\), this property requires that no vertex set induce a copy of
	\(H\). It has been studied extensively in the dense graph
	model~\cite{AFKS99,AS06induced,AS08}. The general outerplanar-graph tester
	above applies to this property, but uses \(\operatorname{poly}(\log n)\)
	queries and the stronger adjacency-list model. On the other hand, the known
	constant-query results for fixed forbidden subgraphs concern non-induced
	copies.

	Induced copies introduce a basic difficulty that is absent in the
	non-induced setting. A non-induced copy can be certified by finding its
	edges. An induced copy can be certified only after checking both its edges
	and its nonedges. The latter is difficult under random-neighbor access:
	sampling can reveal an edge, but a bounded number of samples cannot certify
	that two high-degree vertices are nonadjacent. Thus the central question is
	whether the fixed size of the forbidden pattern, together with the structure
	of the input class, is enough to overcome the lack of direct access to
	nonedges.

	Outerplanar graphs are a natural first class for this question. They allow
	unbounded degrees, and hence the difficulty above remains, but they have a strong
	local decomposition. Moreover, every outerplanar property has a
	polylogarithmic-query tester, while many non-induced subgraph freeness
	properties have constant-query testers. We ask whether induced-\(H\)-freeness
	can be tested with a number of random-neighbor queries independent of \(n\).
	
	\subsection{Our Contributions}
	
	Our main result answers this question for every fixed forbidden graph:
	despite the lack of direct access to nonedges, its query complexity is
	independent of \(n\). The tester has \emph{two-sided error}: it may err with bounded
	probability both when the input has the property and when it is far from the
	property.
	Distance is measured by the number of edge additions and deletions on the
	same vertex set, normalized by \(n\), as in earlier work on bipartiteness
	and non-induced subgraph freeness in sparse graphs~\cite{CMOS11,CS19,humeau2025testing}. Only the input graph is promised to
	be outerplanar; the comparison graph after editing need not be outerplanar. This separation between the input promise and the comparison class also
	appears in the minor-free Hamiltonicity tester of Levi and
	Shoshan~\cite{LS21}, with a different distance normalization. 
	Both main theorems use this convention. In particular, we do not claim a
	tester for distance to \emph{outerplanar} induced-\(H\)-free graphs.
	
	\begin{theorem}[Random-neighbor model]
		\label{thm:main}
		Let \(H\) be any fixed nonempty graph. Then induced-\(H\)-freeness is two-sided
		testable in outerplanar graphs under the random-neighbor oracle with query
		complexity \(\varepsilon^{-O_H(1)}\).
	\end{theorem}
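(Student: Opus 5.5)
The plan is to reduce the problem to estimating a bounded number of local statistics. Each statistic is an expectation over random walks of bounded length, and we show that any outerplanar graph \(\varepsilon\)-far from induced-\(H\)-free makes one of them large, while induced-\(H\)-free graphs keep it small. Let \(h=|V(H)|\).

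\textbf{Step 1 (structural decomposition with degree thresholds).} Fix a threshold \(d=\mathrm{poly}(1/\varepsilon)\). Call a vertex \emph{heavy} if its degree exceeds \(d\) and \emph{light} otherwise. Outerplanar graphs have at most \(2n-3\) edges and are \(2\)-degenerate, so there are \(O(n/d)\) heavy vertices. Since outerplanar graphs are \(K_{2,3}\)-minor-free, any two heavy vertices share at most two common neighbours. The key point is that an induced copy of \(H\) is determined by its vertex set, and in an outerplanar graph the nonedges between two heavy vertices \(u,v\) can be certified indirectly. If \(u,v\) are far apart in the light graph, they are nonadjacent with high confidence. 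If they are adjacent, the edge \(uv\) appears when sampling neighbours of whichever endpoint has smaller degree. The difficulty is that random-neighbour access gives no degree information. We handle this by estimating the relative degrees of the relevant vertices from the frequencies with which random walks hit them.

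\textbf{Step 2 (far implies many disjoint robust copies).} If \(G\) is \(\varepsilon\)-far, greedily deleting all edges of induced copies gives \(\Omega(\varepsilon n)\) edge-disjoint induced copies, since each copy costs at most \(\binom{h}{2}\) edits. Deleting all edges between heavy vertices costs only \(O(n/d)\) edits, because the graph induced on the heavy vertices is outerplanar. Deleting the edges of light vertices lying in small-probability configurations is similarly cheap. After these deletions the graph is still \(\varepsilon/2\)-far. Hence \(\Omega(\varepsilon n)\) induced copies \(S\) survive in which every vertex either is light or is reached with probability \(\varepsilon^{O_H(1)}\) by a walk of length \(O(h)\) started at a uniform vertex of \(S\)'s light part. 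Outerplanarity guarantees that each such copy is contained in a bounded-size local piece: a biconnected-block or fan structure around at most \(O(h)\) heavy "hubs".

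\textbf{Step 3 (tester and two-sided estimation).} The tester samples \(\varepsilon^{-O_H(1)}\) uniform vertices and runs random walks of length \(O(h)\) from them, each step sampling \(\varepsilon^{-O_H(1)}\) neighbours, which builds a partial view. For each candidate \(h\)-set seen in the view, it decides whether the set induces \(H\). A pair counts as an edge if it was observed. A pair that is unobserved counts as a nonedge only when the empirical neighbour frequencies indicate that the observation probability of that edge would have been at least \(\varepsilon^{O_H(1)}\). The tester rejects if the estimated density of certified copies exceeds a threshold \(\tau\). On far inputs, Step 2 gives density at least \(2\tau\). On induced-\(H\)-free inputs, a false copy needs a nonobserved edge whose observation probability is large, which happens with probability at most \(\tau/2\) by Chernoff and union bounds over \(\varepsilon^{-O_H(1)}\) configurations. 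This is where two-sided error enters, consistent with the \(P_3\) lower bound.

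\textbf{Main obstacle.} The hardest step is Step 2's robustness claim: showing that copies whose induced status depends on pairs of heavy vertices with poorly estimable adjacency can all be destroyed by \(o(\varepsilon n)\) edits. I would first try charging each such copy to a heavy vertex, using that each heavy vertex has bounded "outerplanar interaction", meaning at most two common neighbours with any other heavy vertex. An averaging argument then bounds the total edits by \(O(hn/d)\).
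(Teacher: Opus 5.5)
There is a genuine gap, and it is the step you flag as the main obstacle. You claim that copies whose inducedness depends on a heavy--heavy pair with poorly estimable adjacency can be destroyed with $o(\varepsilon n)$ edits. This is false for disconnected $H$, so a tester that counts only fully certified candidates is not sound.

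Let $n=m^2$, let $G$ be the disjoint union of $m$ stars $K_{1,m-1}$, and let $H=2K_2$. Every edge of $G$ joins a center to a leaf, so every induced $2K_2$ consists of center--leaf edges from two different stars. Its inducedness therefore rests on the nonadjacency of two centers of degree $m-1\approx\sqrt n$. An edge between them would be seen with probability about $1/\sqrt n$ per query, far below $\varepsilon^{O_H(1)}$. So your Step~3 rule never certifies any candidate, and the tester accepts.

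Nevertheless $G$ is $\varepsilon$-far, say for $\varepsilon\le 1/10$ and large $n$. Suppose $X$ is within $\varepsilon n$ edits of $G$. At most $2\varepsilon n$ leaves lie in an edited pair, so at least $(1-3\varepsilon)m$ stars keep an untouched leaf. Any two such centers that are nonadjacent in $X$, together with their untouched leaves, induce $2K_2$ in $X$. Hence these centers must form a clique in $X$, which costs roughly $(1-3\varepsilon)^2 n/2>\varepsilon n$ added edges.

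The heuristic of Step~1 does not rescue this. Two large stars whose centers are joined by an edge form an induced-$2K_2$-free graph in which the centers are far apart in the light graph, so trusting that heuristic destroys completeness.

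Step~2 has an independent error. Greedily deleting the edges of induced copies does not give many edge-disjoint copies that are induced in $G$. Induced-$H$-freeness is not monotone, so later copies are induced only in the edited graph. The argument is also void when $H$ has no edges.

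The idea you are missing is that nonedges between heavy vertices in candidates for \emph{different} components of $H$ never need to be certified. It suffices to know that a compatible choice exists.

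The paper first passes to the decomposition $G'$ of \Cref{lem:structural}. It makes $O(\varepsilon n)$ deletions so that each component has at most one heavy vertex, then restores all internal edges, giving $G'[C]=G[C]$. For connected $H$, every candidate then has at most one heavy vertex and every pair has a low endpoint. The check is therefore exact given complete low-degree lists, which are recovered by coupon collecting; this recovery is the only source of two-sided error. Your main obstacle for connected $H$ is essentially this decomposition. It relies on the multiway-cut bounds of Babu, Khoury, and Newman; a charging argument via common neighbours does not obviously give an $O(\varepsilon n)$ bound.

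For disconnected $H$, the tester gathers $2k$ globally vertex-disjoint candidates per component. Contracting them gives an outerplanar colored graph with at most $2pq-3$ edges. By \Cref{upperbound} this graph has a rainbow independent transversal, so rejecting is safe without querying any heavy--heavy pair. Repeated search for one component can stall, as for a star plus isolated vertices with $H=P_3\cup K_1$. In that case \Cref{far} supplies a small removal witness and a separated low-degree region $B_j$, in which the remaining components are found and checked exactly.
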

	
	For each fixed \(H\), the bound is polynomial in \(1/\varepsilon\) and
	independent of \(n\). It improves the general
	$\operatorname{poly}(\log n)$-query guarantee of Babu, Khoury, and
	Newman~\cite{BKN16}, and it works in the weaker random-neighbor model, which
	provides neither exact degrees nor indexed access to adjacency lists.
	More specifically, it extends constant-query forbidden-subgraph testing from
	non-induced copies to induced copies. The general theorem of Esperet and
	Norin~\cite{EN22} applies to monotone properties, and therefore it does not cover this
	setting: deleting an edge can create a forbidden induced copy.
	
	The two-sided error is unavoidable in general for this family of properties.
	Already for the
	three-vertex path \(P_3\), one-sided testing is impossible with constant query
	complexity under random-neighbor access.
	
	\begin{proposition}[A one-sided random-neighbor barrier]
		\label{prop:one-sided-barrier}
		Induced-\(P_3\)-freeness cannot be tested with one-sided error and constant query complexity in the random-neighbor
		model on outerplanar graphs, even when the maximum degree is at most two.
	\end{proposition}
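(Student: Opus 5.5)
The idea is to find a pair of graphs on the same labelled vertex set, both outerplanar with maximum degree at most two, with three properties. The first graph \(G_{\mathrm{yes}}\) is induced-\(P_3\)-free. The second graph \(G_{\mathrm{no}}\) is \(\varepsilon\)-far from induced-\(P_3\)-freeness for a constant \(\varepsilon\). Every query transcript that has positive probability on \(G_{\mathrm{no}}\) also has positive probability on \(G_{\mathrm{yes}}\). A one-sided tester must accept \(G_{\mathrm{yes}}\) with probability \(1\), so it must accept on every transcript that \(G_{\mathrm{yes}}\) can produce. It therefore accepts \(G_{\mathrm{no}}\) with probability \(1\), a contradiction. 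This argument does not depend on the number of queries. The proposition asks only for the constant-query case, so the conclusion is stronger than needed.

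The construction is as follows. Let \(n\) be divisible by \(3\) and partition the vertex set into triples \(\{a_i,b_i,c_i\}\). Let \(G_{\mathrm{yes}}\) be the disjoint union of the triangles \(a_ib_ic_i\). Let \(G_{\mathrm{no}}\) be the disjoint union of the paths \(a_i - b_i - c_i\). Both graphs are outerplanar with maximum degree two. Every component of \(G_{\mathrm{yes}}\) is a clique, so \(G_{\mathrm{yes}}\) is induced-\(P_3\)-free. In \(G_{\mathrm{no}}\), each triple \(\{a_i,b_i,c_i\}\) induces a \(P_3\), and an edit destroys this induced copy only if it adds or removes one of the three pairs inside the triple. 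These pairs are disjoint across \(i\), so at least \(n/3\) edits are needed. Hence \(G_{\mathrm{no}}\) is \(\varepsilon\)-far for every \(\varepsilon<1/3\), under the paper's normalization by \(n\).

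Next, compare transcripts. Fix any deterministic strategy; a randomized tester is a mixture of these. A transcript consists of vertex samples, which have the same uniform distribution in both graphs, and random-neighbor answers. On \(G_{\mathrm{no}}\), every answer to a query at \(v\) is a neighbor of \(v\) in \(G_{\mathrm{no}}\). Since \(E(G_{\mathrm{no}})\subseteq E(G_{\mathrm{yes}})\) and every vertex has degree \(2\) in \(G_{\mathrm{yes}}\), that answer has probability \(1/2>0\) under \(G_{\mathrm{yes}}\). By induction on the number of queries, every transcript with positive probability on \(G_{\mathrm{no}}\) has positive probability on \(G_{\mathrm{yes}}\). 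The isolated case does not arise, because every vertex of \(G_{\mathrm{no}}\) has degree at least one.

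The main point to handle carefully is making precise what "one-sided" forces, since the tester's output is a function of the transcript and its internal coins. Suppose some transcript \(\tau\) together with some coin outcome leads to rejection and has positive probability on \(G_{\mathrm{no}}\). The same pair then has positive probability on \(G_{\mathrm{yes}}\), so the tester rejects the induced-\(P_3\)-free graph \(G_{\mathrm{yes}}\) with positive probability. This contradicts one-sided error. Therefore the tester accepts \(G_{\mathrm{no}}\) with probability \(1\), whereas a tester must reject it with probability at least \(2/3\). This completes the planned proof.
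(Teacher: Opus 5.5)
Your proposal is correct and follows essentially the same route as the paper: the same pair of graphs (disjoint \(P_3\)'s versus the triangles on the same triples), the same edge inclusion \(E(G_{\mathrm{no}})\subseteq E(G_{\mathrm{yes}})\), and the same transcript comparison (degree one or two in the far graph, degree two in the free graph). The only difference is that the paper states the comparison quantitatively as \(\Pr[\tau\mid G_m]\le 2^q\Pr[\tau\mid Y_m]\) and averages over the internal coins, which handles the mixture over deterministic strategies in one line. You use the qualitative support-inclusion form instead, and both versions give the conclusion for any finite number of queries.
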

	
	We defer the proof of \Cref{prop:one-sided-barrier} to
	\Cref{app:one-sided-barrier}. This lower bound reflects a second limitation of random-neighbor access. The algorithmic difficulty at high-degree vertices is that nonedges cannot be checked directly. The lower bound
holds for a different reason: even at degree two, finitely many samples cannot certify that the
complete neighborhood has been seen. Thus one-sided error is impossible in general even when
high degrees are absent.
	
	In the stronger adjacency-list model, exact degree and indexed-neighbor
	queries remove this obstacle.  The same structural algorithm then yields a
	one-sided tester.
	
	\begin{theorem}[Adjacency-list model]
		\label{thm:sparse-main}
		Let \(H\) be any fixed nonempty graph. Then induced-\(H\)-freeness is one-sided
		testable in outerplanar graphs under the adjacency-list model with query
		complexity \(\varepsilon^{-O_H(1)}\).
	\end{theorem}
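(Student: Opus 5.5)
The plan is to reuse the structural analysis behind \Cref{thm:main} and to replace every probabilistic estimate by an exact computation made possible by degree and indexed-neighbor queries. The tester rejects only when it has \emph{fully certified} an induced copy of \(H\), meaning it has verified every edge and every nonedge of the copy. One-sidedness is then automatic, and the whole burden lies in the soundness direction: if \(G\) is \(\varepsilon\)-far from induced-\(H\)-free, a certified copy must be found with probability at least \(2/3\) using \(\varepsilon^{-O_H(1)}\) queries.

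\textbf{Step 1 (degree thresholding).} Fix a threshold \(d=\varepsilon^{-c_H}\). Call a vertex \emph{heavy} if its degree exceeds \(d\), and \emph{light} otherwise. Outerplanar graphs are \(2\)-degenerate and have fewer than \(2n\) edges, so there are at most \(4n/d\) heavy vertices. Edges incident to heavy vertices therefore form a controlled set: only \(O(n/d)\) edges join two heavy vertices, since the subgraph induced on heavy vertices is itself outerplanar. Outerplanar graphs also exclude \(K_{2,3}\), so any two vertices have at most two common neighbours.

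For a light vertex \(u\), exact neighbourhoods are cheap. The tester can read the whole list of \(u\) with \(d\) queries, so every adjacency or nonadjacency involving \(u\) is decided exactly. The only pairs not decidable cheaply are pairs of heavy vertices. A pair of heavy vertices is declared adjacent if and only if it is discovered as an edge while reading some light vertex's list or through the bounded local exploration below. Nonadjacency is certified only through the outerplanar structure, which I describe in Step 3.

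\textbf{Step 2 (many local witnesses).} I would prove a counting lemma: if \(G\) is \(\varepsilon\)-far, then at least \(\varepsilon^{O_H(1)} n\) vertices \(v\) are the root of a bounded exploration that exposes an induced copy of \(H\). This exploration is a BFS of radius \(|V(H)|\) that reads full lists only at light vertices and at most \(d\) indexed entries at heavy vertices.

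The argument is the one the random-neighbor analysis needs anyway. Greedily collect induced copies of \(H\), and after each one delete all edges inside a neighbourhood of it. The deleted neighbourhood must be chosen carefully: edge \emph{deletion} can create new induced copies, so we instead isolate the copy's light vertices, which costs \(O(d|V(H)|)\) edits per copy, and charge heavy vertices separately using the \(4n/d\) bound. If fewer than \(\varepsilon^{O_H(1)}n\) disjoint copies existed, this process would reach an induced-\(H\)-free graph in fewer than \(\varepsilon n\) edits, which is a contradiction. Disconnected \(H\) is handled by applying this per component and combining independently sampled, far-apart component copies. Far-apart copies have no edges between them, and this is verifiable because at least one endpoint of any potential cross pair is light, or the pair is heavy–heavy and handled as in Step 3.

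\textbf{Step 3 (the main obstacle: heavy–heavy nonedges).} The step I expect to be hardest is certifying that two heavy vertices \(x,y\) of a found copy are nonadjacent, since a list of size up to \(n\) cannot be scanned. I would first try to show, using outerplanarity, that witnesses can be chosen with at most one heavy vertex per copy, or with all heavy vertices pairwise adjacent in the copy. The idea is this: if many copies use a nonadjacent heavy pair \(\{x,y\}\), then, because heavy vertices are few and \(x,y\) share at most two neighbours, the charging in Step 2 can instead target copies in which \(y\) is replaced by a light vertex. Failing that, I would fall back on a \emph{consistent} local view. Run the exploration and accept a heavy–heavy pair as a nonedge only if the edge is absent from the explored region. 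Then show, via the \(O(n/d)\) bound on heavy–heavy edges, that the tester's rejection in this case still implies a genuine induced copy is present somewhere in the explored region.

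Finally, the tester samples \(\varepsilon^{-O_H(1)}\) roots, performs the bounded exploration from each, and rejects on a certified copy. This gives query complexity \(\varepsilon^{-O_H(1)}\cdot d^{O_H(1)}=\varepsilon^{-O_H(1)}\).
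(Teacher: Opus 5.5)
Your proposal has a genuine gap, and it is the one you flag in Step~3: the fallback rule destroys one-sidedness. Take \(H=C_4\) and let \(G\) be a diamond on \(x,u,y,v\) (edges \(xu,uy,yv,vx,xy\)) with \(N\gg d\) pendant leaves on each of \(x\) and \(y\). This graph is outerplanar and induced-\(C_4\)-free, since its only \(4\)-cycle \(xuyv\) has the chord \(xy\). From \(u\), an exploration that reads \(d\) entries of each heavy list can reach \(v\) through the list of \(x\) without ever seeing the entry \(xy\). This happens deterministically for a suitable list order, and with positive probability if entries are read at random. The exploration then knows \(xu,uy,yv,vx\) and the nonedge \(uv\) exactly from light lists, so your rule certifies an induced \(C_4\) that does not exist. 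No bound on heavy--heavy edges can rescue this, because there is no genuine copy anywhere. Option (a) is the right target, but ``replace \(y\) by a light vertex'' has no mechanism behind it. The missing idea is the induced-preserving decomposition of \Cref{sec:global-partition}. It deletes \(O(\varepsilon n)\) edges so that every component of \(G'\) has at most one heavy vertex: the heavy--heavy edges, the edges crossing a bounded-degree partition-oracle partition, and minimum multiway cuts whose total size is controlled by outerplanarity via the Babu--Khoury--Newman estimates. It then \emph{restores} every edge inside a component, so that \(G'[C]=G[C]\) (\Cref{lem:structural}). Farness passes to \(G'\), and every connected copy being sought has at most one heavy vertex. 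Candidates with two heavy vertices can therefore be discarded, and in every remaining pair one endpoint is light, so the exact check is sound.

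Step~2 also does not go through as written. A maximal \emph{vertex}-disjoint family can be tiny in a far graph: a star with \(H=P_3\) has one such copy yet needs \(n-2\) edits to become induced-\(P_3\)-free. Heavy vertices cannot be ``charged separately'', since isolating one costs its degree, possibly \(\Theta(n)\). The paper uses an \emph{edge}-disjoint packing (\Cref{lem:induced-edge-packing-new}). Isolating light vertices then kills all remaining copies only because, in \(G'\), any edge shared by two copies has a light endpoint; this is again the one-heavy-vertex property. At a heavy root, reading a fixed set of \(d\) entries can be defeated by the list order, so you must sample uniform indices. You also need the pruning of \Cref{lem:induced-pruning}, so that a random neighbor of the reached root lies in a useful branch with probability bounded below. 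And you need \Cref{lem:compatible-branches}, because the root samples typically hit branches of \emph{different} copies. For disconnected \(H\), ``independently sampled far-apart copies'' is unjustified, and certifying the absence of heavy--heavy cross edges meets the same obstruction as Step~3. The paper never certifies cross pairs. With \(2k\) vertex-disjoint, exactly verified candidates per component, the contracted candidate graph is outerplanar and has a rainbow independent transversal (\Cref{upperbound}), so an induced \(H\) exists without any cross pair being checked. When some component has only a bounded family of copies (as for \(P_3\cup K_1\) on a star plus isolated vertices), the dichotomy of \Cref{far} supplies a separated low-degree region, testable from light lists, in which the remaining components are found.
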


	\subsection{Proof overview and main ideas}
	
	We first give a one-sided tester in the adjacency-list model. We then
	simulate this tester in the random-neighbor model. The proof has three main ingredients.
	First, we modify an outerplanar decomposition so that each resulting
	component keeps all of its original internal edges. Second, we give a
	procedure that finds a connected induced copy by exploring low-degree
	vertices and sampling branches around a high-degree vertex. Third, for
	disconnected \(H\), we show how to choose copies of its components that are
	disjoint and have no edges between them. Finally, we use random-neighbor
	samples to recover all bounded-degree lists needed by these procedures.
	
	\paragraph*{The decomposition.}
	We make use of a decomposition of the outerplanar input graph \(G\). Fix a
	degree threshold \(d=\varepsilon^{-O(1)}\). The goal is to remove only
	\(O(\varepsilon n)\) edges and obtain a spanning subgraph
	\(G'\subseteq G\) such that each connected component of \(G'\) contains at
	most one vertex whose degree in \(G\) is greater than \(d\). Babu, Khoury,
	and Newman~\cite{BKN16} showed that a decomposition of this form exists.
	They also gave a local implementation, and hence the part containing a queried
	vertex can be found without constructing the whole decomposition.
	
	We follow the local-decomposition framework of~\cite{BKN16}, with one change. On the bounded-degree graph induced by
	the low-degree vertices, we use the more recent partition oracle of Kumar,
	Seshadhri, and Stolman~\cite{KSS21}. This oracle gives consistent local
	access to small connected parts in \(\operatorname{poly}(d/\xi)\) queries
	when its cut parameter is \(\xi\). We use
	\(\xi=\Theta(\varepsilon/d)\), and hence the resulting bound is
	\(\operatorname{poly}(d^2/\varepsilon)\).
	
	There is one more change that is important for induced copies. After the
	cutting step, we restore every original edge whose endpoints lie in the same
	resulting component. Thus,
	\(G'[C]=G[C]\) for every component \(C\) of \(G'\). Consequently, a vertex set contained
	in one component induces exactly the same graph in \(G'\) and in \(G\).
	This lets us search inside \(G'\) without creating a false induced
	copy by deleting an internal edge.

    
We remark that the restriction to outerplanar inputs is important for this step. In particular, the same decomposition does not extend to planar graphs. Consider the planar graph \(K_{2,N}\). Its two vertices on the size-two
	side have degree \(N\) and are joined by \(N\) edge-disjoint paths of
	length two. For any fixed degree threshold \(d\), taking \(N>d\) makes
	both vertices high-degree. Putting them in different connected components
	requires at least \(N\) edge deletions. For any fixed
	\(\varepsilon<1\) and sufficiently large \(N\), this exceeds
	\(\varepsilon(N+2)\). Thus planar graphs need not admit a decomposition
	with this small edit budget and at most one original high-degree vertex
	per component. This is an obstruction to our decomposition, not an
	impossibility result for induced-freeness testing on planar graphs.
	A planar extension would need a different way to handle components with
	several high-degree vertices and the nonedges between them.
	
	\paragraph*{Connected forbidden graphs.}
	Suppose first that \(H\) is connected. We give a tester for induced-$H$-freeness with two steps. The
	first step uses random sampling and bounded exploration to produce a small
	set of \emph{candidates}. A candidate consists of \(|V(H)|\) distinct
	explored vertices together with a proposed correspondence to \(V(H)\). The second
	step performs an \emph{exact check}: it determines every edge and nonedge
	among the candidate vertices and keeps the candidate exactly when these
	vertices induce \(H\) in the original graph \(G\).
	
	The exact check is possible because of the decomposition of the outerplanar graph. The tester
	discards every candidate that contains two high-degree vertices. Thus, every
	pair of vertices in a remaining candidate has a low-degree endpoint. The
	complete adjacency list of this endpoint tells us whether the pair is an
	edge or a nonedge. Consequently, the tester never reports a false induced copy. In
	particular, if \(G\) is induced-\(H\)-free, every candidate fails the exact
	check and the tester always accepts.
	
	It remains to explain why the first step finds a candidate when \(G\) is far
	from induced-\(H\)-free. On the event that the decomposition removes few
	edges, \(G'\) is still far from the property. A maximal family of edge-disjoint induced copies
	then gives linearly many copies of \(H\) in \(G'\). We remove some copies
	while keeping linearly many of them. In the remaining family, every vertex
	used by a copy has a positive fraction of its incident edges in the
	family; this fraction depends only on \(H\) and \(\varepsilon\). This property ensures that sampling a
	neighbor of a participating high-degree vertex reveals an edge of one of
	the copies with a probability that does not depend on \(n\).
	
	Every connected component of \(G'\) contains at most one high-degree
	vertex, and hence each packed copy either has only low-degree vertices or has one
	high-degree root. At least one of these two cases contains many copies. In
	the first case, a uniformly random vertex hits the low-degree vertices of
	the packed copies with constant probability. A bounded breadth-first
	exploration from this vertex then reveals the whole copy.
	
	In the second case, we again start from a uniformly sampled low-degree
	vertex in one of the packed copies. A bounded exploration reaches its unique
	high-degree root \(a\). Removing the corresponding root from \(H\) splits
	\(H\) into a bounded number of connected branches. We sample neighbors of
	\(a\) and explore the low-degree branches reached by these samples. The
	samples need not come from the same packed copy. This is the main issue in
	the high-degree case. We prove that each sampled branch conflicts with only
	a bounded number of branches of another type. Thus, independently
	sampled branches are pairwise compatible with constant probability, and
	together with \(a\) they form a candidate copy of \(H\). Repeating the
	search a constant number of times and applying the exact check gives the
	connected tester.
	
	\paragraph*{Disconnected forbidden graphs.}
	If \(H\) is disconnected, finding its components one at a time is not
	enough. The chosen copies must be disjoint, and there must be no edges
	between them. The main question is whether we can keep finding new
	copies of each component after excluding the copies already chosen.
	To express the possible obstruction, we use a \emph{small removal witness}
	for a connected graph \(F\): a bounded family of vertex-disjoint induced
	\(F\)-copies such that, after their vertices are removed, a small number of
	edge additions or deletions makes the remaining graph induced-\(F\)-free.
	The family size depends only on \(H\), and the number of further edits is
	at most a suitably chosen multiple of \(\varepsilon n\).
	Our structural lemma gives two cases.
	The second case is essential even for forests: a large star together with
	many isolated vertices is far from induced-\(P_3\cup K_1\)-free, although
	all its induced \(P_3\)'s share the center. We give this example and its
	exact edit distance at the beginning of \Cref{sec:disconnected-structure}.
	
	In the first case, no component has a small removal witness.
	Thus, after the vertices of any bounded family of its copies are removed,
	the remaining graph is still far from induced-freeness. We can
	then run the connected finder repeatedly and obtain many
	vertex-disjoint candidates for each component \(H_i\) of \(H\);
	repeated isomorphic components are treated separately.
	We form
	an auxiliary colored graph whose vertices are these candidates. The color records the index \(i\), and two candidates are adjacent
	when an edge of \(G\) runs between them. After first making the candidate
	families vertex-disjoint, this auxiliary graph is outerplanar: it is
	obtained by contracting the connected candidates and deleting other
	vertices and edges. We then choose one candidate of each color so that no
	two chosen candidates are adjacent. Such a choice is called a
	\emph{rainbow independent transversal}; a lemma below guarantees that it
	exists when each color has enough candidates. Their union is
	then an induced copy of \(H\).
	
	In the second case, exactly one component has a small removal witness.
	This component has at least two vertices, and its isomorphism type occurs
	only once in \(H\). The structural lemma shows that one
	witness copy has a unique high-degree root and has a separated
	low-degree region \(B\) with no edge to the witness. Inside \(G'[B]\), every
	other nontrivial component type is still far from induced-freeness. If \(H\)
	has isolated vertices, \(B\) also contains linearly many possible singleton
	candidates.
	
	Membership in \(B\) can be tested from bounded-degree adjacency lists. We
	use a low-degree version of the connected finder to find the nontrivial
	components inside \(B\), and uniform vertex sampling finds the isolated
	components. The same selection lemma gives one candidate for each remaining
	component, with no intersections or edges between the chosen candidates. Since \(B\) has no edge to the witness copy, the selected
	candidates together with that witness form an induced copy of \(H\).
	
	These two cases give a one-sided adjacency-list tester. A
	\emph{threshold-list query} is only shorthand: it returns the full list of a
	vertex of degree at most \(d\), and reports high degree otherwise. In the
	adjacency-list model, one degree query and at most \(d\) indexed-neighbor
	queries answer it exactly. A uniform neighbor of a high-degree root is also
	obtained exactly by choosing a uniform index in its adjacency list. The
	analysis gives \Cref{thm:sparse-main} before we turn to the weaker model.
	
	\paragraph*{Random-neighbor simulation.}
	The random-neighbor oracle does not provide exact degrees or lists. We
	simulate a threshold-list query by taking repeated random-neighbor samples.
	Seeing \(d+1\) distinct neighbors proves that the degree is high. Otherwise,
	enough samples recover the full list of a low-degree vertex with high
	probability. This is the standard coupon-collector bound for seeing every
	neighbor. We store the first recovered answer for each vertex and reuse it
	on later calls. On the event that all these answers are correct, the local
	partition queries therefore refer to the same graph.
	
	The tester is \emph{adaptive}: its next query can depend on earlier
	answers. We therefore need a guarantee for the whole execution, not just
	for a single list. We prove one event on which every bounded-degree list used in the
	whole execution is correct at the same time. Conditioned on this event, the
	simulated execution has the same sequence of queries and answers, and hence
	the same output, as the adjacency-list execution. Direct samples at
	high-degree roots already use
	the random-neighbor oracle and need no simulation. The recovery event can
	fail with small probability, which changes the tester from one-sided to
	two-sided error and gives \Cref{thm:main}.
    
    \paragraph*{The main structural contribution.}
Finally, we remark that our main new structural tool is the dichotomy in \Cref{far} for
	disconnected forbidden graphs. A graph that is far from induced-$H$-free needs not
	contain many vertex-disjoint copies of every component of \(H\). We show
	that there can be only one component for which repeated search loses its
	farness guarantee. When this happens, a copy of that component has a
	separated region containing enough candidates for all the others.
	This result lets us pass from finding connected copies to testing an
	arbitrary fixed induced pattern. The aforementioned decomposition framework provides the local access needed
	to use the dichotomy algorithmically.

	\paragraph*{Organization.}
	\Cref{pre} fixes the notation, defines the two access models, and introduces
	the exact threshold-list shorthand used by the adjacency-list tester.
	\Cref{sec:global-partition} gives the induced-preserving outerplanar
	decomposition and shows how to access it locally. \Cref{sec:induced-finder}
	describes and analyzes the tester for connected forbidden graphs.
	\Cref{sec:disconnected-structure} then proves the structural dichotomy needed
	for disconnected forbidden graphs. \Cref{sec:disconnected-tester} turns this
	dichotomy into the disconnected-case tester and completes the proof of
	\Cref{thm:sparse-main}. \Cref{sec:rn-simulation} recovers bounded-degree
	neighborhoods from random-neighbor samples, uses them to simulate the
	adjacency-list tester, and proves \Cref{thm:main}.
	\Cref{app:one-sided-barrier} proves the one-sided lower bound, and
	\Cref{app:decomposition} gives the deferred proof of the decomposition
	lemma. \Cref{app:ai-disclosure} records the use of ChatGPT.

	\section{Preliminaries and access models}\label{pre}
	This section sets the notation and query conventions used in the proofs. Its
	main input is the bounded-degree partition-oracle theorem stated at the end of
	the section. The theorem gives a local view of a partition into small connected
	parts. We first define the two access models, the threshold-list
	shorthand, and the distance notion, and then state the oracle guarantee.
	
	All graphs are finite, undirected, and simple.  For a graph $G$, we write
	$V(G)$ and $E(G)$ for its vertex and edge sets, put $n:=|V(G)|$, and write
	$\Gamma_G(v)$ and $\deg_G(v)$ for the neighborhood and degree of $v$.  We
	write \([n]:=\{1,\ldots,n\}\), identify $V(G)$ with $[n]$, and use \(G-S\)
	for the subgraph induced by \(V(G)\setminus S\). For a set \(S\),
	let \(\Gamma_G(S):=\bigcup_{v\in S}\Gamma_G(v)\). For disjoint vertex sets
	\(A,B\subseteq V(G)\), let \(E_G(A,B)\) be the set of edges with one endpoint
	in \(A\) and the other in \(B\). Throughout the paper the input graph is promised
	to be outerplanar, and $n$ is known to the tester.  A graph is
	induced-$H$-free if no vertex set induces a graph isomorphic to $H$.
	
	For every nonempty fixed graph $H$, we write $H=m_0K_1\cup m_1F_1\cup\cdots\cup m_sF_s$, where $m_0\ge0$, $m_r\ge1$ for $r\in[s]$, and
	$F_1,\ldots,F_s$ are pairwise non-isomorphic connected graphs with at
	least two vertices. We call such connected graphs \emph{nontrivial}.
	Thus, $m_0$ is the multiplicity of $K_1$, while $m_r$ is the
	multiplicity of the component type $F_r$. We also list the individual connected components as
	$H_1,\ldots,H_k$, where $k=m_0+\sum_{r=1}^s m_r$.
	Unlike the types \(F_r\), the graphs \(H_i\) need not be pairwise
	non-isomorphic. For example, if \(H=2P_3\cup K_1\), then two distinct
	indices in this list represent copies of \(P_3\). This distinction matters
	when we choose one candidate for each component of \(H\).
	
\paragraph{Query access models}	In the \emph{random-neighbor model}, a graph query at a named vertex $v$
	returns an independent uniformly random element of $\Gamma_G(v)$, or $\perp$
	if $v$ is isolated.  The tester may choose a uniformly random identifier in
	$[n]$ without querying the graph.  In the \emph{adjacency-list model}, the
	tester may sample a uniformly random vertex, query $\deg_G(v)$, and query the
	$i$th neighbor of $v$.  The latter model can simulate a random-neighbor query
	exactly by first learning the degree and then choosing a uniform random
	index.
	
	For an integer threshold \(t\ge0\), we use the following shorthand for
	exact adjacency-list access:
	\[
	\mathsf{TL}_t(v)
	:=
	\begin{cases}
		(\mathsf{Low},\Gamma_G(v)^\uparrow), & \deg_G(v)\le t,\\
		\mathsf{High}, & \deg_G(v)>t,
	\end{cases}
	\]
	where \(\Gamma_G(v)^\uparrow\) is the complete neighborhood of \(v\), listed
	in increasing vertex-ID order. We call this a \emph{threshold-list query}.
	It is answered exactly by first querying \(\deg_G(v)\): if the degree is at
	most \(t\), the tester queries all indexed neighbors, and otherwise it
	returns \(\mathsf{High}\). Therefore, one threshold-list query costs at most
	\(t+1\) adjacency-list queries.
	
	For the structural algorithms, \emph{threshold-list access} will mean
	that we can make exact \(\mathsf{TL}_d\) queries, sample uniform vertices,
	and sample uniform neighbors of vertices classified as high-degree.
	These are all exact operations in the adjacency-list model. In
	\Cref{sec:rn-simulation}, we show how to perform a bounded number of these
	operations using only random-neighbor queries, with a small probability
	of error.
	
    \paragraph{Property testing}	For graphs \(G\) and \(X\) on the same vertex set, define their \emph{edit distance} as
	\[
		|E(G)\triangle E(X)|.
	\]
	This is the number of edge additions and deletions needed to turn \(G\)
	into \(X\). We call \(G\) \emph{\(\varepsilon\)-far}
	from a property \(\mathcal P\) if more than \(\varepsilon n\) such changes
	are needed to obtain any graph in \(\mathcal P\) on the same vertex set.  A \emph{two-sided $\varepsilon$-tester} accepts every input in
	$\mathcal P$ and rejects every $\varepsilon$-far input, each with probability
	at least $2/3$.  A \emph{one-sided tester} accepts every input in $\mathcal P$ with
	probability one and rejects every $\varepsilon$-far input with probability at
	least $2/3$.
	
	The promise that the input is outerplanar applies only to \(G\). Unless
	stated otherwise, the comparison graph \(X\in\mathcal P\) is not required to
	be outerplanar. In particular, an input is \(\varepsilon\)-far from
	induced-\(H\)-free when it is far from every induced-\(H\)-free graph on the
	same vertex set. This convention is used when the structural proofs construct
	a nearby graph by both adding and deleting edges.

	When we consider a graph on a subset of \(V(G)\), we still measure
	distance using the original \(n=|V(G)|\). Thus, saying that \(X\) is
	\(\beta\)-far from induced-\(F\)-free always means that more than
	\(\beta n\) edge changes are needed, even when \(X\) has fewer than \(n\)
	vertices. Throughout the paper, a constant may depend on the fixed
	graph \(H\) and on \(\varepsilon\), but not on \(n\). In particular,
	\(\varepsilon^{-O_H(1)}\) denotes a bound polynomial in \(1/\varepsilon\),
	with exponent depending only on \(H\). Query complexity counts calls to
	the graph oracle; computation on already obtained data is not counted.
	
	A \emph{partition oracle} gives access to a partition without
	constructing it in full: a query at \(v\) returns the part containing
	\(v\). We use the following bounded-degree partition oracle. Fixing its
	random seed fixes one partition, so answers to different queries always
	agree with that partition. The probability only enters the bound on the
	number of edges between parts.
	
	\begin{theorem}[Kumar--Seshadhri--Stolman~\cite{KSS21}]\label{partition}
		Let \(\mathcal C\) be a fixed proper minor-closed graph class
		(that is, it excludes some fixed graph as a minor), and let \(d\ge1\).
		There is a randomized partition oracle for every graph \(X\in\mathcal C\)
		of maximum degree at most \(d\) and every parameter \(\xi\in(0,1)\).
		For each fixed seed \(\mathbf R\), its answers refer to one partition
		\(\mathcal P_{\mathbf R}\) of \(V(X)\) into connected parts.
		Each part has size \(\operatorname{poly}(d/\xi)\), and finding the part
		containing a specified vertex uses \(\operatorname{poly}(d/\xi)\)
		queries to \(X\). With probability at least \(2/3\) over \(\mathbf R\),
		at most \(\xi d|V(X)|\) edges have endpoints in different parts.
		The polynomial bounds may depend on \(\mathcal C\).
	\end{theorem}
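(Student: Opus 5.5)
Since \Cref{partition} is quoted from~\cite{KSS21}, I would not reprove the local-partitioning machinery. Instead I would derive the stated form from the main theorem of Kumar, Seshadhri, and Stolman and adjust the normalization and the connectivity requirement.

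\textbf{Step 1: the KSS guarantee.} Their theorem says roughly the following. For every minor-closed family \(\mathcal C\), every maximum-degree bound \(d\), and every accuracy parameter \(\eta\), there is a randomized local procedure. Given a seed \(\mathbf R\) and a vertex \(v\), it outputs a set \(P_{\mathbf R}(v)\ni v\) of size \(\operatorname{poly}(d/\eta)\), using \(\operatorname{poly}(d/\eta)\) queries. Its runs are consistent: for a fixed seed, the sets \(\{P_{\mathbf R}(v)\}\) form one partition \(\mathcal P_{\mathbf R}\). With probability at least \(2/3\), at most \(\eta|V(X)|\) edges cross between parts. Consistency holds because the oracle is a deterministic function of \(\mathbf R\) and of the graph. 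If their statement only bounds the \emph{expected} number of cut edges, say by \(\eta|V(X)|/3\), Markov's inequality gives the \(2/3\) bound after rescaling \(\eta\) by a constant. I would then set \(\eta:=\xi d\). Since \(d\ge1\) and \(\xi<1\), we have \(d/\eta=1/\xi\le d/\xi\), so all size and query bounds are \(\operatorname{poly}(d/\xi)\). The cut bound becomes \(\xi d|V(X)|\), as stated. If KSS instead normalize the cut by \(\eta d|V(X)|\), one takes \(\eta:=\xi\) directly.

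\textbf{Step 2: making the parts connected.} If the KSS parts are not guaranteed to be connected, refine \(\mathcal P_{\mathbf R}\) by splitting each part \(P\) into the connected components of \(X[P]\). This refinement adds no crossing edges, because two different components of \(X[P]\) have no edge between them. Part sizes can only shrink. The refinement is still determined by \(\mathbf R\), so consistency is preserved. To answer a query at \(v\) locally, first compute \(P_{\mathbf R}(v)\). Then search from \(v\) inside \(X[P_{\mathbf R}(v)]\), reading the at most \(d\) neighbors of each vertex of \(P_{\mathbf R}(v)\). This costs an additional \(d\cdot\operatorname{poly}(d/\xi)\) queries. We only need to test membership of neighbors in \(P_{\mathbf R}(v)\), which is already known. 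So the total is still \(\operatorname{poly}(d/\xi)\).

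\textbf{Expected obstacle.} The only real risk is matching conventions precisely. Two things need checking: whether KSS bound cut edges by \(\eta n\) or \(\eta dn\), and whether their success probability is per query or over the whole partition. The derivation above needs the global statement over \(\mathbf R\). I expect KSS state it in that form, since their partition oracle is explicitly seed-consistent. If not, I would first apply Markov's inequality to their expectation bound, and then make the constant rescaling of \(\eta\).
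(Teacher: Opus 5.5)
Your proposal is correct and matches the paper, which gives no proof of \Cref{partition} and simply imports it as a black box from \cite{KSS21}. Your convention checks are routine and valid: the Markov step, the rescaling of the accuracy parameter, and the refinement of each part $P$ into the connected components of $X[P]$ (which adds no cut edges and preserves seed-consistency). KSS already phrase the cut bound as $\varepsilon d n$ edges with probability at least $2/3$ over the seed, so the direct choice $\eta:=\xi$ is the branch that applies.
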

	
	For later notation, let $\mathbf{findPartition}(v,\mathbf R)$ denote the local procedure that takes random seed $\mathbf R$, and returns the unique part of $\mathcal P_{\mathbf R}$ containing $v$, and let $q_{\mathrm{part}}(\xi,d)=\operatorname{poly}(d/\xi)$ bound its number of oracle calls to the bounded-degree input $X$ and its output size.

	\section{Decomposing the graph into induced components}\label{sec:global-partition}
	This section gives the two decomposition tools used by all later algorithms.
	The main structural result is \Cref{lem:structural}: after deleting only a
	small number of edges, every component has at most one high-degree vertex and
	induces the same graph in the original graph and in the decomposed graph. The
	main algorithmic result is \Cref{simulate}, which gives exact local access to
	this decomposed graph with a number of queries that does not depend on \(n\).
	
	The idea is to partition the bounded-degree part into small connected pieces,
	separate the high-degree vertices attached to each piece, and then restore all
	original edges inside every resulting component. The restoration step is what
	preserves induced subgraphs. For local access, each vertex is assigned a short
	component label, and an edge is retained exactly when its endpoints have
	the same label. We first define the global construction and state its
	guarantees. We then give the local procedure and prove its guarantee. The
	edge-count proof for \Cref{lem:structural} is deferred to
	\Cref{app:decomposition}.
	
	For a graph $X$ and a vertex set $T\subseteq V(X)$, a \emph{$T$-multiway cut} is an edge set whose deletion leaves no two vertices of $T$ in the same component.  Whenever a minimum multiway cut is not unique, we choose the lexicographically first minimum cut under the global ordering of unordered pairs of vertex identifiers.  This convention is used by both the global construction and its local simulation.

	Babu, Khoury, and Newman show that every outerplanar graph is close to a disjoint union of bounded pieces attached to at most one high-degree root~\cite{BKN16}.  We use a modified construction that restores every original edge internal to a component after the cutting phase.  After restoration, the bounded pieces need not have the same form as
	in their decomposition. What we keep are the two properties needed here:
	every component has at most one vertex of degree greater than \(d\) in
	\(G\), and \(G'[C]=G[C]\).
	
	\begin{algorithm}[h]
		\SetAlgoLined
		\caption{\textbf{Decompose}}
		\label{GP}
		\KwIn{an outerplanar graph $G$, a threshold $d$, a proximity parameter $\varepsilon$, and a fixed seed $\mathbf R$}
		\KwOut{a spanning subgraph $G'\subseteq G$ whose components contain at most one vertex of $V^h:=\{v:\deg_G(v)>d\}$}
		Let $ V^h=\{v\in V\mid \deg_G(v)>d\} $, and $ V^{\ell}=V\backslash V^h $\;
		
		Let $ E_1=E(G[V^h]) $ be the edge set of the induced subgraph $ G[V^h] $, and let $ G_1=G\backslash E_1 $ be the graph obtained by removing $ E_1 $ from $ G $\;
		
		Let $\widetilde G:=(V(G),E(G[V^\ell]))$; hence every vertex of $V^h$ is isolated in $\widetilde G$.  Apply \Cref{partition} to $\widetilde G$ with parameter $\varepsilon/(4d)$ and fixed seed $\mathbf R$.  Discard the singleton high-degree parts and let $\mathcal P_{\mathbf R}^{\ell}$ be the resulting partition of $V^\ell$.  Let $s=q_{\mathrm{part}}(\varepsilon/(4d),d)$ and let $E_2$ be the edges of $G[V^\ell]$ whose endpoints lie in different parts\;
		
		Remove $E_2$ from $G_1$ and call the resulting graph $G_3$\;
		
		For each part $P\in\mathcal P_{\mathbf R}^{\ell}$, put $T_P:=\Gamma_G(P)\cap V^h$ and remove the fixed minimum $T_P$-multiway cut from the graph with vertex set $P\cup T_P$ and edge set $E(G[P])\cup E_G(P,T_P)$.  Denote the resulting graph by $G_4$\;
		
		For each component $C'$ of $G_4$, restore every edge of $G[C']$, yielding the output graph $G'_{\mathbf R}$.
	\end{algorithm}
	
	\begin{lemma}[cf. Theorem 3.5 in \cite{BKN16}]
		\label{lem:structural}
		Let $\varepsilon\in(0,1)$, let $d\ge2$, and let $G$ be an outerplanar graph.  Let $s:=q_{\mathrm{part}}(\varepsilon/(4d),d)$.
		
		Assume that $d\ge \frac{240}{\varepsilon}\log(2s+1)$.  For every fixed seed $\mathbf R$, \Cref{GP} \textbf{Decompose} deterministically produces a spanning subgraph $G'_{\mathbf R}\subseteq G$ such that
		\begin{enumerate}
			\item $G'_{\mathbf R}[C]=G[C]$ for every component $C$ of $G'_{\mathbf R}$; and
			\item every component $C$ satisfies $|C\cap V^h|\le1$, where $V^h=\{v:\deg_G(v)>d\}$.
		\end{enumerate}
		Also, with probability at least $2/3$ over $\mathbf R$, $|E(G)\setminus E(G'_{\mathbf R})|\le \varepsilon n$.
		
		By \Cref{partition}, one may take
		\(s=\operatorname{poly}(d^2/\varepsilon)\), because the oracle is
		invoked with cut parameter \(\varepsilon/(4d)\). Therefore, for a sufficiently large absolute constant
		\(N>0\), the choice $d\ge \frac{N}{\varepsilon^2}$ satisfies the above condition.
	\end{lemma}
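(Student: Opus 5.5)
Property 1 is immediate from the restoration step: the components of $G'_{\mathbf R}$ are exactly the components $C'$ of $G_4$, since restoring edges inside a component cannot merge components. Because every edge of $G[C']$ is restored and $G'_{\mathbf R}\subseteq G$, we get $G'_{\mathbf R}[C']=G[C']$.

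For property 2, it suffices to check the claim in $G_4$, since restoration does not change the components. I would show that a path in $G_4$ cannot join two distinct vertices of $V^h$. After removing $E_1$ and $E_2$, every edge of $G_3$ lies inside a single part $P$ or goes between a part $P$ and $T_P$. Hence any path between two high-degree vertices $u\neq w$ can be cut at its high-degree interior vertices. This leaves a segment that runs from some $x\in V^h$ to some $y\in V^h$ with all interior vertices in $V^\ell$.

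Those interior vertices are connected through edges not in $E_2$, so they all lie in one part $P$, and then $x,y\in T_P$. Every edge of the segment lies in $E(G[P])\cup E_G(P,T_P)$. The fixed multiway cut for $P$ removes at least one edge of every such $x$–$y$ path inside $P\cup T_P$, so the segment cannot survive. There is one degenerate case: a segment with no interior vertex is a direct edge $xy$, which lies in $E_1$ and has already been removed. Both properties are deterministic given $\mathbf R$.

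For the edge bound, I would count $|E_1|+|E_2|+\sum_P|\text{cut}_P|$.

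\emph{$E_1$.} Outerplanar graphs have fewer than $2n$ edges, so $|V^h|<4n/d$. Hence $|E_1|\le 2|V^h|<8n/d$, which is at most $\varepsilon n/30$ given $d\ge 240/\varepsilon$.

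\emph{$E_2$.} \Cref{partition} applies to $\widetilde G$, which is outerplanar (minor-closed) with maximum degree $\le d$. With probability $\ge 2/3$, $|E_2|\le (\varepsilon/(4d))\, d\, n=\varepsilon n/4$.

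\emph{Multiway cuts.} This is the main obstacle. I would bound $|\text{cut}_P|$ by the cheap isolating cut. Keep the heaviest terminal $t^*$, i.e.\ the one with most edges into $P$, and delete all $P$–$T_P$ edges of every other terminal. The key is to control the edges from $V^h$ into parts that contain more than one terminal. The heuristic, following BKN16, is as follows. Each high-degree vertex $h$ has more than $d$ neighbors. The edges of $h$ into a part $P$ number at most $|P|\le s$. So a vertex whose neighbors are spread over many parts loses few edges relative to its degree, provided the number of its attachment parts that also contain other terminals is small.

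Outerplanarity should supply this bound. Contract each part $P$ to a single vertex; the result is still outerplanar (parts are connected), so the bipartite graph between $V^h$ and parts has at most about $2(|V^h|+\#\text{parts})$ edges. Parts adjacent to at least two terminals therefore have total "excess terminal incidences" $O(|V^h|+\#\text{shared parts})$, and the number of shared parts is itself $O(|V^h|)$. Each excess incidence costs at most $s$ edges, giving a total of $O(s|V^h|)=O(sn/d)$.

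That is too weak when $s\gg\log s$, so I expect the real argument to be finer. I would replace the isolating-cut bound by the structure of multiway cuts in outerplanar pieces. Between two terminals attached to the same connected outerplanar piece, the minimum cut is small unless the terminals have many edge-disjoint paths through it. Such paths force a $K_{2,3}$-type minor once there are three or more, so the minimum cut between two terminals through $P$ is at most $2$. A multiway cut is then $O(|T_P|)$, which sums to $O(|V^h|)=O(n/d)$.

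The $\log(2s+1)$ term in the hypothesis suggests a charging argument that groups terminals of $P$ by degree scale. I would try the $K_{2,3}$ argument first and fall back to logarithmic charging if needed. Adding the three contributions gives at most $\varepsilon n$ under the stated condition on $d$. The final remark follows by substituting $s=\operatorname{poly}(d^2/\varepsilon)$: we get $\log(2s+1)=O(\log(d/\varepsilon))$, and this is dominated once $d\ge N/\varepsilon^2$.
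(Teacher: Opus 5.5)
\textbf{Gap: the per-part multiway-cut bound.} Your proofs of the two structural properties and your bounds on $E_1$ and $E_2$ are correct; your path-splitting argument for property~2 is more explicit than the paper's. The gap is in the multiway-cut bound, which is the heart of the edge count.

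Your claim that two terminals of a part can always be separated by at most $2$ edges is false. Excluding a $K_{2,3}$ minor bounds the number of \emph{internally vertex-disjoint} $x$--$y$ paths. It does not bound \emph{edge-disjoint} paths, and those may reuse low-degree vertices of $P$.

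Here is a counterexample. Build a triangulated polygon recursively. Start from the chord $xy$ and add an ear vertex $c$ on it. Then recursively add ears on $xc$ and on $cy$, down to depth $D$. Delete $xy$ (it is a high--high edge) and attach many pendant leaves to $x$ and $y$. Then:
\begin{itemize}
\item the graph is outerplanar;
\item the other $2^D-1$ vertices form a connected set $P$ of maximum degree at most $2D$;
\item $P\cup\{x,y\}$ contains $D$ edge-disjoint $x$--$y$ paths. One is $xcy$. The other $D-1$ are obtained by concatenating, recursively, edge-disjoint paths inside the two sub-ears.
\end{itemize}
So the minimum $\{x,y\}$-cut for this part is $\log_2(|P|+1)$. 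No $O(|T_P|)$ bound per part exists, and the $\log(2s+1)$ in the hypothesis is genuinely needed. Your fallback to logarithmic charging is therefore the missing idea, not a detail.

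\textbf{What the paper uses instead.} It imports two results of BKN16.
\begin{itemize}
\item Claim~3.4: $\sum_{P:|\Gamma^h(P)|\ge2}|\Gamma^h(P)|\le 15|V^h|$. Your contraction to an outerplanar bipartite minor is a reasonable route to a bound of this type.
\item Corollary~3.3: every such part admits a $\Gamma^h(P)$-multiway cut with at most $2(|\Gamma^h(P)|-1)\log(2|P|+1)$ edges.
\end{itemize}
The algorithm removes a minimum cut, so the total removed in step~5 is at most $30|V^h|\log(2s+1)<120n\log(2s+1)/d\le\varepsilon n/2$. This is exactly where the hypothesis $d\ge 240\log(2s+1)/\varepsilon$ is used.

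Isolating cuts give a per-part bound linear in $|P|$, which you already noted is too weak, and the constant bound you assert is false. Without a per-part bound logarithmic in $|P|$, your proposal does not establish $|E(G)\setminus E(G'_{\mathbf R})|\le\varepsilon n$.
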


	\begin{remark}
		The first two conclusions of \Cref{lem:structural} hold for every seed, not merely on the cut-bound event.  In particular, every connected induced copy found in $G'_{\mathbf R}$ is an induced copy in $G$.
		
		A connected candidate contains at most one vertex whose degree in the original graph exceeds $d$, and hence all adjacencies inside the candidate can be checked by scanning a low-degree endpoint.  For disconnected $H$, deleting intercomponent edges may create new induced copies, which motivates the structural analysis below.
	\end{remark}
	
	\subsection{Local access to the decomposition}
	\label{sec:local-simulation}
	
	The preceding subsection defines the decomposed graph globally. This
	subsection shows how the tester can query that graph without constructing it.
	Its main result is \Cref{simulate}: all calls made with one fixed seed refer to
	the same graph \(G'_{\mathbf R}\), and each call uses
	\(O(dN_{\mathrm{part}})\) threshold-list queries.
	
	The idea is to compute a component label from the local partition part and
	the fixed minimum multiway cut. Two adjacent vertices belong to the same component
	of \(G'_{\mathbf R}\) exactly when their labels agree. This gives the
	procedure \Cref{Local}. After proving its guarantee, the connected-case
	algorithm in the next section will use it whenever it explores the implicit
	graph \(G'_{\mathbf R}\).
	
	Fix a random seed $\mathbf R$.  Consistency of the partition oracle holds for every seed, and hence all local calls refer to one partition $\mathcal P_{\mathbf R}$ and one graph $G'_{\mathbf R}$.  The probabilistic cut-bound event is irrelevant to this consistency statement.
	
	The partition oracle is run on the bounded-degree spanning graph $\widetilde G=(V(G),E(G[V^\ell]))$, $
	V^\ell=\{v:\deg_G(v)\le d\}$. Thus, a high-degree vertex is an isolated singleton of $\widetilde G$, and a
	uniformly random vertex is obtained by choosing a uniform identifier in $[n]$.
	A neighbor query to $\widetilde G$ is answered as follows.  If the
	queried vertex is high-degree, return $\perp$.  Otherwise obtain its complete
	list from $\mathsf{TL}_d$, classify its at most $d$ neighbors using threshold-list queries, retain the
	low-degree ones, and sort them by identifier. This uses \(O(d)\)
	threshold-list queries and gives one exact, fixed adjacency-list
	representation of \(\widetilde G\).
	
	For a low-degree vertex $v$, let $P(v)$ be the part returned by
	$\mathbf{findPartition}(v,\mathbf R)$. Using the exact complete lists of
	the vertices in $P(v)$, we reconstruct the bounded graph $J_{P(v)}=
	\bigl(P(v)\cup T_{P(v)},\,
	E(G[P(v)])\cup E_G(P(v),T_{P(v)})\bigr)$, where $
	T_P:=\Gamma_G(P)\cap V^h$. Notice that high--high edges were removed before this step, and hence they are not part of $J_P$.  Delete from $J_P$ the same fixed minimum $T_P$-multiway cut as in \Cref{GP}.  The component of $v$ in the resulting graph contains at most one high-degree root.  Define its \emph{component label} to be that root if one is present, and otherwise the ordered pair consisting of $P(v)$ and the vertex set of the component of \(v\) after the cut.  A high-degree vertex $a$ has label $a$.
	
	Two vertices have the same label exactly when they belong to the same component after the cutting phase.  Indeed, residual pieces from different low-degree parts can meet only at a common high-degree root.  The restoration step adds only edges internal to an existing component, and hence it does not change the component partition.  Finally, because $G'_{\mathbf R}[C]=G[C]$ for every component $C$, an edge $uv\in E(G)$ is retained in $G'_{\mathbf R}$ if and only if its endpoints have the same label.
	
	\begin{algorithm}[H]
		\SetAlgoLined
		\caption{\textbf{IsRetained}}
		\label{Local}
		\KwIn{an edge $v_1v_2\in E(G)$, threshold $d$, proximity parameter $\zeta$, and fixed seed $\mathbf R$}
		\KwOut{\textnormal{\textsc{Yes}} if $v_1v_2\in E(G'_{\mathbf R})$, and \textnormal{\textsc{No}} otherwise}
		Call $\mathsf{TL}_d(v_1)$ and $\mathsf{TL}_d(v_2)$\;
		\If{both answers are $\mathsf{High}$}{\Return{\textnormal{\textsc{No}}}}
		Compute the component label of each low-degree endpoint as above, using the partition-oracle parameter $\zeta/(4d)$; use a high-degree endpoint itself as its label\;
		\Return{\textnormal{\textsc{Yes}} exactly when the two labels agree.}
	\end{algorithm}
	
	\begin{lemma}[Exact local simulation]\label{simulate}
		For every fixed seed \(\mathbf R\), every answer of \Cref{Local} is correct
		with respect to the same graph $G'_{\mathbf R}
		=
		\mathbf{Decompose}_{\mathbf R}(G,d,\zeta)$.
		Let $N_{\mathrm{part}}
		:=
		q_{\mathrm{part}}\!\left(\frac{\zeta}{4d},d\right)$
		where \(q_{\mathrm{part}}\) is the partition-oracle bound from
		\Cref{partition}. Then one invocation of \Cref{Local} uses $N_{\mathrm{sim}}
		:=
		O\!\left(dN_{\mathrm{part}}\right)$
		threshold-list queries.
	\end{lemma}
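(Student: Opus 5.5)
The proof has two parts: correctness (all answers refer to the single graph \(G'_{\mathbf R}\)) and a count of threshold-list queries.

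\textbf{Correctness.} Fix the seed \(\mathbf R\). First I will check that the simulated neighbor oracle for \(\widetilde G\) is one fixed, exact adjacency-list representation. For a low-degree vertex, \(\mathsf{TL}_d\) returns its full list. Each neighbor is then classified by a further \(\mathsf{TL}_d\) call, and the list is sorted by identifier. This output depends only on \(G\), not on the order of calls. By \Cref{partition}, every call \(\mathbf{findPartition}(v,\mathbf R)\) therefore returns the part of the one partition \(\mathcal P_{\mathbf R}\) that \Cref{GP} uses.

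Next I will show that \(J_{P(v)}\) is reconstructed exactly. The complete lists of the vertices of \(P(v)\) give every edge of \(G[P]\) and every edge from \(P\) to \(T_P=\Gamma_G(P)\cap V^h\). Each vertex of \(T_P\) is recognized as high-degree by its own \(\mathsf{TL}_d\) answer. Since the minimum multiway cut is chosen by a fixed lexicographic rule, the local procedure deletes the same cut as \Cref{GP}.

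The main step is the label claim: two vertices get the same label if and only if they lie in the same component of \(G_4\). The components of \(G_4\) are built from three kinds of pieces. The pieces are residual components of the graphs \(J_P\) after their cuts, and the high-degree singletons. \(E_1\) and \(E_2\) are removed, so pieces from different parts \(P\) can be joined only through a shared vertex of \(V^h\). Each piece contains at most one root. So a component of \(G_4\) is either a single rootless piece, identified by the pair \((P,\text{piece})\), or the union of all pieces sharing one root \(a\), identified by \(a\).

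Restoration adds only edges of \(G[C']\) inside a component \(C'\), so it does not change the component partition. Together with conclusion 1 of \Cref{lem:structural}, this gives the criterion: \(uv\in E(G)\) is retained if and only if \(u,v\) lie in a common component, if and only if their labels agree. I will also handle the case where both endpoints are high-degree. Then they lie in different components by conclusion 2, so answering \textsc{No} is correct.

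\textbf{Query count.} The count is direct. One call to \(\mathbf{findPartition}\) makes at most \(N_{\mathrm{part}}\) neighbor queries to \(\widetilde G\). Each such query costs \(O(d)\) threshold-list queries, one for the list and one per neighbor, for a total of \(O(dN_{\mathrm{part}})\). Rebuilding \(J_P\) needs \(\mathsf{TL}_d\) of every vertex of \(P\), with \(|P|\le N_{\mathrm{part}}\), plus classification of the at most \(d|P|\) neighbors. That is again \(O(dN_{\mathrm{part}})\). The cut and the labels are computed without further queries. This is done for at most two endpoints, so the total is \(O(dN_{\mathrm{part}})\).

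I expect the label-equivalence argument to be the main obstacle. In particular, I need to rule out two different parts' residual pieces being connected in \(G_4\) other than through a common high-degree root. This follows because every \(G_4\)-edge lies either inside some \(J_P\) or nowhere.
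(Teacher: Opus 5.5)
Your proposal is correct and takes the same route as the paper. The steps match: the simulated \(\widetilde G\)-oracle is exact and fixed, so one seed gives one partition; \(J_P\) is reconstructed exactly and the cut is chosen by the fixed lexicographic rule; pieces from different parts meet only at a shared root, and restoration merges no components, so with \(G'_{\mathbf R}[C]=G[C]\) an edge is retained iff its endpoints' labels agree; and the \(O(d)\)-per-oracle-call count gives \(O(dN_{\mathrm{part}})\). You spell out the label equivalence in more detail than the paper, which just cites the characterization stated before the lemma.
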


	\begin{proof}
		Correctness follows from the component-label characterization above and is
		simultaneous for all calls using $\mathbf R$.  A part has size at most
		$N_{\mathrm{part}}$.  Simulating all bounded-degree-oracle calls made by
		$\mathbf{findPartition}$ and reconstructing the returned parts uses
		$O(dN_{\mathrm{part}})$ threshold-list queries. The chosen cut and
		labels are then computed without additional graph access.
	\end{proof}

	\section{The connected case: algorithm and analysis}
	\label{sec:induced-finder}
	This section gives the algorithm and analysis for a connected forbidden graph.
	Its main result is \Cref{thm:test-connected}.

	\begin{theorem}[Connected tester with threshold-list access]
		\label{thm:test-connected}
		Let \(H\) be a fixed connected graph. Then
		there is a threshold \(d=(1/\varepsilon)^{O(1)}\) for which
		induced-\(H\)-freeness in outerplanar graphs is testable with one-sided
		error with threshold-list access. The tester uses exact
		\(\mathsf{TL}_d\) queries and direct random-neighbor samples at
		high-degree roots, and its total query complexity is
		\((1/\varepsilon)^{O_H(1)}\).
	\end{theorem}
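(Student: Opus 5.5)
We set \(d:=N/\varepsilon^2\) as in \Cref{lem:structural}, with \(\zeta=\varepsilon/4\) (any small constant multiple of \(\varepsilon\) will do). We fix a seed \(\mathbf R\) and let \(G':=G'_{\mathbf R}\); \Cref{simulate} gives exact local access to \(G'\). The tester repeats a \emph{candidate search} a constant number of times. Each search yields a candidate: \(h:=|V(H)|\) distinct vertices with a bijection to \(V(H)\). The tester then performs an \emph{exact check}, and rejects if and only if some candidate passes.

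In the exact check, candidates with two vertices in \(V^h\) are discarded. For every remaining pair, at least one endpoint has a complete list from \(\mathsf{TL}_d\), so each edge and nonedge of \(G\) is determined exactly. Consequently, one-sided error is immediate: if \(G\) is induced-\(H\)-free, nothing passes. The query bound is also routine, since each search makes \(\operatorname{poly}(d,1/\varepsilon)\) calls to \Cref{Local} and \(\mathsf{TL}_d\), each costing \(\operatorname{poly}(d^2/\varepsilon)\).

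For soundness, we work on the event \(|E(G)\setminus E(G')|\le\varepsilon n/4\), which has probability at least \(2/3\). If \(G\) is \(\varepsilon\)-far, then \(G'\) is \(3\varepsilon/4\)-far, by the triangle inequality for edit distance. Take a maximal family of edge-disjoint induced copies of \(H\) in \(G'\). Deleting all of their edges destroys every copy, so there are at least \(\varepsilon n/(2e(H))\) copies. We assume \(e(H)\ge1\); the case \(H=K_1\) is trivial. Each copy is connected, so by \Cref{lem:structural} it lies in one component \(C\) and is induced in \(G\). It contains at most one high-degree vertex.

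We then prune: repeatedly remove a copy that uses a vertex \(v\) for which the family's edges at \(v\) make up less than a \(\gamma\)-fraction of \(\deg_G(v)\), where \(\gamma=c\varepsilon/h^2\). The edges lost this way are charged to \(\gamma\sum_v\deg(v)\le 3\gamma n\) for outerplanar \(G\). Hence linearly many copies survive, each with at least \(h-1\ge1\) low-degree vertices.

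We split into two cases.

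\emph{Case 1:} a constant fraction of the surviving copies are entirely low-degree. A uniform vertex hits such a copy with probability \(\Omega(\varepsilon)\). A BFS in \(G'\) to radius \(h\), which visits at most \(d^{h}\) vertices, then finds the copy, and the search enumerates all \(h\)-subsets of the explored set as candidates.

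\emph{Case 2:} many copies have a high-degree root \(a\). A uniform low-degree vertex lands in such a copy with probability \(\Omega(\varepsilon/h)\), and a bounded BFS in \(G'\) reaches \(a\). For each vertex \(r\in V(H)\) that could play the root, the branches of \(H-r\) are \(B_1,\dots,B_t\). Sample \(\operatorname{poly}(1/\gamma)\) random neighbors of \(a\). Each sample is, with probability at least \(\gamma\), an edge of a surviving copy rooted at \(a\). From such a neighbor, a bounded low-degree BFS in \(G'-a\) reveals the branch of that copy containing it.

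The main obstacle is combining branches drawn from different copies. They must be disjoint, and there must be no edges between them, since an edge would destroy inducedness. I would argue as follows.

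\emph{Branch conflicts.} All branches at \(a\) lie in the component \(C\) of \(G'\), and \(C-a\) has only low-degree vertices. Two branches conflict only if they intersect or are joined by an edge. A given branch has at most \(h\) vertices, each of degree at most \(d\), so it conflicts with at most \(hd\) other vertices. Because the surviving copies are edge-disjoint, however, a vertex can lie in many of them. I would instead count via outerplanarity: the copies rooted at \(a\) whose branches touch a fixed bounded set \(S\) is what needs bounding.

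\emph{Bounded conflicts via outerplanarity.} In an outerplanar graph with apex-like vertex \(a\), the graph \(C-a\) together with the attachment order around \(a\) behaves like a path-like structure. A bounded set \(S\) can meet branches only of a bounded number of copies whose attachments are interleaved near \(S\); otherwise one obtains a \(K_{2,3}\) minor with \(a\). This gives the claim that each branch conflicts with \(O_H(1)\) branches of another type.

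\emph{Assembling a copy.} Once conflicts are bounded, sample for each \(j\) a branch of type \(B_j\), with the correct attachment pattern to \(a\), from uniform random neighbors. Each sample is uniform over a set of \(\Omega(\gamma\deg a)\) good edges, while the conflicting edges number \(O(d)\cdot O_H(1)\ll\gamma\deg a\) because \(\deg a>d\) can be taken large. Note that the bound by \(\gamma \deg a\) rather than a constant is what needs care. Hence with probability \(\gamma^{O(h)}\) the branches are pairwise compatible. Together with \(a\) they induce \(H\), because every adjacency to \(a\) matches its own copy's pattern.

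The exact check then certifies the candidate. A constant number of repetitions gives rejection probability at least \(2/3\) after amplification over the seed event.
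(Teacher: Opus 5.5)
Your architecture matches the paper's: fixed-seed decomposition, exact check through a low-degree endpoint, edge-disjoint packing, degree-retention pruning, the low/high split, and sampling branches at the root. However, two steps fail as written.

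\textbf{The packing step.} Deleting the edges of a maximal edge-disjoint family need not leave an induced-$H$-free graph, because deleting an edge can \emph{create} an induced copy. For $H=P_3$, take a triangle $xyz$ with a pendant vertex $w$ at $x$. The family $\{wxy\}$ is maximal, since the only other induced $P_3$, $wxz$, shares the edge $wx$. Yet deleting $wx$ and $xy$ makes $xzy$ an induced $P_3$. The paper instead isolates the low-degree vertices of the packed copies. Any induced copy surviving this avoids them, so it was already induced in $G'$. By maximality it shares an edge with the family, and that edge has a low-degree endpoint, which was isolated, a contradiction. Isolation costs up to $d$ per vertex. So you only get $\Omega(\varepsilon n/(dh))$ copies, and the retention fraction becomes $\gamma=\Theta_H(\varepsilon/d)$. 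Your hitting probabilities $\Omega(\varepsilon)$ and $\Omega(\varepsilon/h)$ also need an extra factor $1/d$, because a low-degree vertex can lie in up to $d$ of the copies.

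\textbf{The branch-combination step, which is the more serious gap.} A low-degree vertex $v$ lies in at most $\deg_G(v)\le d$ members of an edge-disjoint family of connected nontrivial subgraphs, since each member uses a distinct edge at $v$. So a fixed branch conflicts with at most $hd(d+1)$ branches of each other type. No outerplanarity or $K_{2,3}$ argument is needed, and yours is only a sketch.

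The real problem is the comparison ``$O(d)\cdot O_H(1)\ll\gamma\deg a$ because $\deg a>d$''. The degree of $a$ may be $d+1$, so you only know $\gamma\deg a\gtrsim\gamma d=\Theta_H(\varepsilon)$ with the corrected $\gamma$. A root may therefore carry a single surviving copy, and no union bound over conflicts can work in that regime.

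The missing idea is the second regime of \Cref{lem:compatible-branches}:
\begin{itemize}
\item Restrict to the $t_a$ copies at $a$ that map a common vertex $z\in V(H)$ to $a$, and fix an attachment vertex $w_j$ for each branch.
\item Condition on the $j$th designated sample being some $\phi_s(w_j)$, which has probability at least $(t_a/\deg_G(a))^c$. Then the copy indices are independent and uniform on $[t_a]$.
\item If $t_a\ge T_0=\Theta_H(d^2)$, a union bound over pairs gives compatibility with probability at least $1/2$.
\item If $t_a<T_0$, all $c$ indices coincide with probability $t_a^{1-c}\ge T_0^{2-h}$, and branches taken from a single copy are automatically compatible.
\end{itemize}
This gives $\kappa=d^{-O_H(1)}$, which keeps the query bound polynomial in $1/\varepsilon$.
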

	
	The proof has three parts. First, we show that a graph far from
	induced-\(H\)-free contains many induced copies that local sampling can
	find with a probability bounded below independently of \(n\). Second, we turn these facts into the connected-copy finder
	in \Cref{thm:find-one-induced}. Third, we apply the finder to the decomposed
	graph and prove \Cref{thm:test-connected}. After that proof, we give a
	sequential form of the finder that will be used only in the disconnected
	case.
	
	\subsection{Parameter setting and the threshold-list query}
	
	\paragraph*{Parameter setting.}
	Let \(N\) be the absolute constant from \Cref{lem:structural}, and set $d:=\left\lceil\frac{100N}{\varepsilon^2}\right\rceil$. 
	This choice lets us apply \Cref{lem:structural} with proximity parameter
	\(\varepsilon/10\).
	Throughout this section, one execution of the tester uses one fixed
	random seed \(\mathbf R\) for the decomposition, and we write $G'
	:=
	G'_{\mathbf R}
	=
	\mathbf{Decompose}_{\mathbf R}
	\left(
	G,d,\frac{\varepsilon}{10}
	\right)$. All invocations of \Cref{Local} during this execution use
	the same seed \(\mathbf R\) and the decomposition parameter
	$\zeta=\varepsilon/10$; this suppressed parameter is also used by every
	call to $\mathbf{ExploreLow}$.
	
	Recall that, when
	\(\mathbf{Decompose}(G,d,\varepsilon/10)\) is executed, the partition
	oracle on the spanning graph $\widetilde G$ is invoked with parameter
	$\frac{\varepsilon}{40d}.$
	Let $N_{\mathrm{part}}
	:=q_{\mathrm{part}}
	\left(\frac{\varepsilon}{40d},d\right)$. By \Cref{partition}, $N_{\mathrm{part}}
	=
	\operatorname{poly}
	\left(
	\frac{d}{\varepsilon/(40d)}
	\right)
	=
	\operatorname{poly}
	\left(
	\frac{d^2}{\varepsilon}
	\right)$. By \Cref{simulate}, one membership query for an edge of the implicit
	graph \(G'\) can be simulated using
	$N_{\mathrm{sim}} = O(dN_{\mathrm{part}})$
	threshold-list queries.

	Throughout the finder analysis, \(F\) denotes a fixed connected
	nontrivial graph. Let $r:=|V(F)|,\,m:=|E(F)|$ and $\Delta:=\Delta(F)$. 
	Recall that
	\(V^h:=\{v\in V(G):\deg_G(v)>d\}\) and
	\(V^\ell:=V(G)\setminus V^h\).
	
	The finder analysis works for any fixed seed, whether or not the
	decomposition deletes few edges. By \Cref{lem:structural}, every
	connected component of \(G'\) contains at most one vertex of
	\(V^h\), and for every connected component \(C\) of \(G'\),
	$G'[C]=G[C].$
	Also, by \Cref{simulate}, every invocation of
	\Cref{Local} using the same seed \(\mathbf R\) correctly
	decides whether a queried edge of \(G\) belongs to this same graph
	\(G'\).
	
	\paragraph*{Exact graph queries.}
	We first analyze the tester with threshold-list access as defined in \Cref{pre}.
	Thus, every query \(\mathsf{TL}_d(v)\) is answered exactly. The tester may
	also take uniform vertex samples and uniformly
	sample a neighbor of a high-degree vertex. Thus, phrases such as
	``determine whether \(v\in V^\ell\)'' and ``inspect the neighborhood of a
	low-degree vertex'' mean one threshold-list query. These operations are only
	shorthand for adjacency-list access. In the proof of
	\Cref{thm:sparse-main}, every threshold-list query and every high-degree
	neighbor sample is implemented by exact degree and indexed-neighbor
	queries. After that proof, \Cref{sec:rn-recovery} shows how to recover the
	same bounded-degree lists with high probability using only random-neighbor
	queries.
	
	The next two subsections prove the finder theorem. We first find many edge-disjoint copies.
	We then discard some copies so that a random-neighbor query at any vertex
	used by the remaining family has a positive probability of finding one
	of its edges. Bounded
	exploration then recovers low-degree pieces, and a compatibility argument
	joins branches around a high-degree root. We state the finder after these
	ingredients and then analyze its two search branches.
	
	\subsection{Structural facts for graphs far from induced-freeness}
	
	\paragraph*{Many edge-disjoint induced copies.}
	An \emph{edge-disjoint family}, also called an edge packing, may share
	vertices but never edges. To obtain such a family, take a maximal one.
	If it were small, isolating all of its low-degree vertices would destroy
	every induced copy at low cost. Isolation is important here: deleting
	only the edges of the selected copies could create new induced copies.

	\begin{lemma}[Many edge-disjoint induced copies]
		\label{lem:induced-edge-packing-new}
		If \(G'\) is \(\alpha\)-far from being induced-\(F\)-free, then
		\(G'\) contains at least
		$\frac{\alpha n}{dr}$
		pairwise edge-disjoint induced copies of \(F\).
	\end{lemma}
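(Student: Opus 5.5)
We take a maximal edge-disjoint family \(\mathcal M\) of induced copies of \(F\) in \(G'\) and show that, if \(\mathcal M\) were small, cheaply isolating the low-degree vertices it uses would leave an induced-\(F\)-free graph. This contradicts the assumption that \(G'\) is \(\alpha\)-far.

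\emph{Step 1: every edge of a packed copy has a low-degree endpoint.} Each copy \(S\in\mathcal M\) induces the connected graph \(F\) in \(G'\), so \(S\) lies inside a single component \(C\) of \(G'\). By \Cref{lem:structural}, \(|C\cap V^h|\le 1\). Hence \(S\) contains at most one vertex of \(V^h\), and every edge of the copy has an endpoint in \(V^\ell\).

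\emph{Step 2: build the edited graph and bound its cost.} Let \(L:=\bigcup_{S\in\mathcal M}(S\cap V^\ell)\), so \(|L|\le r|\mathcal M|\). Let \(X\) be obtained from \(G'\) by deleting every edge incident to \(L\). For each \(v\in L\) we have \(\deg_{G'}(v)\le\deg_G(v)\le d\), because \(G'\subseteq G\). Therefore
\[
|E(G')\triangle E(X)|\le d|L|\le dr|\mathcal M|.
\]

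\emph{Step 3: \(X\) is induced-\(F\)-free.} Suppose a vertex set \(T\) induces \(F\) in \(X\). Since \(F\) is connected and nontrivial, no vertex of \(T\) is isolated in \(X\), so \(T\cap L=\emptyset\). Every deleted edge is incident to \(L\), so \(X[T]=G'[T]\), and \(T\) is an induced copy of \(F\) in \(G'\). Its edges avoid \(L\). By Step 1, every edge of a copy in \(\mathcal M\) has an endpoint in \(L\). Hence \(T\) is edge-disjoint from all of \(\mathcal M\), contradicting the maximality of \(\mathcal M\).

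\emph{Conclusion.} The graph \(X\) is induced-\(F\)-free and lies within edit distance \(dr|\mathcal M|\) of \(G'\). Since \(G'\) is \(\alpha\)-far, more than \(\alpha n\) changes are required, so \(dr|\mathcal M|>\alpha n\). This gives \(|\mathcal M|\ge \alpha n/(dr)\).

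The only delicate point is Step 1, the use of the one-high-vertex-per-component property of \(G'\). Without it, a packed copy could have an edge joining two high-degree vertices. That edge would survive the isolation step, and the maximality contradiction in Step 3 would fail. Deleting only the copies' edges instead of isolating vertices would not work either, because it could create new induced copies.
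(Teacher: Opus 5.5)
Your proposal is correct and follows the paper's proof: take a maximal edge-disjoint packing, isolate its low-degree vertices at cost at most $dr|\mathcal M|$, and use the at-most-one-high-vertex-per-component property of $G'$ to show that any surviving induced copy would be edge-disjoint from the packing, contradicting maximality. The only difference is ordering: you establish up front that every packed edge has a low-degree endpoint, whereas the paper derives this after locating an edge $uv$ shared with some packed copy.
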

	
	\begin{proof}
		Let \(\mathcal Q\) be a maximal family of pairwise edge-disjoint
		induced copies of \(F\) in \(G'\), and define
		$q:=|\mathcal Q|$.
		For every \(Q\in\mathcal Q\), define
		$L(Q):=V(Q)\cap V^\ell,$
		and let
		$W:=\bigcup_{Q\in\mathcal Q}L(Q).$
		Since every copy has \(r\) vertices,
		$|W|\le rq.$
		
		Delete from \(G'\) every edge incident with a vertex of \(W\).
		Every vertex of \(W\) has degree at most \(d\) in \(G\), and hence it has degree at most \(d\)
		also in \(G'\). Therefore, at most
		$d|W|\le drq$
		edges are deleted.
		
		We claim that the resulting graph is induced-\(F\)-free.
		Suppose otherwise, and let \(R\) be an induced copy of \(F\) in
		the resulting graph. Every vertex of \(W\) is isolated after the
		deletions. Since \(F\) is connected and nontrivial,
		$V(R)\cap W=\emptyset.$
		No adjacency whose two endpoints lie outside \(W\) has been
		changed. Thus, \(R\) was already an induced copy of \(F\) in \(G'\).
		
		By maximality of \(\mathcal Q\), the copy \(R\) must share an edge
		\(uv\) with some \(Q\in\mathcal Q\). Since \(Q\) is connected, it
		lies in a single connected component of \(G'\). By
		\Cref{lem:structural}, this component contains at most one vertex of
		\(V^h\). Therefore, at least one endpoint of \(uv\), say \(u\), belongs
		to \(V^\ell\). Hence,
		$u\in L(Q)\subseteq W,$
		contradicting \(V(R)\cap W=\emptyset\).
		
		Consequently, deleting at most \(drq\) edges makes \(G'\)
		induced-\(F\)-free. Since \(G'\) is \(\alpha\)-far from being
		induced-\(F\)-free,
		$drq\ge \alpha n,$
		and hence
		$|\mathcal Q| \ge \frac{\alpha n}{dr}.$
	\end{proof}

	\paragraph*{Keeping enough edges for random sampling.}
	
	By \Cref{lem:induced-edge-packing-new}, there exists a pairwise
	edge-disjoint family \(\mathcal Q\) of induced copies of \(F\) in \(G'\)
	with $|\mathcal Q|
	\ge
	\frac{\alpha n}{dr}$. But the existence of many such copies alone does not guarantee that
	a random-neighbor query at a vertex participating in these copies is likely
	to sample an edge belonging to one of them. We keep a large subfamily \(\mathcal Q^\star\) in which, at every
	vertex used by a copy, a positive fraction of the incident edges belongs
	to the family. We do this by repeatedly removing copies through any
	vertex where this fraction is too small. The sum of all degrees in an
	outerplanar graph is less than \(4n\), so these removals discard only a
	small fraction of the copies.

	For a subfamily \(\mathcal A\subseteq\mathcal Q\), let $U(\mathcal A):=\left(V(G),\bigcup_{Q\in\mathcal A}E(Q)\right)$ 
	be the graph formed by the edges of its copies. Set
	\(\beta:=\alpha/(dr)\) and \(\rho:=\beta/12\).
	Thus, for a nonisolated vertex \(v\),
	\(\deg_{U(\mathcal A)}(v)/\deg_G(v)\) is the probability that a
	random-neighbor query at \(v\) returns an edge of this family.

	\begin{lemma}[A subfamily with enough edges at every used vertex]
		\label{lem:induced-pruning}
		Let \(\mathcal Q\) be a family of at least \(\beta n\) pairwise
		edge-disjoint induced copies of \(F\) in \(G'\).
		There exists a subfamily
		\(\mathcal Q^\star\subseteq\mathcal Q\) with $|\mathcal Q^\star|
		\ge
		\frac{\beta n}{2}$
		such that every nonisolated vertex \(v\) of
		\(U(\mathcal Q^\star)\) satisfies $\deg_{U(\mathcal Q^\star)}(v)
		>
		\rho\deg_G(v)$.
	\end{lemma}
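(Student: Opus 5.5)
We prove the statement with a greedy pruning procedure and a charging argument. Start with \(\mathcal A:=\mathcal Q\). While some vertex \(v\) is nonisolated in \(U(\mathcal A)\) and satisfies \(\deg_{U(\mathcal A)}(v)\le\rho\deg_G(v)\), remove from \(\mathcal A\) every copy that contains an edge incident with \(v\); afterwards \(v\) is isolated in \(U(\mathcal A)\). Output the final family as \(\mathcal Q^\star\). When the procedure stops, every nonisolated vertex of \(U(\mathcal Q^\star)\) satisfies \(\deg_{U(\mathcal Q^\star)}(v)>\rho\deg_G(v)\) by construction. So only the size bound remains.

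\emph{Step 1 (each vertex is processed at most once).} Once \(v\) has been processed it is isolated in \(U(\mathcal A)\). Removing further copies never adds edges back, so \(v\) stays isolated and is never processed again.

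\emph{Step 2 (cost of one step).} The copies in \(\mathcal A\) are pairwise edge-disjoint, and each copy containing \(v\) uses at least one edge at \(v\). Hence the number of copies removed when \(v\) is processed is at most \(\deg_{U(\mathcal A)}(v)\le\rho\deg_G(v)\).

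\emph{Step 3 (total cost).} Summing over the processed vertices, which are distinct by Step 1, the total number of removed copies is at most
\[
\rho\sum_{v}\deg_G(v)=2\rho|E(G)|<2\rho\cdot 2n=4\rho n .
\]
Here we use that an outerplanar graph has at most \(2n-3\) edges. With \(\rho=\beta/12\) this is \(\beta n/3\le \beta n/2\). Therefore
\[
|\mathcal Q^\star|\ge \beta n-\beta n/3\ge \beta n/2 .
\]

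Both the degree condition and the ratio refer to \(\deg_G\), not \(\deg_{G'}\). This is harmless because \(G'\subseteq G\), so \(\deg_{G'}(v)\le\deg_G(v)\), and the edge count is taken in the outerplanar input \(G\).

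The main point to get right is Step 2. It requires bounding the removed copies by the current \(U\)-degree of \(v\), not by the original one, and it uses edge-disjointness. Combined with Step 1, this makes the charge to each vertex at most \(\rho\deg_G(v)\), and the charges sum to \(O(\rho n)\).
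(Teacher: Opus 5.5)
Your proposal is correct and follows essentially the same route as the paper: the same greedy pruning at vertices with $\deg_{U}(v)\le\rho\deg_G(v)$, the observation that each vertex is processed at most once, the charge of at most $\rho\deg_G(v)$ per processed vertex via edge-disjointness, and the outerplanar bound $\sum_v\deg_G(v)<4n$, giving fewer than $\beta n/3$ removed copies. Removing the copies that use an edge at $v$, rather than all copies containing $v$, is the same operation here, since $F$ is connected and nontrivial.
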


	\begin{proof}
		Initially set
		$\mathcal Q_0:=\mathcal Q.$
		Given the current family \(\mathcal Q_t\), write
		$U_t:=U(\mathcal Q_t).$
		If there exists a nonisolated vertex \(v\) of \(U_t\) satisfying $\deg_{U_t}(v)
		\le
		\frac{\beta}{12}\deg_G(v)$,
		delete from \(\mathcal Q_t\) every copy containing \(v\), and denote
		the resulting family by \(\mathcal Q_{t+1}\).
		
		Since the members of \(\mathcal Q_t\) are pairwise edge-disjoint and
		\(F\) is connected and nontrivial, every copy of \(\mathcal Q_t\)
		containing \(v\) contains at least one edge incident with \(v\).
		Also, distinct copies use distinct edges incident with \(v\).
		Therefore, the number of copies deleted when processing \(v\) is at most $\deg_{U_t}(v)
		\le
		\frac{\beta}{12}\deg_G(v)$.
		
		After all copies containing \(v\) are deleted, the vertex \(v\) is
		isolated in \(U_{t+1}\). Since later steps only delete more
		copies, \(v\) remains isolated and is never processed again.
		
		Consequently, each vertex of \(G\) is processed at most once, and the
		total number of deleted copies is at most $\frac{\beta}{12}
		\sum_{v\in V(G)}\deg_G(v)$. Since \(G\) is outerplanar,
		$\sum_{v\in V(G)}\deg_G(v)<4n.$
		Therefore, fewer than
		$\frac{\beta n}{3}$
		copies are deleted.
		
		Let \(\mathcal Q^\star\) be the family when the procedure terminates.
		By \Cref{lem:induced-edge-packing-new},
		$|\mathcal Q|\ge\beta n,$
		and hence $|\mathcal Q^\star|
		\ge
		\beta n-\frac{\beta n}{3}
		=
		\frac{2\beta n}{3}
		\ge
		\frac{\beta n}{2}$.
		
		Finally, at termination there is no nonisolated vertex \(v\) of
		\(U(\mathcal Q^\star)\) satisfying $\deg_{U(\mathcal Q^\star)}(v)
		\le
		\frac{\beta}{12}\deg_G(v)$. Thus, every nonisolated vertex \(v\) satisfies
		$\deg_{U(\mathcal Q^\star)}(v)
		>
		\frac{\beta}{12}\deg_G(v)$, as required.
	\end{proof}

	We shall repeatedly use the following simple observation.
	
	\begin{lemma}[Few edge-disjoint copies through a low-degree vertex]
		\label{lem:low-multiplicity}
		Let \(\mathcal A\) be a pairwise edge-disjoint family of connected
		nontrivial subgraphs of \(G'\). Every vertex \(v\in V^\ell\) belongs to at
		most \(d\) members of \(\mathcal A\).
	\end{lemma}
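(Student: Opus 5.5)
The plan is a direct charging argument, injecting the members of \(\mathcal A\) that contain \(v\) into the edges of \(G\) at \(v\).

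Fix \(v\in V^\ell\) and let \(\mathcal A_v\subseteq\mathcal A\) be the members containing \(v\). Take any \(A\in\mathcal A_v\). Since \(A\) is connected and has at least two vertices, \(v\) is not isolated in \(A\). Hence \(A\) contains at least one edge incident with \(v\), and we choose one such edge \(e_A\). This step uses only connectivity and nontriviality, exactly as in the proof of \Cref{lem:induced-pruning}.

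The map \(A\mapsto e_A\) is injective. Two distinct members \(A\neq A'\) of \(\mathcal A_v\) are edge-disjoint, so \(e_A\neq e_{A'}\). Every \(e_A\) lies in \(A\subseteq G'\subseteq G\), since \(G'\) is a spanning subgraph of \(G\). Therefore \(|\mathcal A_v|\le\deg_{G'}(v)\le\deg_G(v)\le d\), where the last inequality is the definition of \(V^\ell\).

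I do not expect a real obstacle here. The one point to state explicitly is that the degree bound refers to \(G\), not \(G'\), so the inclusion \(G'\subseteq G\) from \Cref{lem:structural} is what passes the threshold \(d\) down to \(G'\).
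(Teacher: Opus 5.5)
Your proof is correct and is essentially the paper's argument. Connectivity and nontriviality give each member containing \(v\) an edge at \(v\), and edge-disjointness makes these edges distinct, so the count is at most \(\deg_{G'}(v)\le\deg_G(v)\le d\).
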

	
	\begin{proof}
		Every member of \(\mathcal A\) containing \(v\) uses at least one
		edge incident with \(v\). Since the members of \(\mathcal A\) are
		edge-disjoint, these incident edges are distinct. Since
		$\deg_{G'}(v)\le\deg_G(v)\le d,$
		at most \(d\) members of \(\mathcal A\) contain \(v\).
	\end{proof}

	\paragraph*{Compatibility of branches at a high-degree root.}
	The high-degree case requires one additional ingredient. Suppose that many
	edge-disjoint induced copies of \(F\) share the same high-degree vertex \(a\),
	with the same vertex \(z\in V(F)\) mapped to \(a\). A random-neighbor
	exploration at \(a\) need not recover all branches from the same copy:
	different queries may lead to branches belonging to different copies.
	
	Write \(B_1,\ldots,B_c\) for the connected components of \(F-z\). We want to choose, independently for each \(B_j\), its image from one
	of the copies sharing \(a\), and still obtain an induced copy of \(F\).
	The only possible obstruction is that two selected branch images intersect
	or have an edge between them. Since all branch vertices are low-degree and
	the copies are pairwise edge-disjoint, \Cref{lem:low-multiplicity} implies
	that each fixed branch is incompatible with only \(O_{F,d}(1)\) choices of
	any other branch type. The following lemma shows that independently selected
	branches are compatible with constant probability.
	
	For the lemma, fix \(a\in V^h\) and a family
	\(\mathcal R=\{Q_1,\ldots,Q_t\}\) of pairwise edge-disjoint induced
	\(F\)-copies in \(G'\), all containing \(a\). Fix isomorphisms
	\(\phi_s:F\to Q_s\) that map the same vertex \(z\in V(F)\) to \(a\).
	As above, \(B_1,\ldots,B_c\) are the components of \(F-z\).

	\begin{lemma}[Compatibility of independently sampled branches]
		\label{lem:compatible-branches}
		Independently choose \(I_1,\ldots,I_c\) uniformly from \([t]\), and set $A_j:=\phi_{I_j}(V(B_j))$ where $j\in[c]$.
		There exists a constant \(\kappa=\kappa(F,d)>0\) such that, with
		probability at least \(\kappa\), the sets \(A_1,\ldots,A_c\) are
		pairwise disjoint and no edge of \(G'\) joins \(A_j\) and \(A_\ell\)
		for distinct \(j,\ell\). Therefore, on this event, $G'\left[\{a\}\cup\bigcup_{j=1}^c A_j\right]\cong F$.
	\end{lemma}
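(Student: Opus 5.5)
The plan is to bound, for each fixed branch image, how many branch images of another type can conflict with it, and then split into two cases according to whether \(t\) is large or small compared with that bound. The first observation is that every vertex of every \(Q_s\) other than \(a\) lies in \(V^\ell\). Indeed, \(Q_s\) is connected, so it lies in one connected component of \(G'\). By \Cref{lem:structural}, that component contains at most one vertex of \(V^h\), and \(a\) is such a vertex. Hence each \(A_j\) consists of low-degree vertices, and \(a\notin A_j\) because \(\phi_{I_j}\) is injective and maps \(z\notin V(B_j)\) to \(a\).

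Next comes the conflict count. For \(s\in[t]\) and \(j\in[c]\), write \(A_j^{(s)}:=\phi_s(V(B_j))\). Call \(A_j^{(s)}\) and \(A_\ell^{(s')}\) (with \(j\ne\ell\)) \emph{in conflict} if they intersect or some edge of \(G'\) joins them. If they conflict, then \(A_\ell^{(s')}\) meets the closed neighbourhood \(N:=A_j^{(s)}\cup\Gamma_{G'}(A_j^{(s)})\). Since \(A_\ell^{(s')}\subseteq V^\ell\), only the low-degree part of \(N\) matters, and this part has at most \(r(d+1)\) vertices. By \Cref{lem:low-multiplicity}, applied to the edge-disjoint family \(\mathcal R\), each such vertex lies in at most \(d\) copies \(Q_{s'}\). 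So, for fixed \(j,\ell,s\), at most \(K:=r(d+1)d\) indices \(s'\) give a conflict.

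Two more facts are needed. First, \(A_j^{(s)}\) and \(A_\ell^{(s)}\) never conflict. They are disjoint images of distinct components of \(F-z\), and since \(Q_s\) is an induced copy, no edge of \(G'\) joins them. Second, the case split on \(t\). If \(t\ge 2c^2K\), a union bound over the fewer than \(c^2\) pairs \((j,\ell)\) works. Conditioning on \(I_j\), the probability that \(I_\ell\) lands in a conflicting index is at most \(K/t\). So the failure probability is at most \(c^2K/t\le 1/2\). If \(t<2c^2K\), use instead the event \(I_1=\cdots=I_c\), which has probability \(t^{-(c-1)}\ge (2c^2K)^{-c}\). On that event no pair conflicts, by the first fact. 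Thus \(\kappa:=\min\{1/2,(2c^2K)^{-c}\}\) depends only on \(F\) and \(d\).

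Finally, on the good event I would define \(\psi\) by \(\psi(z)=a\) and \(\psi|_{V(B_j)}=\phi_{I_j}|_{V(B_j)}\). This map is injective by disjointness. The check that \(\psi\) is an isomorphism onto \(G'[\{a\}\cup\bigcup_j A_j]\) goes pair by pair:
\begin{itemize}
\item pairs inside one \(A_j\), and pairs \(\{a,x\}\) with \(x\in A_j\), behave exactly as in the induced copy \(Q_{I_j}\);
\item pairs across different \(A_j,A_\ell\) are nonedges in \(G'\), and they are nonedges in \(F\) because \(B_j,B_\ell\) are different components of \(F-z\).
\end{itemize}
The only delicate step is the conflict count. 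There one must make sure that the neighbourhood bound uses only low-degree vertices, excluding \(a\), whose degree is unbounded; this is exactly why \(a\notin A_\ell\) matters.
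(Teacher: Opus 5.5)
Your proof is correct and follows the paper's argument essentially step by step: the same low-degree conflict count $rd(d+1)$ via \Cref{lem:low-multiplicity}, the same union bound when $t$ is large, and the same fallback to the event $I_1=\cdots=I_c$ when $t$ is small. The only cosmetic difference is that your threshold and $\kappa$ are written in terms of $c$ rather than $r$; bounding $c\le r-1$, as the paper does, makes $\kappa$ visibly independent of the choice of $z$.
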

	
	\begin{proof}
		For $A=\phi_s(V(B_j))$ and $
		A'=\phi_{s'}(V(B_\ell))$ where $j\neq\ell$,
		call \(A\) and \(A'\) \emph{incompatible} if either
		\(A\cap A'\neq\emptyset\) or there is an edge of \(G'\) joining a
		vertex of \(A\) to a vertex of \(A'\).
		
		Since \(F\) is connected, every \(Q_s\) lies in the connected component
		of \(G'\) containing \(a\). This component contains at most one
		high-degree vertex, which is \(a\). Thus, every vertex of every branch
		belongs to \(V^\ell\).
		
		Fix a branch \(A=\phi_s(V(B_j))\), and define $Z_A:=
		\bigl(A\cup\Gamma_{G'}(A)\bigr)\cap V^\ell$.
		Since \(|A|\le r-1\) and every vertex of \(A\) has degree at most \(d\), we have
		$|Z_A|\le r(d+1)$.
		
		Fix another branch type \(B_\ell\). If
		\(\phi_{s'}(V(B_\ell))\) is incompatible with \(A\), then it contains
		at least one vertex of \(Z_A\). By \Cref{lem:low-multiplicity}, every
		vertex of \(Z_A\) belongs to at most \(d\) members of \(\mathcal R\).
		Therefore, the number of indices \(s'\in[t]\) for which
		\(\phi_{s'}(V(B_\ell))\) is incompatible with \(A\) is at most $C_0:=rd(d+1)$.
		
		Define $T_0:=
		\max\left\{
		2,\,
		2\binom{r-1}{2}C_0
		\right\}$.
		Suppose first that \(t\ge T_0\). For every two distinct branch types
		\(B_j,B_\ell\), conditioned on the first selected branch, the probability
		that the second selected branch is incompatible with it is at most
		\(C_0/t\). Thus,
		\[
		\Pr[
		\text{some two selected branches are incompatible}
		]
		\le
		\binom{c}{2}\frac{C_0}{t}
		\le
		\frac12.
		\]
		Therefore, the selected branches are pairwise compatible with probability at
		least \(1/2\).
		
		Now suppose that \(t<T_0\). The event $I_1=\cdots=I_c$
		guarantees that all selected branches come from the same induced copy
		\(Q_s\), and hence they are pairwise compatible. Its probability is $t^{1-c}\ge T_0^{2-r}$
		since \(c\le r-1\).
		
		Thus, the selected branches are pairwise compatible with probability at
		least $$\kappa:=\min\left\{\frac12,\,T_0^{2-r}\right\}>0.$$  
		On this event, each selected branch has the correct internal structure
		and the correct adjacency to \(a\), while compatibility excludes
		intersections and edges between distinct branches. Thus, $G'\left[
		\{a\}\cup\bigcup_{j=1}^c A_j
		\right]
		\cong F$.
	\end{proof}

	\paragraph*{Exploration from low-degree vertices.}
	The next subroutine explores only a bounded number of vertices, even
	when it encounters a vertex of very large degree.
	For \(x\in V^\ell\), the procedure
	\(\mathbf{ExploreLow}(x,r,\mathbf R,S)\) performs breadth-first search
	from \(x\), following edges of \(G'-S\) for at most \(r\) steps.
	Here \(S\) is an explicitly known set of excluded vertices, empty unless
	specified otherwise, and \(x\notin S\). No vertex in \(S\) is recorded
	or expanded, and every edge incident with \(S\) is ignored.
	It expands low-degree vertices at search levels \(0,\ldots,r-1\).
	Vertices at level \(r\), and high-degree vertices at any level, are
	recorded but not expanded. Thus, a search path never has a high-degree
	internal vertex. Whenever a low-degree vertex \(v\) is expanded, use the
	complete list returned by $\mathsf{TL}_d(v)$.
	For every discovered edge \(vw\in E(G)\), invoke
	\Cref{Local}, using the fixed seed \(\mathbf R\), and retain
	\(vw\) exactly when \Cref{Local} declares that
	$vw\in E(G').$
	If \(w\in V^h\), record \(w\), but do not expand it.
	
	Let
	$S_r(d):=\sum_{j=0}^{r-1}d^j.$
	At most \(S_r(d)\) low-degree vertices are expanded and at most
	\(dS_r(d)\) edges of \(G\) are inspected. Define $N_{\mathrm{ExploreLow}}(F,d,\varepsilon)
	:=
	S_r(d)+dS_r(d)N_{\mathrm{sim}}$. 
	This bounds the number of threshold-list queries used by one exploration. In
	particular, $N_{\mathrm{ExploreLow}}(F,d,\varepsilon)
	=
	O_F(d^rN_{\mathrm{sim}})$.

	\subsection{The connected induced-copy finder}

	We now combine the preceding ingredients into a local procedure for finding an
	induced copy of \(F\). By
	\Cref{lem:induced-edge-packing-new,lem:induced-pruning}, whenever \(G'\) is
	far from being induced-\(F\)-free, it contains a linear-size family
	\(\mathcal Q^\star\) of edge-disjoint induced \(F\)-copies such that every
	vertex used by this family has at least a \(\rho\)-fraction of its
	incident edges in the family.
	
	We distinguish two possibilities. If many copies in \(\mathcal Q^\star\)
	contain only low-degree vertices, then a uniformly sampled vertex lands in
	one of them with constant probability, and \(\mathbf{ExploreLow}\) exposes
	the entire copy. Otherwise, many copies contain a unique high-degree vertex.
	Starting from a sampled low-degree vertex, we first reach this high-degree
	root and then use random-neighbor queries at the root to sample candidate
	branches. The degree bound for this subfamily guarantees that such a query hits a
	useful branch with probability bounded below in terms of \(F,\alpha,d\), while
	\Cref{lem:compatible-branches} shows that independently sampled branches can
	be combined into an induced copy of \(F\) with positive probability.
	
	The algorithm below carries out these two searches simultaneously. Every
	candidate it produces is finally verified in the original graph \(G\), and therefore
	only genuine induced copies are returned.

	\begin{algorithm}[h]
		\SetAlgoLined
		\caption{\textbf{FindInduced}}
		\label{alg:find-induced-one}
		\KwIn{$G$, a fixed connected nontrivial graph $F$, threshold $d$, seed $\mathbf R$, and a known excluded set $S$ (default $\emptyset$)}
		\KwOut{an induced copy of $F$ in $G'-S$, or \textnormal{\textsc{Fail}}}
		Set \(r:=|V(F)|\)\;
		\If{$n<r$}{\Return{\textnormal{\textsc{Fail}}}}
		Choose $x\in V(G)$ uniformly at random\;
		\If{$x\in S$}{\Return{\textnormal{\textsc{Fail}}}}
		Call $\mathsf{TL}_d(x)$\;
		\If{the answer is $\mathsf{High}$}{\Return{\textnormal{\textsc{Fail}}}\;}
		Run $\mathbf{ExploreLow}(x,r,\mathbf R,S)$\;
		Let $\mathcal E$ be the explored subgraph of $G'-S$\;
		\If{a vertex $a\in V^h$ is encountered}{
			Make $r-1$ fresh calls to $\mathsf{RN}(a)$\;
			\ForEach{sampled neighbor $y\notin S$}{
				Call $\mathsf{TL}_d(y)$\;
				Use \Cref{Local} to test whether $ay\in E(G')$\;
				\If{$ay\in E(G')$ and $y$ is $\mathsf{Low}$}{
					Add the edge $ay$ to $\mathcal E$\;
					Run $\mathbf{ExploreLow}(y,r,\mathbf R,S)$\;
					Add its explored edges to $\mathcal E$\;
				}
			}
		}
		Enumerate injective maps \(\phi:V(F)\to V(\mathcal E)\) for which
		every edge of \(F\) maps to an edge of \(\mathcal E\)\;
		
		\ForEach{candidate $C$}{
			Call $\mathsf{TL}_d$ on all vertices of $C$\;
			Discard $C$ if two answers are $\mathsf{High}$\;
			Otherwise keep $C$ exactly when $G[V(C)]\cong F$\;
		}
		\If{some candidate passes verification}{\Return{$C$}}
		\Return{\textnormal{\textsc{Fail}}}
	\end{algorithm}
	
	The final verification is exact. Indeed, the algorithm discards every
	candidate containing two high-degree vertices. Thus, every unordered pair
	of vertices in a surviving candidate has at least one low-degree endpoint,
	whose complete neighborhood is known from \(\mathsf{TL}_d\). We can
	determine every adjacency in \(G[V(C)]\) exactly using \(O_F(1)\)
	threshold-list queries. Consequently, with exact threshold-list answers,
	\Cref{alg:find-induced-one} never returns a false induced copy. The random-neighbor implementation will have the same guarantee
	whenever all its recovered lists are correct; see \Cref{sec:rn-simulation}.
	Also, every candidate is connected in \(G'-S\). Since each component
	of \(G'\) induces the same graph in \(G'\) and \(G\), every returned
	copy is induced in \(G'-S\) as well as in \(G\).

	\begin{samepage}
	\begin{theorem}[Finding one connected induced copy]
		\label{thm:find-one-induced}
		Let \(F\) be a fixed connected nontrivial graph and let
		$\alpha\in(0,1)$. Fix an integer $d\ge100N/\varepsilon^2$,
		an arbitrary seed $\mathbf R$, and a known set $S\subseteq V(G)$.
		Let $G':=\mathbf{Decompose}_{\mathbf R}(G,d,\varepsilon/10)$.
		Suppose that $G'-S$
		requires more than $\alpha n$ edge edits to become induced-\(F\)-free,
		where $n=|V(G)|$. Degrees and high/low classifications always refer
		to the original graph $G$.
		
		For every \(\delta\in(0,1)\), there is a randomized procedure
		which outputs an induced copy of \(F\) in $G'-S$ with probability at least
		\(1-\delta\), using $N_{\mathrm{find}}(F,\alpha,\delta)
		=
		O\left(
		\frac{\log(2/\delta)}{p_F(\alpha,d)}
		\left(
		d^rN_{\mathrm{sim}}
		+
		d^{r^2}
		+
		r
		\right)
		\right)$
		threshold-list and direct random-neighbor queries, where
		\(p_F(\alpha,d)>0\) depends only on
		\(F,\alpha,d\). 
        
        More precisely, $p_F(\alpha,d)
		\ge
		\alpha^{O_F(1)}d^{-O_F(1)}$. 
		Thus, if \(d=(1/\varepsilon)^{O(1)}\),
		\(\alpha\ge\varepsilon^{O_F(1)}\), and
		\(\delta\ge\varepsilon^{O_F(1)}\), then $N_{\mathrm{find}}(F,\alpha,\delta)
		=
		\left(\frac{1}{\varepsilon}\right)^{O_F(1)}$. 
		Here \(N_{\mathrm{sim}}\) denotes the query bound from
		\Cref{simulate}. Thus, \(N_{\mathrm{find}}\) suppresses the
		decomposition parameters \(d\) and \(\varepsilon\), which determine
		\(N_{\mathrm{sim}}\).
		The query bound holds on every input, and any returned copy is genuine
		even without the farness assumption. The success guarantee also holds
		conditional on any previous history that fixes $S$ and $\mathbf R$,
		provided the finder uses fresh randomness.
	\end{theorem}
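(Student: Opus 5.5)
The plan is to show that one run of \Cref{alg:find-induced-one} (with excluded set \(S\)) returns a genuine induced copy of \(F\) in \(G'-S\) with probability at least some \(p_F(\alpha,d)\ge\alpha^{O_F(1)}d^{-O_F(1)}\), and then to repeat it \(O(\log(2/\delta)/p_F)\) times with fresh randomness. Genuineness and the per-run query bound hold on every input.

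\emph{Genuineness and cost.} Any candidate that passes the exact check induces \(F\) in \(G\). It is connected in \(G'-S\), so it lies in one component \(C\) of \(G'\). Since \(G'[C]=G[C]\) by \Cref{lem:structural}, it is also induced in \(G'-S\).

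For the cost of one run:
\begin{itemize}
\item at most \(r\) explorations occur, each costing \(N_{\mathrm{ExploreLow}}=O_F(d^rN_{\mathrm{sim}})\);
\item there are \(r-1\) random-neighbor samples at the root;
\item \(\mathcal E\) has \(O_F(d^r)\) vertices, so there are at most \(d^{O(r^2)}\) candidate maps;
\item each candidate is verified with \(O_F(1)\) threshold-list queries.
\end{itemize}
This gives the stated bound. Independence of runs, together with the fact that \(S\) and \(\mathbf R\) are fixed, gives the conditional statement.

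\emph{Success probability.} Apply \Cref{lem:induced-edge-packing-new} to \(G'-S\). This is still a graph whose components have at most one high-degree vertex and in which low-degree vertices have degree at most \(d\); the lemma's proof goes through verbatim with excluded vertices ignored. It yields a packing of at least \(\beta n\) copies with \(\beta=\alpha/(dr)\). Then \Cref{lem:induced-pruning} gives \(\mathcal Q^\star\) with \(|\mathcal Q^\star|\ge\beta n/2\). Each copy lies in one component, hence contains at most one vertex of \(V^h\).

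\textbf{Case 1: at least \(\beta n/4\) copies are entirely low-degree.} Their vertex sets cover at least \(\beta n/(4d)\) distinct vertices, since by \Cref{lem:low-multiplicity} each vertex lies in at most \(d\) copies. A uniform \(x\) hits one of these with probability at least \(\beta/(4d)\). Then \(\mathbf{ExploreLow}(x,r,\mathbf R,S)\) reaches every vertex of that copy within distance \(r-1\) through low-degree vertices and edges of \(G'-S\). The copy then appears among the enumerated maps and passes verification.

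\textbf{Case 2: at least \(\beta n/4\) copies contain a high-degree root.} Group these copies by root \(a\) and by the vertex \(z\in V(F)\) mapped to \(a\). Fix a pigeonholed \(z\), which loses a factor \(r\).
\begin{itemize}
\item \emph{Reaching a root.} A uniform \(x\) lands on a low-degree vertex of such a copy with probability at least \(\beta/(4dr)\). Exploration from \(x\) then records the copy's root \(a\). Because every component has at most one high-degree vertex, this is the unique root encountered.
\item \emph{Sampling a branch at \(a\).} Let \(t_a\) be the number of copies through \(a\) with \(z\mapsto a\). Each random-neighbor sample at \(a\) returns a neighbor lying in branch \(B_j\) of a uniformly distributed member of this family with probability at least roughly \(\rho\cdot t_a^{-1}\cdots\). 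The clean way to get this is to use the pruning guarantee \(\deg_{U(\mathcal Q^\star)}(a)>\rho\deg_G(a)\), together with the fact that the edges at \(a\) from distinct copies are distinct. I also need the sampled copy index to be near-uniform over the \(t_a\) copies and the edge to land in the \(j\)-th branch; I will handle both by restricting to copies through \(a\) and using each copy's at least one edge into \(B_j\).
\item \emph{Assembling the branches.} Then \(\mathbf{ExploreLow}\) from the sampled neighbor exposes the whole branch, which has radius less than \(r\) inside low-degree vertices. With probability at least \((\rho/r)^{c}\) the \(c\le r-1\) samples hit branches \(B_1,\dots,B_c\) in order, with copy indices close to independent and uniform. \Cref{lem:compatible-branches} then gives compatibility with probability \(\kappa(F,d)\ge d^{-O_F(1)}\), and the assembled set induces \(F\).
\end{itemize}

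\emph{Main obstacle.} The delicate step is the non-uniformity in Case 2. The root \(a\) reached is biased by how \(x\) was sampled, and a random neighbor of \(a\) hits each copy with probability proportional to the number of that copy's edges at \(a\), not uniformly. I would first weaken \Cref{lem:compatible-branches} to allow each \(I_j\) to have any distribution with point masses at most \(\Delta/(\rho t)\). Its union bound then still gives incompatibility probability \(O_F(C_0\Delta/(\rho t))\) for large \(t\), and the small-\(t\) case follows since all point masses are bounded below. Multiplying the factors \(\beta/(4dr)\), \((\rho/r)^{r}\) and \(\kappa\) yields \(p_F\ge\alpha^{O_F(1)}d^{-O_F(1)}\).
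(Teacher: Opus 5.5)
Your outline follows the paper's proof in all its main steps:
\begin{itemize}
\item exactness of the final check via $G'[C]=G[C]$;
\item the packing and pruning lemmas applied to $G'-S$ with original degrees;
\item the split into all-low and rooted copies;
\item reaching the unique root with $\mathbf{ExploreLow}$;
\item designating the first $c$ root samples for the $c$ branches;
\item \Cref{lem:compatible-branches}, and $O(\log(2/\delta)/p_F)$ fresh repetitions.
\end{itemize}

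You diverge only at the step you call the main obstacle, and that obstacle does not actually arise. Once the preimage $z_a$ of $a$ is fixed, every copy $Q_s$ in $\mathcal R_a$ lies in a component $C$ of $G'$ with $G'[C]=G[C]$. Hence $\Gamma_G(a)\cap\phi_s(V(B_j))=\phi_s\bigl(N_F(z_a)\cap V(B_j)\bigr)$ has the same size for every $s$. These sets are pairwise disjoint, since a common vertex $y$ would place the edge $ay$ in two copies. So, conditioned on sample $j$ landing in the $j$th branch of some copy of $\mathcal R_a$, the index $I_j$ is exactly uniform on $[t_a]$. The indices are independent because they come from different samples.

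The paper reaches the same point more simply. It fixes one $w_j\in N_F(z_a)\cap V(B_j)$ and requires sample $j$ to lie in $S_j=\{\phi_s(w_j):s\in[t_a]\}$, a set of exactly $t_a$ vertices; then \Cref{lem:compatible-branches} applies verbatim. The bias in which root is reached is also harmless, because the hit probability is bounded below at every root reachable from the sampled set. Your generalized compatibility lemma with two-sided point-mass bounds would work, but it buys nothing. The conditional point masses are exactly $1/t_a$, with no factor $\rho$, so $\kappa$ stays a function of $F$ and $d$ alone.

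Two details need fixing.

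First, the pigeonhole on $z$ must be done separately at each root, as in the paper. Let $\mathcal R_a$ be the largest class of copies at $a$ with a common preimage $z_a$, so that $t_a\ge|\mathcal Q^h(a)|/r$. With one global $z$, a reached root may carry a single copy with $z\mapsto a$ and many copies with other preimages. Then $t_a/\deg_G(a)$ is not bounded below at that root. An averaging argument over roots could repair this, but you do not give one.

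Second, each copy contributes up to $\Delta$ edges at $a$, so $\deg_U(a)\le\Delta|\mathcal Q^h(a)|$. The per-sample hit probability is therefore $t_a/\deg_G(a)>\rho/(r\Delta)$, not $\rho/r$. This changes only $F$-dependent constants, and $p_F\ge\alpha^{O_F(1)}d^{-O_F(1)}$ still follows.
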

	\end{samepage}
	
	\begin{proof}
		We first prove the statement for $S=\emptyset$, then explain why
		the same argument applies to every residual graph $G'-S$. By
		\Cref{lem:induced-edge-packing-new,lem:induced-pruning}, there is a
		family $\mathcal Q^\star$ of at least $\beta n/2$ pairwise
		edge-disjoint induced copies of \(F\), where $\beta=\alpha/(dr)$. Let
		$U:=U(\mathcal Q^\star)$ be their edge-union graph.  By
		\Cref{lem:induced-pruning}, $\deg_U(v)
		>
		\rho\deg_G(v)$ where $
		\rho:=\frac{\beta}{12}$,
		at every nonisolated vertex \(v\).
		
		Partition $\mathcal Q^\star
		=
		\mathcal Q^\ell
		\cup
		\mathcal Q^h$, where \(\mathcal Q^\ell\) consists of copies containing no
		high-degree vertex and \(\mathcal Q^h\) consists of the remaining
		copies. At least one of these two families has size at least
		$\frac{\beta n}{4}$.
		
		\medskip
		\noindent
		\textbf{Case 1:}
		\(\boldsymbol{|\mathcal Q^\ell|\ge\beta n/4}\).
		
		Let
		$L := \bigcup_{Q\in\mathcal Q^\ell}V(Q).$
		By \Cref{lem:low-multiplicity},
		$r|\mathcal Q^\ell| \le d|L|.$
		Thus,
		$|L| \ge \frac{r\beta}{4d}n.$
		A uniformly sampled vertex \(x\) belongs to \(L\) with probability
		at least
		$p_\ell := \frac{r\beta}{4d}.$
		
		Condition on \(x\in V(Q)\) for some
		\(Q\in\mathcal Q^\ell\). Every vertex of \(Q\) is low-degree.
		Since \(F\) is connected on \(r\) vertices, every vertex of \(Q\)
		is at distance at most \(r-1\) from \(x\) inside \(Q\). Therefore,
		\(\mathbf{ExploreLow}(x,r,\mathbf R)\) exposes every edge of \(Q\),
		and the exact verification step accepts it.
		
		Consequently, one execution succeeds in this case with probability at least
		\(p_\ell\).
		
		\medskip
		\noindent
		\textbf{Case 2:}
		\(\boldsymbol{|\mathcal Q^h|\ge\beta n/4}\).
		
		Every \(Q\in\mathcal Q^h\) contains exactly one high-degree
		vertex. For every \(Q\in\mathcal Q^h\), fix an isomorphism
		$\phi_Q:F\longrightarrow Q.$
		For \(a\in V^h\), let
		$\mathcal Q^h(a)$
		be the copies whose unique high-degree vertex is \(a\).
		
		Partition \(\mathcal Q^h(a)\) according to the vertex of \(F\)
		mapped to \(a\). Let \(\mathcal R_a\) be a largest class, and let
		\(z_a\in V(F)\) be the common vertex mapped to \(a\) by the copies
		in \(\mathcal R_a\). Then $|\mathcal R_a|
		\ge
		\frac{|\mathcal Q^h(a)|}{r}$.
		Put $\mathcal R
		:=
		\bigcup_{a\in V^h}\mathcal R_a$.
		Then
		$|\mathcal R| \ge \frac{\beta n}{4r}.$
		
		Every member of \(\mathcal R\) has \(r-1\) low-degree vertices.
		If $L_h
		:=
		\bigcup_{Q\in\mathcal R}
		\bigl(V(Q)\cap V^\ell\bigr)$,
		then \Cref{lem:low-multiplicity} gives
		$|L_h| \ge \frac{(r-1)\beta}{4rd}n.$
		Therefore, a uniformly sampled vertex belongs to \(L_h\) with
		probability at least
		$p_h^{(0)} := \frac{(r-1)\beta}{4rd}.$
		
		Condition on this event, and let \(a\) be the high-degree root
		reached from the sampled low-degree vertex. Put
		$t_a:=|\mathcal R_a|.$
		Every copy in \(\mathcal Q^h(a)\) contributes at most \(\Delta\)
		edges incident with \(a\) to \(U\), and hence
		$\deg_U(a) \le \Delta|\mathcal Q^h(a)|.$
		Thus,
		$t_a \ge \frac{\deg_U(a)}{r\Delta}.$
		By the degree-retention property, $\frac{t_a}{\deg_G(a)}
		>
		\frac{\rho}{r\Delta}$.
		Set
		$\sigma:=\frac{\rho}{r\Delta}$.
		
		Let
		$B_1,\ldots,B_c$
		be the connected components of \(F-z_a\).  For each $j\in[c]$,
		fix one vertex $w_j\in N_F(z_a)\cap V(B_j)$.  Order
		$\mathcal R_a=\{Q_1,\ldots,Q_{t_a}\}$ and write $\phi_s$ for
		the fixed isomorphism to $Q_s$.  The set
		$S_j:=\{\phi_s(w_j):s\in[t_a]\}$
		has size $t_a$: otherwise two copies would use the same edge
		incident with $a$, contradicting edge-disjointness.  Analytically
		designate the first $c$ of the $r-1$ independent samples and require
		that sample $j$ lie in $S_j$.  This event has probability $\left(\frac{t_a}{\deg_G(a)}\right)^c
		\ge \sigma^c\ge\sigma^{r-1}$,
		since $c\le r-1$ and $\sigma\le1$.  Conditioned on this event,
		the indices $I_1,\ldots,I_c$ of the hit copies are independent and
		uniform on $[t_a]$. Each exploration from the sampled neighbor
		reveals the whole corresponding branch and all its edges to $a$.
		\Cref{lem:compatible-branches} implies that the selected branches
		are compatible with probability at least
		$\kappa(F,d).$
		Consequently, one execution succeeds in this case with probability at
		least $p_h
		:=
		\frac{(r-1)\beta}{4rd}
		\left(
		\frac{\rho}{r\Delta}
		\right)^{r-1}
		\kappa(F,d)$.
		
		Define
		$p_F(\alpha,d) := \min\{p_\ell,p_h\}>0.$
		The displayed formulas for \(p_\ell\), \(p_h\), and
		\(\kappa(F,d)\) give
		\(p_F(\alpha,d)\ge
		\alpha^{O_F(1)}d^{-O_F(1)}\).
		Repeating the basic procedure $T
		:=
		\left\lceil
		\frac{\log(2/\delta)}
		{p_F(\alpha,d)}
		\right\rceil$
		times gives failure probability at most \(\delta\).
		
		One initial exploration and at most $r-1$ further explorations expose
		$M=O_F(d^r)$ vertices and edges and use
		$O_F(d^rN_{\mathrm{sim}})$ threshold-list queries.  There are at most
		$M^r=O_F(d^{r^2})$ injective maps of $V(F)$ into the explored
		vertex set, and the exact check of each remaining candidate costs
		$O_F(1)$ threshold-list queries.  At most $r-1$ fresh direct
		random-neighbor calls are made at the high root.  Therefore, one execution uses $O_F\left(
		d^rN_{\mathrm{sim}}
		+
		d^{r^2}
		+
		r
		\right)$
		threshold-list and direct random-neighbor queries. Multiplying by \(T\) proves
		the claimed bound.

		\paragraph*{Excluding a known vertex set.}
		Now fix $S$ and put $X:=G'-S$. Each component of $X$ lies inside
		one component of $G'$. It therefore has at most one vertex of $V^h$
		and induces the same graph in $X$ and in $G$.
		The edge-packing argument applies to $X$: isolating the low-degree
		vertices of a maximal packing costs at most $dr$ per copy, while
		farness is still measured by $\alpha n$. Thus there are at least
		$\beta n$ packed copies in $X$. The pruning argument also applies,
		using the original degrees $\deg_G(v)$ and the bound
		$\sum_{v\in V(G)}\deg_G(v)<4n$. It leaves at least $\beta n/2$
		copies with the same degree-retention guarantee.

		All these copies avoid $S$, and the low-degree multiplicity and
		branch-compatibility bounds are unchanged. Sampling a uniform
		identifier from the original $[n]$ gives the same lower bounds
		$p_\ell$ and $p_h^{(0)}$ on hitting their low-degree vertices.
		Root samples still use $\Gamma_G(a)$, with denominator $\deg_G(a)$;
		samples in $S$ simply do not help the search. On the success events
		analyzed above, exploration in $X$ exposes the same required copies
		or branches. Checking membership in the explicitly stored set $S$
		uses no graph query, so the query bound is unchanged.

		These estimates hold for every fixed $S$ and $\mathbf R$.
		Conditioning on an earlier history only fixes these objects.
		Fresh samples therefore give the same success bound at each later call.
	\end{proof}
	
	\subsection{Proof of the connected tester}
	We now apply the finder to the decomposed graph and prove the main theorem of
	this section.
	
	\begin{proof}[Proof of \Cref{thm:test-connected}]
		If $H=K_1$, reject exactly when $n>0$ (the only induced-$K_1$-free
		graph has no vertices). From now on, assume that \(H\) is nontrivial and
		use the degree threshold \(d\) fixed above.
		One trial samples a fresh seed $\mathbf R$, sets
		$G':=\mathbf{Decompose}_{\mathbf R}(G,d,\varepsilon/10)$, and runs
		the finder from \Cref{thm:find-one-induced}. With probability at least $2/3$, $|E(G)\setminus E(G')|
		\le
		\frac{\varepsilon n}{10}$.
		We claim that if \(G\) is \(\varepsilon\)-far from being
		induced-\(H\)-free, then \(G'\) is
		\(\frac{4\varepsilon}{5}\)-far from being
		induced-\(H\)-free. Otherwise \(G'\) could be made
		induced-\(H\)-free using at most
		$\frac{4\varepsilon n}{5}$
		edge modifications. Together with the decomposition edits, this
		would make \(G\) induced-\(H\)-free using at most $\frac{\varepsilon n}{10}
		+
		\frac{4\varepsilon n}{5}
		=
		\frac{9\varepsilon n}{10}
		<
		\varepsilon n$ modifications, a contradiction.
		
		Apply \Cref{thm:find-one-induced} with $F=H,\,
		\alpha=\frac{4\varepsilon}{5}$,
		and failure probability $1/20$. Reject if an
		induced copy of \(H\) is found, and accept otherwise.
		
		Every returned copy is verified exactly in \(G\) using exact
		low-degree lists. Consequently, an
		induced-\(H\)-free input is always accepted, for every seed. If $G$
		is $\varepsilon$-far, one trial rejects with probability at least
		$(2/3)(19/20)=19/30$.  Run two independent trials with fresh seeds
		and fresh finder randomness and reject if either trial rejects.  The
		resulting rejection probability is at least
		$1-(11/30)^2>2/3$.
		
		Finally, by \Cref{thm:find-one-induced}, each trial uses at most $N_{\mathrm{find}}
		\left(
		H,\frac{4\varepsilon}{5},\frac{1}{20}
		\right)$
		queries. Therefore, the two trials use at most $2N_{\mathrm{find}}
		\left(
		H,\frac{4\varepsilon}{5},\frac{1}{20}
		\right)
		=
		\left(\frac{1}{\varepsilon}\right)^{O_H(1)}$
		queries.
	\end{proof}
	
	\subsection{Finding several vertex-disjoint copies}
	The connected tester needs only one successful call to
	\Cref{thm:find-one-induced}, and therefore the next corollary is not used in the proof
	of \Cref{thm:test-connected}. We record it here because the disconnected
	tester in \Cref{sec:disconnected-tester} must build several vertex-disjoint
	copies of each connected component. The corollary shows that the same finder
	can be applied repeatedly as long as the residual graph stays far from
	induced-freeness. The first alternative in \Cref{far} will give exactly this
	assumption for each required component of a disconnected \(H\).
	
	\begin{corollary}[Repeated search while the remaining graph is far]
		\label{cor:sequential-induced-finder}
		Let \(F\) be a fixed connected nontrivial graph. Fix a seed
		$\mathbf R$ and put $G':=G'_{\mathbf R}$. Suppose that, for every
		collection of at most \(K\) pairwise vertex-disjoint induced copies
		of \(F\) in $G'$, the residual $G'[V(G')\setminus S]$ requires more
		than $\alpha n$ edits to become induced-\(F\)-free, where $S$ is the
		union of the copy vertices and distance is normalized by the original
		$n=|V(G)|$.
		
		Then, using this fixed seed \(\mathbf R\), for every
		\(\delta\in(0,1)\) one can find \(K+1\) pairwise vertex-disjoint
		induced copies of \(F\) with probability at least \(1-\delta\), using
		\[
		N_{\mathrm{seq}}(F,\alpha,K,\delta)
		\le
		(K+1)
		N_{\mathrm{find}}
		\left(
		F,\alpha,\frac{\delta}{K+1}
		\right)
		\]
		threshold-list and direct random-neighbor queries, where
		\(N_{\mathrm{find}}\) is the query bound from
		\Cref{thm:find-one-induced}. As with \(N_{\mathrm{find}}\), the notation
		\(N_{\mathrm{seq}}\) suppresses the decomposition parameters
		\(d\) and \(\varepsilon\).
	\end{corollary}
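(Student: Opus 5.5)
We run the finder of \Cref{thm:find-one-induced} sequentially, \(K+1\) times, with the fixed seed \(\mathbf R\) and with a growing excluded set. Set \(S_0:=\emptyset\). In round \(i=1,\ldots,K+1\), call the finder on \(G'\) with excluded set \(S_{i-1}\), failure probability \(\delta/(K+1)\), and fresh internal randomness. If it returns a copy \(Q_i\), set \(S_i:=S_{i-1}\cup V(Q_i)\). If any round fails, the whole procedure fails. The copies are pairwise vertex-disjoint by construction, since \(Q_i\) is found in \(G'-S_{i-1}\) and therefore avoids every earlier copy. By \Cref{thm:find-one-induced}, every returned copy is a genuine induced copy of \(F\) in \(G'-S_{i-1}\), and hence in \(G'\) and in \(G\), because it is connected and lies in one component of \(G'\). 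The set \(S_{i-1}\) is explicitly known, since its vertices were recorded when the copies were returned.

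For the success analysis, fix a round \(i\le K+1\) and condition on the history of rounds \(1,\ldots,i-1\), assuming they all succeeded. Then \(S_{i-1}\) is the union of the vertex sets of \(i-1\le K\) pairwise vertex-disjoint induced copies of \(F\) in \(G'\). By hypothesis, \(G'-S_{i-1}\) needs more than \(\alpha n\) edits to become induced-\(F\)-free, with distance normalized by the original \(n\). This is exactly the farness assumption of \Cref{thm:find-one-induced} for the excluded set \(S_{i-1}\). That theorem states explicitly that its guarantee holds conditional on any earlier history that fixes \(S\) and \(\mathbf R\), provided fresh randomness is used. Hence, conditional on this history, round \(i\) fails with probability at most \(\delta/(K+1)\).

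The round failure events are not independent, but a union bound over the \(K+1\) conditional failure probabilities suffices. Formally, the probability that some round fails equals the sum over \(i\) of the probability that the first failure occurs in round \(i\). Each such term is at most \(\Pr[\text{round } i \text{ fails}\mid \text{rounds } 1,\ldots,i-1 \text{ succeed}]\le\delta/(K+1)\). The total failure probability is therefore at most \(\delta\).

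The query bound is the sum of \(K+1\) finder calls, each using at most \(N_{\mathrm{find}}(F,\alpha,\delta/(K+1))\) threshold-list and direct random-neighbor queries. The finder's query bound holds on every input, and membership tests in \(S\) cost no queries. The only delicate point is the conditioning step. Each round's farness must be certified by the hypothesis applied to the random set \(S_{i-1}\). This works because the hypothesis is universally quantified over all collections of at most \(K\) disjoint copies, and because \Cref{thm:find-one-induced} was stated to hold conditionally on the past once \(S\) and \(\mathbf R\) are fixed.
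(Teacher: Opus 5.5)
Your proposal is correct and follows the paper's own proof. Like the paper, you apply \Cref{thm:find-one-induced} sequentially with the growing excluded set and failure probability $\delta/(K+1)$, certify each round's farness by the hypothesis applied to the at most $K$ copies already found, invoke the theorem's conditional-on-history guarantee, and finish with a union bound and summed query bounds; your first-failure decomposition and your check that returned copies are induced in $G'$ itself simply spell out steps the paper leaves implicit.
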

	
	\begin{proof}
		Suppose that after \(t\le K\) successful stages we have found
		pairwise vertex-disjoint induced copies
		$C_1,\ldots,C_t.$
		Put
		$S_t:=\bigcup_{j=1}^tV(C_j).$
		By assumption,
		$G'[V(G')\setminus S_t]$
		remains \(\alpha\)-far from being induced-\(F\)-free.
		
		Apply \Cref{thm:find-one-induced} with excluded set $S_t$ and
		failure probability $\delta/(K+1)$. Its conditional guarantee
		applies after every history of successful earlier stages. A union
		bound over the $K+1$ stages proves the success probability, and
		summing their query bounds gives the stated complexity.
	\end{proof}

\section{Structural tools for disconnected forbidden graphs}
\label{sec:disconnected-structure}
\suppressfloats[t]

The main result of this section is the exceptional-component dichotomy in
\Cref{far}. On a far input, either every component remains findable after
a bounded family of its copies is excluded, or there is exactly one
exceptional component. In the second case, a witness copy of this component
has a separated region in which all the other components can be found.
This is the structural reason that the connected finder extends to
disconnected forbidden graphs.

For a connected forbidden graph, it is enough to find and check one candidate
copy. As \Cref{fig:disconnected-obstacle} illustrates, a disconnected graph
creates a second task: candidates for its components must also have no edges
between them.  Finding each component separately does not ensure this
condition. There is also a different obstruction: a far input may have
very few vertex-disjoint copies of one component. The example below
explains why this requires the exceptional alternative. We then define
the terms needed to state \Cref{far} precisely and prove it. The next
section turns the two alternatives into a tester.

\begin{figure}[t]
	\centering
	\begin{tikzpicture}[
		vertex/.style={circle,draw,fill=blue!16,inner sep=1.6pt},
		high/.style={circle,draw,fill=red!30,inner sep=2pt},
		copy/.style={draw=purple!70,dashed,rounded corners,thick},
		bad/.style={draw=red!75!black,very thick}
	]
		\node[font=\bfseries] at (1.5,3.0) {(a) Connected};

		\node[high]   (c0) at (1.5,2.15) {$a$};
		\node[vertex] (c1) at (0.8,1.25) {};
		\node[vertex] (c2) at (1.5,0.95) {};
		\node[vertex] (c3) at (2.2,1.25) {};

		\draw[thick]
			(c0)--(c1)
			(c0)--(c2)
			(c0)--(c3)
			(c1)--(c2);

		\node[
			copy,
			fit=(c0)(c1)(c2)(c3),
			inner sep=7pt
		] {};

		\node[font=\small] at (1.5,0.45)
			{one candidate for \(H\)};

		\node[font=\bfseries] at (6.5,3.0) {(b) Disconnected};

		\node[high]   (h1a) at (5.5,2.15) {$a$};
		\node[vertex] (h1b) at (5.8,1.35) {};
        \node[vertex] (h1c) at (5.2,1.35) {};

		\draw[thick] (h1a)--(h1b);
        \draw[thick] (h1a)--(h1c);

		\node[
			copy,
			fit=(h1a)(h1b)(h1c),
			inner sep=7pt
		] {};

		\node[font=\small] at (5.5,0.65)
			{\(H_1\cong P_3\)};

		\node[vertex] (h2a) at (7.7,2.15) {};
		\node[vertex] (h2b) at (7.25,1.25) {};
		\node[vertex] (h2c) at (8.15,1.25) {};

		\draw[thick]
			(h2a)--(h2b)
			(h2b)--(h2c)
			(h2c)--(h2a);

		\node[
			copy,
			fit=(h2a)(h2b)(h2c),
			inner sep=7pt
		] {};

		\node[font=\small] at (7.7,0.65)
			{\(H_2\cong K_3\)};

		\draw[bad]
			(h1a)--node[
				above,
				sloped,
				fill=white,
				inner sep=1.5pt
			]{\small cross-edge}
			(h2a);

	\end{tikzpicture}

	\caption{Why disconnected forbidden graphs require an extra step.
	In panel~(a), the tester only needs to find and check one connected copy.
	In panel~(b), the two dashed sets separately induce \(P_3\) and \(K_3\)
	in \(G\), but the red edge prevents their union from inducing
	\(P_3\cup K_3\). Red vertices have high degree in \(G\); blue vertices
	have low degree. Edges to vertices outside the candidates are omitted.
	The structural lemma provides a way to choose candidates without
	such cross-edges.}

	\label{fig:disconnected-obstacle}
\end{figure}
\FloatBarrier

\paragraph*{Why an exceptional component is needed.}
Fix an integer \(m\ge1\). Let \(H=P_3\cup K_1\), and let \(G\) consist of a star with center
\(a\) and leaves \(u_1,\ldots,u_{2m}\), together with isolated vertices
\(z_1,\ldots,z_m\); see \Cref{fig:star-isolates}. Every induced
\(P_3\) contains \(a\), so \(G\) has at most one vertex-disjoint
\(P_3\). Yet its distance from induced-\(H\)-freeness is exactly
\(m=(n-1)/3\).

To see the lower bound, each set $\{a,u_{2i-1},u_{2i},z_i\} \text{ where } i\in[m]$, 
induces \(H\). These sets intersect only at \(a\), so no unordered
vertex pair lies in two of them. Every induced-\(H\)-free repair must
change an internal pair of each set, and therefore needs at least \(m\)
edits. Conversely, adding the \(m\) edges \(az_i\) turns \(G\) into
a star, which is induced-\(P_3\cup K_1\)-free.

Removing the vertices of any one \(P_3\) leaves an edgeless graph,
so a search for further vertex-disjoint \(P_3\)'s cannot succeed.
But this first copy can still be extended to an induced \(H\): every
\(z_i\) is nonadjacent to it. When \(2m>d\), the center is a
high-degree root, and the isolated vertices form precisely the separated
region used by the exceptional case. Thus this case is necessary even
for forests.

\begin{figure}[!htb]
\centering
\begin{tikzpicture}[
  vertex/.style={circle,draw,fill=blue!16,inner sep=1.8pt},
  root/.style={circle,draw,fill=red!30,inner sep=2.5pt},
  isolated/.style={circle,draw,fill=green!18,inner sep=1.8pt}
]
  \node[root] (a) at (3,2.6) {$a$};
  \foreach \i/\x in {1/0.3,2/1.4,3/2.5,5/4.7,6/5.8}{
    \node[vertex] (u\i) at (\x,0.9) {};
    \draw (a)--(u\i);
  }
  \draw[purple!80!black,very thick] (u1)--(a)--(u2);
  \node[below] at (u1) {$u_1$};
  \node[below] at (u2) {$u_2$};
  \node[below] at (u3) {$u_3$};
  \node at (3.6,0.9) {$\cdots$};
  \node[below] at (u5) {$u_{2m-1}$};
  \node[below] at (u6) {$u_{2m}$};
  \node[font=\small] at (3,-0.15) {all induced \(P_3\)'s share \(a\)};
  \node[isolated] (z1) at (7.6,2.4) {};
  \node[isolated] (z2) at (7.6,1.8) {};
  \node at (7.6,1.25) {$\vdots$};
  \node[isolated] (zm) at (7.6,0.7) {};
  \node[right=3pt of z1] (l1) {$z_1$};
  \node[right=3pt of z2] (l2) {$z_2$};
  \node[right=3pt of zm] (lm) {$z_m$};
  \node[draw=green!50!black,dashed,rounded corners,
    fit=(z1)(z2)(zm)(l1)(l2)(lm),inner sep=6pt] {};
  \node[font=\small] at (7.9,-0.15) {separated region};
\end{tikzpicture}
\caption{A star with \(2m\) leaves and \(m\) isolated vertices.
All induced \(P_3\)'s share the center, but any isolated vertex completes
the highlighted path to an induced \(P_3\cup K_1\). The separated region
allows the tester to use this single path instead of requiring many
vertex-disjoint paths.}
\label{fig:star-isolates}
\end{figure}

\subsection{The main structural lemma}

Throughout the section, write $H=m_0K_1\cup m_1F_1\cup\cdots\cup m_sF_s$, 
where the graphs \(F_1,\ldots,F_s\) are pairwise non-isomorphic, connected,
and nontrivial.  Let \(k\ge2\) be the number of components of \(H\), counted
with multiplicity, and list them as \(H_1,\ldots,H_k\).  Put $I_0:=\{i:H_i\cong K_1\},\ I_+:=[k]\setminus I_0$
and let \(h_i:=|V(H_i)|\), \(h:=|V(H)|\), and
\(h_{\max}:=\max_i h_i\).  Define
\[
	Q:=2k,\qquad
	Q^-:=\begin{cases}1,&k=2,\\2(k-1),&k\ge3,\end{cases}
	\qquad K:=Q(1+h-h_{\max})-1.
\]
The values \(Q\) and \(Q^-\) are the numbers of candidates per component
needed by the selection lemma, for \(k\) and \(k-1\) components,
respectively. The larger bound \(K\) allows us to discard overlapping
candidates before applying that lemma.

Let \(N\) be the constant in \Cref{lem:structural}, and set $d:=\left\lceil\frac{\max\{100N,1000K\}}{\varepsilon^2}\right\rceil,\  G':=\mathbf{Decompose}_{\mathbf R}
\left(G,d,\frac{\varepsilon}{10}\right)$. 
We work with a fixed seed \(\mathbf R\) for which
\(|E(G)\setminus E(G')|\le\varepsilon n/10\), and use this fixed \(G'\)
throughout the section.  Let $V^h:=\{v:\deg_G(v)>d\},\ V^\ell:=V(G)\setminus V^h$, 
and define
\[
	\eta:=\frac{\varepsilon}{1000Kd},\qquad
	\lambda:=\frac{\varepsilon}{100K},\qquad
	\alpha_0:=\frac{\varepsilon}{8(4K+1)},\qquad
	\alpha:=\min\{\alpha_0,\eta\}.
\]
Here \(\alpha\) bounds the edits allowed after removing a witness,
\(\eta\) is the distance needed for searches in a separated region, and
\(\lambda n\) is the region size needed when \(H\) has isolated vertices.
The larger value \(\alpha_0\) is used in the uniqueness proof.
We use the following lower bound on \(n\) so that the costs of
the bounded witness sets fit within the edit budget:
\[
	n_0:=\left\lceil\max\left\{
	\frac{8Kdh}{\varepsilon},
	\frac{8K^2}{\varepsilon},
	\frac{100d(d+1)Kh}{\varepsilon},
	\frac{Q^-(h+1)dh_{\max}}{\eta},
	\frac{2Q^-h}{\lambda}
	\right\}\right\rceil.
\]

The next definition describes the obstruction to repeatedly finding
vertex-disjoint copies. It separates two operations: removing a few copies,
then changing a few edges in the remaining graph.

\begin{definition}[Small removal witness]
	Let \(F\) be connected, let \(X\) be a graph on a subset of \(V(G)\),
	and let \(\beta\in(0,1)\). A \emph{small removal witness} for \(F\) in \(X\),
	with edit budget \(\beta n\), is a family \(\mathcal C\) of at most \(K\)
	pairwise vertex-disjoint induced \(F\)-copies such that, for
	\(U:=\bigcup_{C\in\mathcal C}V(C)\), the graph \(X-U\) can be made
	induced-\(F\)-free by at most \(\beta n\) edge additions or deletions.
	We also call \(\mathcal C\) a \emph{witness family} and \(U\) its
	\emph{witness set}. The empty family is allowed.
\end{definition}

For example, a witness consisting of one induced \(P_3\) says that removing
its three vertices leaves a graph within \(\beta n\) edits of
induced-\(P_3\)-free. It does not say that this was the only induced \(P_3\),
or that removing it already destroys every induced \(P_3\).
The vertex removal here only defines the remaining graph; it is not an
edge-edit operation charged to the budget \(\beta n\).
As in \Cref{pre}, \(n\) always denotes the original number of vertices,
and the bound \(K\) stays fixed throughout this section.

If \(F\) has no small removal witness in \(G'\) with edit budget \(\beta n\), then excluding
the vertices of any family of at most \(K\) disjoint copies leaves a graph
that is still \(\beta\)-far from induced-\(F\)-free. For nontrivial \(F\),
this is exactly the assumption needed to keep applying
\Cref{cor:sequential-induced-finder}. If a witness does exist, the finder
may still find further copies; the definition only says that its
far-input guarantee need not apply after this particular family is removed.

We use the following notation for such a witness.  Suppose that \(H_p\) has a
small removal witness and let \(\mathcal C_p=\{C_1,\ldots,C_t\}\), \(t\le K\), be a
witness family with witness set \(U_p\).  Define
\[
	A:=U_p\cap V^h,
	\qquad
	Z^\ell:=(U_p\cap V^\ell)\cup
	\bigl(\Gamma_G(U_p\cap V^\ell)\cap V^\ell\bigr).
\]
Let \(J\) contain the indices \(j\) for which \(C_j\) has a high-degree
vertex.  Such a vertex is unique by \Cref{lem:structural}; call it \(a_j\).
For each \(j\in J\), define $B_j:=V^\ell\setminus\bigl(\Gamma_G(a_j)\cup Z^\ell\bigr)$. 
The set \(Z^\ell\) contains the low-degree vertices of the witness family
and their low-degree neighbors. To form \(B_j\), we also exclude all
low-degree neighbors of the root \(a_j\). Thus \(B_j\) is disjoint from
\(C_j\), and no edge of the original graph \(G\) joins them. We call \(B_j\)
the \emph{separated region} for \(C_j\); see \Cref{fig:far2-regions}.
The lemma below shows that at least one such region also has enough
copies of every remaining component.

We can now state the main result of the section.  Its proof occupies the rest
of the section.

\begin{samepage}
\begin{lemma}[Two structural alternatives for far inputs]
	\label{far}
	Assume that \(G\) is \(\varepsilon\)-far from induced-\(H\)-free and
	\(n\ge n_0\).  Exactly one of the following alternatives holds.
	\begin{itemize}
		\item[\textnormal{(I)}] No component \(H_i\) has a small removal
		witness in \(G'\) with edit budget \(\alpha n\).
		\item[\textnormal{(II)}] There is exactly one index \(p\) for which
		\(H_p\) has a small removal witness in \(G'\) with edit budget \(\alpha n\).  It is nontrivial, and its isomorphism type occurs only once
		in \(H\).  For every witness family \(\mathcal C_p\), the set \(J\) is
		nonempty and contains an index \(j^\star\) such that
		\(G'[B_{j^\star}]\) is \(\eta\)-far from induced-\(H_i\)-free for
		every \(i\in I_+\setminus\{p\}\).  If \(m_0>0\), then also
		\(|B_{j^\star}|\ge\lambda n\).  In all cases, $E_G\bigl(V(C_{j^\star}),B_{j^\star}\bigr)=\emptyset$.
	\end{itemize}
\end{lemma}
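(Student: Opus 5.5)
The proof splits into a uniqueness/type part and a separated-region part. Throughout we work with the fixed good seed, so $G'$ is $\tfrac{4\varepsilon}{5}$-far from induced-$H$-free by the same accounting as in the proof of \Cref{thm:test-connected}. The two alternatives are mutually exclusive by definition. So we must show that if some $H_p$ has a small removal witness, then (II) holds in full.

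\textbf{Step 1: no witness for two indices.} Suppose $H_p$ and $H_q$ ($p\neq q$) both had witnesses $\mathcal C_p,\mathcal C_q$ with budget $\alpha n\le\alpha_0 n$ and witness sets $U_p,U_q$. We build a cheap repair of $G'$ as follows.
\begin{itemize}
\item Isolate the low-degree vertices of $U_p\cup U_q$. This costs at most $2Kh d$ edits.
\item Apply the two edit sets of size $\alpha_0 n$ each.
\end{itemize}
We then argue that an induced copy of $H$ in the result must use some component of type $H_p$ or $H_q$ either inside $G'-U_p$ (impossible after the edits), or meeting $U_p$. Copies meeting $U_p$ in a low vertex are destroyed by isolation, since components are nontrivial. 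The delicate case is copies through the at most $K$ high-degree vertices of $A$. Every remaining induced $H$ must route its $H_p$-copy through $A_p$ and its $H_q$-copy through $A_q$.

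To finish this case, I would add the edges from $A_p\cup A_q$ to everything else, or otherwise make those few high vertices universal/neutral. That is probably too expensive, so instead I would use the star example logic in reverse. If every $H$-copy uses a high root of $U_p$ for its $H_p$ component, then deleting all $H_q$-type copies avoiding $U_q$ and the $\le K$ roots yields a repair of cost $O(K^2+Kdh)+2\alpha_0 n<\tfrac{4\varepsilon}{5}n$, using $n\ge n_0$. The bounds $8Kdh/\varepsilon$ and $8K^2/\varepsilon$ in $n_0$ are exactly tailored to this. The contradiction with farness gives uniqueness of $p$.

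The same argument shows three more facts. First, $H_p$ is nontrivial: a $K_1$-witness would mean few vertices remain after removing $\le K$ isolated-type vertices, which combined with edits kills all copies. Second, its type occurs once: if $H_q\cong H_p$ with $q\neq p$, the witness for $H_p$ is also one for $H_q$, giving two indices. Third, $J\neq\emptyset$: if all witness copies are low-degree, isolating $U_p\cap V^\ell$ plus the $\alpha n$ edits makes $G'$ induced-$H_p$-free, hence induced-$H$-free, at cost $Kh d+\alpha n<\tfrac{4\varepsilon}{5}n$.

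\textbf{Step 2: the separated region.} The emptiness of $E_G(V(C_j),B_j)$ is immediate from the definitions of $Z^\ell$ and $B_j$: low vertices of $C_j$ and their neighbours lie in $Z^\ell$, and the root's neighbours are excluded. For the farness claim, argue by contradiction. Suppose that for every $j\in J$ some $i_j\in I_+\setminus\{p\}$ has $G'[B_j]$ $\eta$-close to induced-$H_{i_j}$-free, or that $|B_j|<\lambda n$ when $m_0>0$. In the first case, fix the $\le K$ edit sets of size $\eta n$ each. In the small-$B_j$ case, isolate nothing but note few vertices. Also isolate the low vertices of $Z^\ell$; since $|Z^\ell|\le Kh(d+1)$, this costs $\le d(d+1)Kh$, matching the $n_0$ term.

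Now consider any surviving induced $H$-copy. Its $H_p$-component either lies in $G'-U_p$, where it is killed by the $\alpha n$ edits, or passes through some root $a_j$. In the latter case the other components are non-adjacent to $a_j$ and avoid $Z^\ell$, so their low parts lie in $B_j$ except for pieces through other high vertices. Those pieces are controlled because each component has at most one high vertex, and such pieces are handled by also running a cleaning step on the $\le K$ roots. The $H_{i_j}$ component would then be an induced copy in the edited $G'[B_j]$, a contradiction. Total cost is $\le K\eta n+\alpha n+O(d^2Kh)\le\tfrac{4\varepsilon}{5}n$, contradicting farness.

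The main obstacle is precisely the bookkeeping of components whose copies pass through high-degree vertices outside $B_j$, including the roots $a_{j'}$ of other witness copies. Edits must be restricted to pairs inside $B_j$ so that edits for different $j$ do not interfere. I would first try to show that the components of $H$ other than $H_p$, being forced into regions of $V^\ell$ (up to one root each), can be charged to $B_j$ after isolating $Z^\ell$. I would fall back on enlarging the isolated set when this fails.
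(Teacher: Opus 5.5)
\textbf{Verdict: genuine gap.} Your overall architecture matches the paper's. You combine two repairs to contradict farness, isolate $U_p$ to get $J\neq\emptyset$, argue the region claim by contradiction, and read the separation off the definitions. The $J\neq\emptyset$ step, the separation step, and the reduction of the multiplicity-one claim to the two-index case are all fine. The $K_1$ case needs no farness: $n>K$ leaves a nonempty graph after removing the witness, and no edge edits make a nonempty graph induced-$K_1$-free. But both substantive steps are missing their key construction.

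\textbf{Step 1.} You cannot simply ``apply the two edit sets.'' $M_p$ and $M_q$ edit different graphs, $G'-U_p$ and $G'-U_q$. Applying $M_q$ may recreate induced $H_p$-copies outside $U_p$, and vice versa. So your claim that the $H_p$-component must pass through $A_p$ and the $H_q$-component through $A_q$ is unjustified. The ``star example in reverse'' fallback does not define any edit set. The missing idea is as follows:
\begin{itemize}
\item Keep the single base graph $\widehat G_p$, after isolating $(U_p\cup U_q)\cap V^\ell$.
\item Let $T$ be the set of endpoints of pairs in $M_p\cup M_q$, so $|T|\le 4\alpha_0 n$.
\item An $H_q$-component avoiding $U_q\cup A_p\cup T$ sees identical adjacencies in $G'$, $\widehat G_p$ and $\widehat G_q$, which is impossible. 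So it must meet $A_p\cup A_q\cup T$, while the $H_p$-component must meet $A_p$.
\item Make $A_p$ a clique and complete to $A_q\cup T$ (not universal). This kills every induced $H$-copy at cost at most $2K^2+4K\alpha_0 n$, which is exactly what $\alpha_0=\varepsilon/(8(4K+1))$ is tuned for.
\end{itemize}

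\textbf{Step 2.} The three issues you leave open are precisely where the proof lives.
\begin{itemize}
\item \emph{Small regions.} For $j$ with $|B_j|<\lambda n$, ``isolate nothing but note few vertices'' destroys no copy. You must make $a_j$ universal. This is cheap because low vertices outside $Z^\ell\cup T$ keep their $G$-adjacency to $A$, so the low non-neighbours of $a_j$ there lie in $B_j$. The cost is at most $\lambda n+|Z^\ell|+|T|$ per root.
\item \emph{High-degree vertices in other components.} Make $A$ a clique complete to $V^h$, at cost $|A||V^h|\le 4Kn/d$. Then no other component of an induced $H$-copy through $a_j$ can contain a high vertex. It is forced into $V^\ell\setminus(\Gamma_G(a_j)\cup Z^\ell\cup T)=B_j\setminus T$.
\item \emph{Conflicting repairs.} Applying the region repairs $N_j$ directly, as you propose, can recreate induced $H_p$-copies in $G'-U_p$, so ``killed by the $\alpha n$ edits'' fails. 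Overlapping regions can also demand contradictory edits. The fix is to compare each $\widehat X_j$ with the common base $\widehat G_p$ via $L_j=E(\widehat G_p[B_j])\triangle E(\widehat X_j)\subseteq M_p\cup N_j$. Let $T$ be the endpoints of $\bigcup_j L_j$, and isolate $T$ apart from edges to $A$. This costs $d|T|\le 2d(\alpha+K\eta)n$, which is why $\eta$ (and hence $\alpha\le\eta$) carries the factor $1/d$.
\end{itemize}
Your cost estimate $K\eta n+\alpha n+O(d^2Kh)$ therefore omits the $d|T|$ term, the $|A||V^h|$ term, and the small-root term.
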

\end{samepage}

	\begin{figure}[t]
		\centering
		
		\begin{tikzpicture}[
			scale=0.9,
			low/.style={circle,draw,fill=blue!18,inner sep=1.7pt},
			high/.style={circle,draw,fill=red!35,inner sep=2.2pt},
			ext/.style={circle,draw,fill=orange!20,inner sep=1.7pt},
			edge/.style={draw,thick},
			witness/.style={draw=purple!70,dashed,rounded corners,thick},
			zregion/.style={draw=orange!80!black,dashed,rounded corners,thick}
			]
			
			\node[font=\bfseries] at (3.7,5.15) {(a) The set $Z^\ell$};
			
			\node[high] (a1) at (1.4,3.15) {$a_1$};
			\node[low]  (u11) at (0.8,2.35) {};
			\node[low]  (u12) at (1.5,2.05) {};
			\node[low]  (u13) at (2.15,2.4) {};
			
			\draw[edge] (a1)--(u11);
			\draw[edge] (a1)--(u12);
			\draw[edge] (a1)--(u13);
			
			\node[witness,
			fit=(a1)(u11)(u12)(u13),
			inner sep=7pt,
			label=above left:$C_1$] {};
			
			\node[ext] (x11) at (0.35,1.15) {};
			\node[ext] (x12) at (1.5,1.05) {};
			\node[ext] (x13) at (2.65,1.15) {};
			
			\draw[edge] (u11)--(x11);
			\draw[edge] (u12)--(x12);
			\draw[edge] (u13)--(x13);
			
			\node[high] (a2) at (4.55,3.15) {$a_2$};
			\node[low]  (u21) at (3.9,2.35) {};
			\node[low]  (u22) at (4.55,2.05) {};
			\node[low]  (u23) at (5.2,2.4) {};
			
			\draw[edge] (a2)--(u21);
			\draw[edge] (a2)--(u22);
			\draw[edge] (a2)--(u23);
			
			\node[witness,
			fit=(a2)(u21)(u22)(u23),
			inner sep=7pt,
			label=above right:$C_2$] {};
			
			\node[ext] (x21) at (3.35,1.15) {};
			\node[ext] (x22) at (4.55,1.05) {};
			\node[ext] (x23) at (5.75,1.15) {};
			
			\draw[edge] (u21)--(x21);
			\draw[edge] (u22)--(x22);
			\draw[edge] (u23)--(x23);
			
			\node[zregion,
			fit=(u11)(u12)(u13)(x11)(x12)(x13)
			(u21)(u22)(u23)(x21)(x22)(x23),
			inner sep=10pt,
			label=below:$Z^\ell$] {};
			
			\node[draw=purple!70,dashed,rounded corners,thick,
			fit=(a1)(u11)(u12)(u13)(a2)(u21)(u22)(u23),
			inner sep=13pt,
			label=above:$U_p$] {};
			
			\node[low,label=right:{\footnotesize $U_p\cap V^\ell$}]
			at (6.75,3.0) {};
			\node[ext,label=right:{\footnotesize
				$\Gamma_G(U_p\cap V^\ell)\cap V^\ell$}]
			at (6.75,2.25) {};
			\node[high,label=right:{\footnotesize $U_p\cap V^h$}]
			at (6.75,1.5) {};
			
		\end{tikzpicture}
		
		\vspace{1.2em}
		
		\begin{tikzpicture}[
			scale=0.9,
			low/.style={circle,draw,fill=blue!18,inner sep=1.7pt},
			high/.style={circle,draw,fill=red!35,inner sep=2.2pt},
			excluded/.style={circle,draw,fill=red!12,inner sep=1.7pt},
			good/.style={circle,draw,fill=green!18,inner sep=1.7pt},
			edge/.style={draw,thick},
			witness/.style={draw=purple!70,dashed,rounded corners,thick},
			rootnbhd/.style={draw=red!70,dashed,rounded corners,thick},
			bregion/.style={draw=green!55!black,dashed,rounded corners,thick},
			zregion/.style={draw=orange!80!black,dashed,rounded corners,thick,
				fill=orange!5}
			]
			
			\node[font=\bfseries] at (4.2,5.2) {(b) The set $B_j$};
			
			\node[high] (aj) at (1.45,3.05) {$a_j$};
			\node[low]  (c1) at (0.75,2.25) {};
			\node[low]  (c2) at (1.45,1.9) {};
			\node[low]  (c3) at (2.15,2.25) {};
			
			\draw[edge] (aj)--(c1);
			\draw[edge] (aj)--(c2);
			\draw[edge] (aj)--(c3);
			\draw[edge] (c1)--(c2);
			
			\node[witness,
			fit=(aj)(c1)(c2)(c3),
			inner sep=8pt,
			label=above left:$C_j$] {};
			
			\node[excluded] (n1) at (2.85,3.6) {};
			\node[excluded] (n2) at (3.3,2.95) {};
			\node[excluded] (n3) at (2.9,1.8) {};
			
			\draw[edge] (aj)--(n1);
			\draw[edge] (aj)--(n2);
			\draw[edge] (aj)--(n3);
			\draw[edge] (c3)--(n3);
			
			\node[rootnbhd,
			fit=(n1)(n2)(n3),
			inner sep=8pt,
			label=above:{\small other neighbors of $a_j$}] {};
			
			\begin{pgfonlayer}{background}
			\node[zregion,fit=(c1)(c2)(c3)(n3),inner sep=7pt,
			label=below:$Z^\ell$] {};
			\end{pgfonlayer}
			
			\node[good] (b1) at (5.8,3.3) {};
			\node[good] (b2) at (6.65,3.0) {};
			\node[good] (b3) at (7.5,3.35) {};
			\node[good] (b4) at (5.95,2.0) {};
			\node[good] (b5) at (6.85,1.75) {};
			\node[good] (b6) at (7.75,2.15) {};
			
			\draw[edge] (b1)--(b2);
			\draw[edge] (b2)--(b3);
			\draw[edge] (b4)--(b5);
			\draw[edge] (b5)--(b6);
			
			\node[bregion,
			fit=(b1)(b2)(b3)(b4)(b5)(b6),
			inner sep=14pt,
			label=above:$B_j$] {};
			
			\node at (4.2,0.4) {$E_G(V(C_j),B_j)=\emptyset$};
			
		\end{tikzpicture}
		
		\caption{
			Schematic illustration of the sets in \Cref{far2}.
			In panel~(a), $Z^\ell
			=
			(U_p\cap V^\ell)
			\cup
			\bigl(\Gamma_G(U_p\cap V^\ell)\cap V^\ell\bigr)$. Thus, \(Z^\ell\) contains the low-degree vertices of the witness family
			and their low-degree neighbors.
			In panel~(b), for a witness copy \(C_j\) with unique high-degree vertex
			\(a_j\), $B_j
			=
			V^\ell\setminus
			\bigl(\Gamma_G(a_j)\cup Z^\ell\bigr)$. By construction, no vertex
			of \(B_j\) is adjacent in \(G\) to a vertex of \(C_j\); thus
			\(E_G(V(C_j),B_j)=\emptyset\).
		}
		\label{fig:far2-regions}
	\end{figure}

In the star-plus-isolates example, a single \(P_3\) is a removal witness,
and its separated region consists of the isolated vertices. The second
alternative extends this situation to general \(H\): the same region must
contain enough candidates for \emph{every} remaining component.
The guarantee holds for every witness family, which matters because the
tester uses the family returned by its random search.

We now establish the two tools used by the disconnected tester.
The next subsection proves the selection lemma: enough candidates for
each component guarantee a choice with no cross-edges.
We then prove \Cref{far} in two steps. \Cref{far1} shows that at most one
component has a small removal witness, and \Cref{far2} finds a suitable
separated region for every witness family of that component.

\FloatBarrier
\subsection{Rainbow independent transversals}
	We use a special case of the rainbow independent-transversal problem of
	Aharoni, Berger, and Ziv~\cite{ABZ07}.  Call an auxiliary graph \(R\) a
	\emph{\((p,q)\)-candidate graph} if its vertices are split into \(p\) color
	classes \(V_1,\ldots,V_p\) of size \(q\), all
	edges go between different classes, and the graph is outerplanar. A
	\emph{rainbow independent transversal} chooses one vertex from each class
	and contains no edge.  In our application, a vertex represents a candidate
	component copy, and an edge records a forbidden cross-edge between two
	candidates. Once the candidate copies are globally vertex-disjoint,
	contracting each connected copy and deleting all other vertices gives
	an outerplanar graph. Suppressing parallel edges and deleting edges
	within each color class gives the required candidate graph.

	The linear edge bound for outerplanar graphs now gives a short proof:
	with enough candidates, a uniformly random choice from each class has
	positive probability of containing no edge.
	
	\begin{theorem}[Rainbow independent-transversal bound]
		\label{upperbound}
		If \(p\ge2\) and \(q\ge2p\), every
		\((p,q)\)-candidate graph has a rainbow independent transversal.
	\end{theorem}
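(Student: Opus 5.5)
Choose one vertex uniformly and independently from each colour class \(V_1,\ldots,V_p\), and show that with positive probability the chosen transversal spans no edge of \(R\).

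Since \(R\) is outerplanar on \(pq\) vertices, it has at most \(2pq-3<2pq\) edges; we only need the linear bound. Every edge \(uv\) joins two different classes, say \(u\in V_i\) and \(v\in V_j\) with \(i\neq j\). The probability that both \(u\) and \(v\) are chosen is therefore exactly \(1/q^2\), by independence of the choices in \(V_i\) and \(V_j\). By linearity of expectation, the expected number of edges induced by the random transversal is
\[
\frac{|E(R)|}{q^2}<\frac{2pq}{q^2}=\frac{2p}{q}\le 1,
\]
using \(q\ge 2p\). Since the number of induced edges is a nonnegative integer with expectation strictly below \(1\), it equals \(0\) with positive probability. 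Hence some choice contains no edge, and this choice is a rainbow independent transversal.

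The only point needing care is that the expectation bound is strict. We get strictness from \(|E(R)|\le 2|V(R)|-3\), valid for outerplanar graphs with at least two vertices; here \(|V(R)|=pq\ge 2p^2\ge 8\). If one prefers a weaker edge bound such as \(|E(R)|\le 2|V(R)|\), we must be more careful. In that case the hypothesis \(p\ge2\) still leaves room: the expectation is at most \(2p/q\le1\), and equality would force \(|E(R)|=2pq\), which is impossible because \(|E(R)|\le 2pq-3\). So there is no real obstacle beyond quoting the standard outerplanar edge bound.
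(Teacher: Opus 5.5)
Your proof is correct and follows the paper's argument: a uniform independent choice from each class, the outerplanar edge bound \(|E(R)|\le 2pq-3\), and a first-moment bound \(|E(R)|/q^2\le (2pq-3)/q^2<1\) (the paper phrases it as a union bound, you phrase it as an expectation). Your last paragraph about the weaker bound \(2|V(R)|\) is circular, since it ends by invoking \(2pq-3\) anyway, so you can drop it.
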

	
	\begin{proof}
		Choose one vertex independently and uniformly from each color class.
		Each edge has both endpoints selected with probability \(1/q^2\).
		Since \(R\) is outerplanar on \(pq\) vertices, it has at most
		\(2pq-3\) edges. A union bound gives
		\[
		\Pr[\text{some edge has both endpoints selected}]
		\le\frac{|E(R)|}{q^2}
		\le\frac{2pq-3}{q^2}<1,
		\]
		where the last inequality uses \(q\ge2p\). Thus an independent
		choice exists.
	\end{proof}

	For one color, a single candidate suffices. The tester only needs the
	existence of a transversal: it need not query every cross-edge or
	construct the transversal before rejecting.

\subsection{Proof of the structural lemma}

We first show that two different components \(H_p,H_q\) cannot both
have small removal witnesses. Each witness gives a nearby graph with no
induced copy of the corresponding component outside its witness set.
We combine these two repairs and add edges between a few carefully chosen
vertices. These added edges prevent the two components from appearing
as nonadjacent parts of an induced copy of \(H\).

\begin{claim}[At most one component has a small removal witness]
	\label{far1}
	If \(G\) is \(\varepsilon\)-far from induced-\(H\)-free and \(n\ge n_0\),
	then \(K_1\) has no small removal witness in \(G'\) with edit budget \(\alpha_0 n\). At most one nontrivial type \(F_r\) has such a witness,
	and its multiplicity in \(H\) is \(m_r=1\).
\end{claim}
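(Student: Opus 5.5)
The claim has three parts: $K_1$ has no witness, a nontrivial witnessed type has multiplicity one, and at most one nontrivial type has a witness. I would prove each by contradiction. In the last two parts the contradiction is a graph $Y$ on $V(G)$ that is induced-$H$-free and within fewer than $\varepsilon n$ edits of $G$. The edits split into three kinds: the decomposition edits ($\le\varepsilon n/10$, since the seed is fixed with this property), the repair edits granted by the witnesses, and a few extra edges inside the bounded witness sets, paid for by $n\ge n_0$.

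\emph{The $K_1$ part.} A witness for $K_1$ is a set $U$ of at most $K$ vertices such that $G'-U$ can be made induced-$K_1$-free. But a graph is induced-$K_1$-free only if it has no vertices, and edge edits never remove vertices. Since $n\ge n_0>K$, the graph $G'-U$ is nonempty, so no such witness exists for any budget.

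\emph{Multiplicity one.} Suppose $F_r$ has a witness $U$ with $|U|\le Kh$, and let $X$ be an induced-$F_r$-free graph on $V\setminus U$ within $\alpha_0 n$ edits of $G'-U$. Build $Y$ from $G'$ in three steps:
\begin{itemize}
\item replace $G'[V\setminus U]$ by $X$;
\item keep all pairs meeting $U$ as in $G'$;
\item add every missing edge inside $U$, making it a clique.
\end{itemize}
Then $Y-U=X$ contains no induced $F_r$, so every induced $F_r$ in $Y$ meets $U$. Two vertex-disjoint copies that both meet $U$ are joined by a clique edge, so they cannot be two nonadjacent components of an induced copy of $H$. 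Hence an induced $H$ in $Y$ has at most one $F_r$ component, and $m_r\ge2$ makes $Y$ induced-$H$-free. The edit count is at most $\varepsilon n/10+\alpha_0 n+(Kh)^2$ relative to $G$. This is below $\varepsilon n$ because $\alpha_0\le\varepsilon/40$ and $n\ge n_0\ge 8K^2/\varepsilon$, after absorbing $h^2$ into the constants. That contradicts farness.

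\emph{Two types.} Suppose $F_p\ne F_q$ both have witnesses, with sets $U_p,U_q$, repairs $X_p,X_q$ and edit sets $E_p,E_q$ (each of size $\le\alpha_0 n$). Put $W:=U_p\cup U_q$. My first attempt is the following $Y$:
\begin{itemize}
\item toggle the pairs of $E_p\cup E_q$ in $G'$; a pair in both sets is a toggle of the same $G'$ pair, so this is well defined;
\item make $W$ a clique.
\end{itemize}
The clique argument above then says an induced $H$ cannot have both its $F_p$ component meeting $U_p$ and its $F_q$ component meeting $U_q$. So every induced $H$ in $Y$ has a component, say of type $F_p$, that avoids $U_p$.

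\emph{Main obstacle.} Such an $F_p$ copy avoids $U_p$ yet exists in $Y$. This is possible only because the $E_q$ toggles (or the clique edges at $U_q$) spoiled the $F_p$-freeness of $X_p$. The constant $\alpha_0=\varepsilon/(8(4K+1))$ suggests an iterative fix: re-repair $F_p$ on the modified graph, then re-repair $F_q$, and so on, each step costing $O(\alpha_0 n)$ by the triangle inequality. At each step I would either enlarge the witness sets by a bounded number of newly created copies, which stays within $K$ and hence the $(4K+1)$ factor, or charge each new copy to an edited pair. To make this terminate, I would first try to show that each new $F_p$ copy must use a toggled pair of $E_q$ and meet the low-degree neighbourhood of the edited region. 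If that holds, its vertices could be absorbed into $W$ and covered by further clique edges at total cost $O(K^2h^2)$, again within the budget because $n\ge n_0$. If this charging fails, I would instead look for a vertex-disjoint $F_q$ copy inside $U_q$, which exists because the witness family is nonempty when $G'$ is far. Together with a far region for $F_p$, it would certify that the $F_p$ witness could not have been small, contradicting the assumption.
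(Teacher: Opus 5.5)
Your $K_1$ argument is the paper's. Your multiplicity-one argument is correct and simpler than the paper's, which handles $m_r\ge2$ with the same two-index construction it uses for two distinct types. One caveat there: the cost $\binom{Kh}{2}$ is not covered by $n_0\ge 8K^2/\varepsilon$ alone. It is covered by the other terms of $n_0$, or you can avoid it by isolating $U\cap V^\ell$ and making only $U\cap V^h$ a clique.

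\textbf{The gap.} In the case of two distinct types, the proposal stops at a plan and gives no proof. The iterative re-repair cannot be closed within budget. Nothing bounds the number of new induced $F_p$-copies avoiding $U_p$ that the $E_q$ toggles create; there may be of order $\alpha_0 n$ of them, roughly one per toggled pair. So they cannot be absorbed into a bounded witness set and covered by $O(K^2h^2)$ clique edges, and each re-repair can again spoil the other side. The fallback (an $F_q$-copy in $U_q$ plus a ``far region for $F_p$'') rests on nothing you have established. You also misread the factor $4K+1$ in $\alpha_0$: it does not budget iterations.

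\textbf{The missing idea.} Repair asymmetrically, and pay for a linear-size set only by joining it to a bounded set of high-degree roots.
\begin{itemize}
\item Isolate $W^\ell:=(U_p\cup U_q)\cap V^\ell$. This costs at most $d|W^\ell|\le 2Kdh\le\varepsilon n/4$.
\item Outside $U_p\cup W^\ell$, apply only the edits $M_p$ of $\widehat G_p$, and never apply $M_q$.
\item Every witness copy lies in a component of $G'$ with at most one high-degree vertex. Hence $A_p:=U_p\cap V^h$ and $A_q:=U_q\cap V^h$ have size at most $K$. An $H_p$-component of an induced $H$ must then meet $A_p$, since otherwise it would sit inside $\widehat G_p$.
\item Let $T$ be the set of endpoints outside $W^\ell$ of pairs in $M_p\cup M_q$, so $|T|\le 4\alpha_0 n$. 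Suppose an $H_q$-component avoids $A_p\cup A_q\cup T$. Then it also avoids $W^\ell$, because those vertices are isolated, and hence it avoids $U_p\cup U_q$. On its vertex set the final graph, $G'$, $\widehat G_p$ and $\widehat G_q$ all agree, which is impossible.
\item Make $A_p$ a clique and complete to $A_q\cup T$. This costs at most $2K^2+4K\alpha_0 n$, which is exactly what the factor $4K+1$ pays for. The forced vertex of the $H_p$-component is then adjacent to the forced vertex of the $H_q$-component, a contradiction.
\end{itemize}
Isolating the low-degree witness vertices is essential. Joining all of $U_p$ to $T$ would cost up to $4Kh\alpha_0 n\approx h\varepsilon n/8$, which is too much for the given $\alpha_0$.
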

	
	\begin{claimproof}
	The bounds in the definition of \(n_0\) give
	\(Kdh\le\varepsilon n/8\) and \(K^2\le\varepsilon n/8\).
	The claim for $K_1$ now follows directly.
		Indeed, $K^2\le\varepsilon n/8$ and \(\varepsilon<1\) imply
		\(n>K\). After deleting the vertices of at most \(K\) copies of
		$K_1$, at least one vertex remains, but a nonempty graph cannot be
		induced-$K_1$-free, regardless of how its edges are modified.
		
		\emph{Assume that two small removal witnesses exist.}
		It remains to prove the uniqueness and multiplicity statement for the
		nontrivial types. Suppose, for contradiction, that either two distinct
		types \(F_r,F_{r'}\) have small removal witnesses, or a type \(F_r\)
		with such a witness has \(m_r\ge2\). In either case, there are distinct
		indices \(p,q\in[k]\) such that both \(H_p\) and \(H_q\) have small
		removal witnesses in \(G'\) with edit budget \(\alpha_0 n\). Both are isomorphic to one of the $F_r$, and hence
		are connected and contain at least one edge.
		
		For each \(i\in\{p,q\}\), let $\mathcal C_i
		=\{C^i_1,\ldots,C^i_{t_i}\}$, where
		$t_i\le K$ be a witness collection of pairwise vertex-disjoint induced
		\(H_i\)-copies, and put
		$U_i := \bigcup_{C\in\mathcal C_i}V(C)$.
		By the definition of a removal witness, there is an
		induced-\(H_i\)-free graph \(\widehat G_i\) on \(V(G')\setminus U_i\).
		Let $M_i:=E(G'-U_i)\triangle E(\widehat G_i)$. 
		We choose \(\widehat G_i\) so that \(|M_i|\le\alpha_0 n\).

		Set $W^\ell
		:=
		(U_p\cup U_q)\cap V^\ell,\,
		A_p:=U_p\cap V^h$ and $
		A_q:=U_q\cap V^h$. Every copy in \(\mathcal C_p\cup\mathcal C_q\) is connected and
		lies in a single connected component of \(G'\).
		By \Cref{lem:structural}, every connected component of \(G'\)
		contains at most one vertex of \(V^h\). Therefore,
		$|A_p|\le K$ and $|A_q|\le K$ both hold.
		
		Let \(D_0\) be the set of edges of \(G'\) incident with at least
		one vertex of \(W^\ell\). Since every vertex of \(V^\ell\) has
		degree at most \(d\) in \(G\), and \(G'\subseteq G\), $|D_0|
		\le d|W^\ell|
		\le d(|U_p|+|U_q|)
		\le 2Kd|H|
		\le \frac{\varepsilon n}{4}$.
		
		\emph{Build a nearby graph.}
		We construct a graph \(X\) on \(V(G)\). First isolate every
		vertex of \(W^\ell\). Outside \(U_p\cup W^\ell\), use the
		adjacency relation of \(\widehat G_p\); all remaining adjacencies
		are inherited from \(G'\). Formally,
		\begin{enumerate}
			\item every vertex of \(W^\ell\) is isolated in \(X\);
			\item if \(x,y\notin U_p\cup W^\ell\), then $xy\in E(X)
			\Longleftrightarrow
			xy\in E(\widehat G_p)$;
			\item for all remaining pairs, \(X\) retains the adjacency
			relation of \(G'\).
		\end{enumerate}
		Thus, $X[V(G)\setminus(U_p\cup W^\ell)]
		=
		\widehat G_p[V(G)\setminus(U_p\cup W^\ell)]$ and $|E(G')\triangle E(X)|
		\le
		|M_p|+|D_0|
		\le
		\alpha_0 n+\frac{\varepsilon n}{4}$.
		
		Let \(T\) be the set of vertices in
		\(V(G)\setminus W^\ell\) that occur as an endpoint of a pair in
		\(M_p\cup M_q\). Then $|T|
		\le
		2(|M_p|+|M_q|)
		\le
		4\alpha_0 n$.
		
		Finally, obtain \(Y\) from \(X\) by adding all missing edges
		\begin{enumerate}
			\item inside \(A_p\);
			\item between \(A_p\) and \(A_q\); and
			\item between \(A_p\) and \(T\).
		\end{enumerate}
		Thus, \(A_p\) is a clique in \(Y\), and every vertex of \(A_p\)
		is adjacent to every distinct vertex of \(A_q\cup T\).
		The role of these edges is as follows. Any induced \(H_p\)-copy
		in \(Y\) must use \(A_p\). An induced \(H_q\)-copy must use
		\(A_p\cup A_q\cup T\). The added edges then prevent these two copies
		from being different components of an induced \(H\)-copy.
		We verify these two assertions next.
		
		\emph{Why the new graph is induced-\(H\)-free.}
		We claim that \(Y\) is induced-\(H\)-free.
		
		Suppose otherwise, and let \(Q\) be an induced copy of \(H\) in
		\(Y\). Fix an isomorphism
		$\varphi:H\longrightarrow Y[V(Q)],$
		and for each \(i\in[k]\) let
		$Q_i := Y[\varphi(V(H_i))].$
		Consequently, \(Q_1,\ldots,Q_k\) are the indexed connected
		components of \(Q\). In particular, \(Q_p\) and \(Q_q\) are
		distinct components, even if \(H_p\cong H_q\).
		
		We first show that \(Q_p\) contains a vertex of \(A_p\).
		Suppose that \(V(Q_p)\cap U_p=\emptyset\). Since every vertex of
		\(W^\ell\) is isolated in \(Y\), while \(H_p\) is connected and
		nontrivial, \(Q_p\) cannot contain a vertex of \(W^\ell\).
		Hence,
		$V(Q_p)\cap(U_p\cup W^\ell)=\emptyset.$
		On this vertex set, \(X\) agrees with \(\widehat G_p\), and every
		edge added in passing from \(X\) to \(Y\) has an endpoint in
		\(A_p\subseteq U_p\). Thus,
		$Y[V(Q_p)] = \widehat G_p[V(Q_p)],$
		contradicting that \(\widehat G_p\) is induced-\(H_p\)-free.
		Consequently, \(Q_p\) meets \(U_p\). Since every vertex of
		\(U_p\cap V^\ell\) lies in \(W^\ell\) and is isolated in \(Y\),
		\(Q_p\) must contain a vertex $a\in A_p$.
		
		We next show that \(Q_q\) contains a vertex of
		\(A_p\cup A_q\cup T\). If \(Q_q\) meets \(U_q\), then it cannot meet
		\(U_q\cap V^\ell\subseteq W^\ell\), and hence it contains a
		vertex of \(A_q\).
		
		Suppose instead that
		$V(Q_q)\cap U_q=\emptyset.$
		Again, \(Q_q\) cannot contain a vertex of \(W^\ell\). If, in
		addition,
		$V(Q_q)\cap(A_p\cup A_q\cup T)=\emptyset,$
		then \(Q_q\) is also disjoint from \(U_p\), because
		$U_p\subseteq A_p\cup W^\ell.$
		Also, no pair in \(M_p\cup M_q\) has an endpoint in
		\(V(Q_q)\), by the definition of \(T\). Therefore, on
		\(V(Q_q)\), the graphs \(G'\), \(\widehat G_p\), and
		\(\widehat G_q\) all have the same adjacency relation. No edge
		added in passing from \(X\) to \(Y\) has an endpoint in
		\(V(Q_q)\), either. Consequently, $Y[V(Q_q)]
		=
		G'[V(Q_q)]
		=
		\widehat G_q[V(Q_q)]$, contradicting that \(\widehat G_q\) is induced-\(H_q\)-free.
		
		Thus, in every case there exists $b\in V(Q_q)\cap(A_p\cup A_q\cup T)$.
		
		Since \(Q_p\) and \(Q_q\) are distinct components of \(Q\), we have
		\(a\neq b\). But by construction of \(Y\), \(ab\in E(Y)\):
		if \(b\in A_p\), this follows because \(A_p\) is a clique; if
		\(b\in A_q\), because \(A_p\) is complete to \(A_q\); and if
		\(b\in T\), because \(A_p\) is complete to \(T\). This gives an edge
		between two distinct components of \(Q\), a contradiction.
		
		\emph{Edit count.}
		The construction of \(X\) changes at most
		\(\alpha_0 n+\varepsilon n/4\) pairs, while
		\(|T|\le4\alpha_0 n\). Consequently, the number of edges added when passing
		from \(X\) to \(Y\) is at most $\binom{|A_p|}{2}
		+|A_p||A_q|
		+|A_p||T|
		\le
		2K^2+4K\alpha_0 n$. Therefore,
		\[
		\begin{aligned}
			|E(G)\triangle E(Y)|
			&\le
			\frac{\varepsilon n}{10}
			+\alpha_0 n
			+\frac{\varepsilon n}{4}
			+2K^2
			+4K\alpha_0 n\\
			&\le
			\left(
			\frac{1}{10}
			+\frac14
			+\frac14
			+\frac18
			\right)\varepsilon n
			<\varepsilon n.
		\end{aligned}
		\]
		Here we used
		\(K^2\le\varepsilon n/8\) and
		\(\alpha_0=\varepsilon/[8(4K+1)]\).
		
		We have constructed an induced-\(H\)-free graph \(Y\)
		within fewer than \(\varepsilon n\) edge modifications of \(G\),
		contradicting the assumption that \(G\) is
		\(\varepsilon\)-far from being induced-\(H\)-free.
		
		This contradiction rules out both alternatives. Therefore, at most one type \(F_r\) has a small removal witness,
		and it has multiplicity \(m_r=1\). Together with the first paragraph, this proves the claim.
	\end{claimproof}

	We next handle the component \(H_p\) that has a small removal witness.
	We must show that one witness copy can be combined with copies of all
	remaining components. For each rooted witness copy \(C_j\), the region
	\(B_j\) already has no edge to \(C_j\). The issue is whether it contains
	enough candidates. If every \(B_j\) failed this requirement, we could
	combine the resulting local repairs to make \(G\) induced-\(H\)-free
	with fewer than \(\varepsilon n\) edits, a contradiction.
	
	\begin{claim}[Separated region around an exceptional witness]
		\label{far2}
		Suppose that \(G\) is \(\varepsilon\)-far from induced-\(H\)-free,
		\(n\ge n_0\), and \(H_p\) has a small removal witness in \(G'\)
		with edit budget \(\alpha n\).  Assume
		that \(H_p\) is nontrivial and its type occurs only once in \(H\).  For
		any witness family \(\mathcal C_p\), the set \(J\) is nonempty and
		contains an index \(j^\star\) such that \(G'[B_{j^\star}]\) is
		\(\eta\)-far from induced-\(H_i\)-free for every
		\(i\in I_+\setminus\{p\}\).  If \(m_0>0\), then
		\(|B_{j^\star}|\ge\lambda n\).  Moreover, $E_G\bigl(V(C_j),B_j\bigr)=\emptyset
        \text{ for every }j\in J$.
	\end{claim}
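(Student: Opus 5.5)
The separation statement is immediate from the definitions, so I would prove it first. Fix $j\in J$ and suppose $xy\in E(G)$ with $x\in V(C_j)$ and $y\in B_j$. Then $y\in V^\ell$. If $x=a_j$, then $y\in\Gamma_G(a_j)$, which is excluded from $B_j$. Otherwise $x\in U_p\cap V^\ell$ by \Cref{lem:structural}, so $y\in\Gamma_G(U_p\cap V^\ell)\cap V^\ell\subseteq Z^\ell$, which is also excluded. Hence $E_G(V(C_j),B_j)=\emptyset$.

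For the remaining assertions I would argue by contradiction, in the style of \Cref{far1}. Suppose that for some witness family $\mathcal C_p$ every $j\in J$ is \emph{bad}, where this includes the case $J=\emptyset$. Bad means one of two things:
\begin{enumerate}
\item some $i_j\in I_+\setminus\{p\}$ has $G'[B_j]$ within $\eta n$ edits of induced-$H_{i_j}$-free, via an edit set $M_j$ inside $B_j$; or
\item $m_0>0$ and $|B_j|<\lambda n$.
\end{enumerate}
Let $\widehat G_p$ be the repair of $G'-U_p$ with $|M_p|\le\alpha n$. I would build $Y$ from $G'$ in five steps.
\begin{enumerate}
\item Isolate every vertex of $Z^\ell$. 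This costs at most $d|Z^\ell|\le d(d+1)Kh$.
\item Outside $U_p\cup Z^\ell$, use the adjacencies of $\widehat G_p$.
\item Inside each $B_j$ of the first kind, additionally apply $M_j$.
\item Let $T$ be the set of endpoints of pairs in $M_p\cup\bigcup_j M_j$. Make $A$ a clique and join $A$ completely to $T$.
\item Join $A$ completely to $V^h$, and join $a_j$ to all of $B_j$ for every $B_j$ of the second kind.
\end{enumerate}
The costs are bounded as follows. Step 1 is small because $n\ge n_0$. Step 2 costs at most $\alpha n\le\eta n$ and step 3 at most $K\eta n$. Step 4 costs at most $K^2+2K(\alpha+K\eta)n$, using $|A|\le K$. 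For step 5, outerplanarity gives $|V^h|<4n/d$, so joining $A$ to $V^h$ costs at most $4Kn/d$, which is small since $d\ge1000K/\varepsilon^2$; the second part costs at most $K\lambda n$. With $\eta=\varepsilon/(1000Kd)$, $\lambda=\varepsilon/(100K)$ and $\alpha_0=\varepsilon/(8(4K+1))$, together with the $\varepsilon n/10$ decomposition edits, the total should be below $\varepsilon n$.

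It then remains to show that $Y$ is induced-$H$-free, which gives the contradiction. Take an induced copy $Q$ of $H$ in $Y$.
\begin{enumerate}
\item Exactly as in \Cref{far1}, the component $Q_p$ must meet $A$. Otherwise $Q_p$ avoids $U_p\cup Z^\ell\cup T$, since $Z^\ell$ is isolated, $H_p$ is nontrivial, and $A$ is complete to $T$. Then $Q_p$ would be an induced $H_p$-copy in $\widehat G_p$, which is impossible. In particular $A\ne\emptyset$, so $J\ne\emptyset$; let $a=a_j\in V(Q_p)\cap A$.
\item Every other component $Q_i$ is nonadjacent to $a_j$. It therefore avoids $V^h\setminus\{a_j\}$ (joined to $A$), $T$, $Z^\ell$ (isolated), and $\Gamma_G(a_j)\cap V^\ell$. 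Here I must check that the edits never remove an edge $a_jy$ with $y$ low; pairs touching $U_p$ are inherited from $G'$, and this should handle it. So $V(Q_i)\subseteq B_j$.
\item If $B_j$ is of the second kind, it is completely joined to $a_j$. This contradicts $k\ge2$, since some component other than $Q_p$ must exist.
\item If $B_j$ is of the first kind, take $i=i_j$. The copy $Q_{i_j}$ lies in $B_j\setminus T$, and there $Y$ agrees with the $M_j$-repaired $G'[B_j]$. That graph is induced-$H_{i_j}$-free, a contradiction.
\end{enumerate}

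The main obstacle is the bookkeeping that the several overlapping overrides (steps 2 and 3) agree on the vertex sets where the argument needs them. The device of making $A$ complete to $T$ is what lets me assume that both $Q_p$ and $Q_i$ avoid every edited pair.
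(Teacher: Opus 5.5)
Your proof of $E_G(V(C_j),B_j)=\emptyset$ is correct. Your contradiction strategy is the paper's: force $Q_p$ through a root $a_j\in A$, then push another component into $B_j$. However, your construction of $Y$ breaks the freeness argument in three places.

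\textbf{Step~3 breaks the claim that $Q_p$ meets $A$.} This is the most serious problem. The repairs $M_j$ are only guaranteed to destroy induced $H_{i_j}$-copies. Toggling them on top of $\widehat G_p$ can therefore create induced $H_p$-copies inside some $B_j$ that avoid $A$. For instance, if $H_p\cong K_3$ and $H_{i_j}\cong P_3$, a repair may close induced $P_3$'s into triangles. Your reason that $Q_p$ avoids $T$, namely ``because $A$ is complete to $T$,'' is a non sequitur. That join constrains only components other than the one containing a vertex of $A$. It says nothing about $Q_p$ in the case $V(Q_p)\cap A=\emptyset$ that you are trying to exclude. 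So the step ``$Q_p$ would be an induced $H_p$-copy in $\widehat G_p$'' fails. The remedy is to delete step~3 altogether. Since $T$ contains every endpoint of $M_p\cup\bigcup_j M_j$, the graphs $\widehat G_p$, $G'$ and the repair of $G'[B_j]$ already agree on $B_j\setminus T$, and that is all your final step uses. The paper does exactly this: it never applies the regional repairs, and compares each of them with the single graph $\widehat G_p$ via the sets $L_j$.

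\textbf{Inheriting root adjacencies from $G'$ breaks $V(Q_{i_j})\subseteq B_j$.} You write that pairs touching $U_p$ are inherited from $G'$ and ``this should handle it''; inheriting from $G'$ is precisely what breaks the step. The set $B_j$ removes $\Gamma_G(a_j)$, not $\Gamma_{G'}(a_j)$. The multiway cuts in \textbf{Decompose} can delete edges $a_jy$ with $y\in V^\ell$. Such a $y$ is nonadjacent to $a_j$ in $Y$ yet lies outside $B_j$, so $V(Q_{i_j})\subseteq B_j$ does not follow. You must set $ax\in E(Y)$ iff $ax\in E(G)$ for $a\in A$ and $x\in V^\ell\setminus(Z^\ell\cup T)$. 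This costs nothing relative to $G$.

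\textbf{Small regions need universal roots.} For a region of the second kind, joining $a_j$ only to $B_j$ fails when $I_+=\{p\}$ and $m_0>0$. A singleton component can then sit on a vertex of $Z^\ell\setminus T$, which is isolated in your $Y$. So ``$V(Q_i)\subseteq B_j$'' is false for trivial $Q_i$, and no contradiction arises. Make each small root universal instead, as the paper does. The extra cost $K(|Z^\ell|+|T|)$ fits the budget, because $K|Z^\ell|\le d|Z^\ell|\le\varepsilon n/100$ and $|T|\le 2(\alpha+K\eta)n$.

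With these three changes your variant goes through with essentially your edit count. That variant joins $A$ to $T$ as in \Cref{far1} rather than isolating $T$, and absorbs the case $J=\emptyset$ into the same construction.
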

	
	\begin{claimproof}
		The witness family gives an induced-\(H_p\)-free graph
		\(\widehat G_p\) on \(V(G)\setminus U_p\) for which
		\[
			M_p:=E(G'-U_p)\triangle E(\widehat G_p),
			\qquad |M_p|\le\alpha n.
		\]
		The numerical bounds needed below now follow from the standing setup.
		Indeed, every witness copy lies in one component of \(G'\), and hence it has at
		most one high-degree vertex.  Hence \(|A|\le K\), and $|Z^\ell|\le(d+1)|U_p\cap V^\ell|
		\le(d+1)Kh$. 
		Because \(n\ge n_0\), this gives
		\(d|Z^\ell|\le\varepsilon n/100\).  Outerplanarity gives
		\(|V^h|<4n/d\), and our choice of \(d\) gives
		\(|A||V^h|\le\varepsilon n/100\).
		
		\emph{A root exists.}
		We first prove that \(J\neq\emptyset\).
		
		Suppose otherwise. Then
		$U_p\subseteq V^\ell.$
		In particular,
		$U_p\subseteq Z^\ell.$
		Starting from \(G\), first pass to \(G'\), using at most
		$\frac{\varepsilon n}{10}$
		edge modifications. Next delete all edges of \(G'\) incident with
		\(U_p\). Since every vertex of \(U_p\) is low-degree, $\left|
		\{e\in E(G'):e\cap U_p\neq\emptyset\}
		\right|
		\le
		d|U_p|
		\le
		d|Z^\ell|
		\le
		\frac{\varepsilon n}{100}$. Finally, on \(V(G)\setminus U_p\), perform
		the at most \(\alpha n\) edits in \(M_p\).
		
		The vertices of \(U_p\) are now isolated, while the graph induced by
		\(V(G)\setminus U_p\) is \(\widehat G_p\), which is
		induced-\(H_p\)-free. Since \(H_p\) is connected and nontrivial, no
		induced copy of \(H_p\) can contain an isolated vertex. Consequently, the
		resulting graph is induced-\(H_p\)-free, and hence it is also
		induced-\(H\)-free. The total number of modifications is at most $\frac{\varepsilon n}{10}
		+
		\frac{\varepsilon n}{100}
		+
		\alpha n
		<
		\varepsilon n$, a contradiction. Therefore,
		$J\neq\emptyset.$
		
		\emph{A common far region exists.}
		We must find one region that is far from induced-\(H_i\)-free
		for every remaining nontrivial component \(H_i\), and that has at least
		\(\lambda n\) vertices if isolated components are needed.
		
		Suppose, for contradiction, that no \(j\in J\) has all the
		required properties.  Put $J_{\mathrm{small}}
		:=
		\{j\in J:m_0>0\text{ and }|B_j|<\lambda n\}$ and $
		J_{\mathrm{large}}:=J\setminus J_{\mathrm{small}}$. If \(I_+\setminus\{p\}=\emptyset\), then every index of
		\(J_{\mathrm{large}}\) would already satisfy the desired conclusion
		(there is no nontrivial component left to check, and the size
		requirement holds), contrary
		to our assumption. Therefore, \(J_{\mathrm{large}}=\emptyset\) in that
		degenerate case. Otherwise, the following choice is available.
		For every \(j\in J_{\mathrm{large}}\), failure of the desired
		conclusion supplies an index
		$i(j)\in I_+\setminus\{p\}$
		such that $G'[B_j]$ is not \(\eta\)-far from being
		induced-\(H_{i(j)}\)-free.
		
		For each \(j\in J_{\mathrm{large}}\), choose an
		induced-\(H_{i(j)}\)-free graph \(\widehat X_j\) on \(B_j\), and let $N_j:=E(G'[B_j])\triangle E(\widehat X_j)$ where $|N_j|\le\eta n$.
		Notice that \(B_j\cap U_p=\emptyset\). Indeed, \(B_j\) is a subset
		of \(V^\ell\), and every vertex of \(U_p\cap V^\ell\) belongs to
		\(Z^\ell\).
		Therefore, \(\widehat G_p[B_j]\) is well-defined.

		The regions \(B_j\) may overlap, so their proposed repairs
		\(\widehat X_j\) may disagree about the same pair of vertices.
		We avoid this problem by comparing every repair with the single
		graph \(\widehat G_p\). Let \(T\) contain all endpoints of pairs
		where any comparison disagrees. Outside \(T\), all the required
		repairs agree with \(\widehat G_p\).
		Formally, for \(j\in J_{\mathrm{large}}\), put $L_j
		:=
		E(\widehat G_p[B_j])
		\triangle
		E(\widehat X_j)$. Since $E(\widehat G_p[B_j])
		\triangle
		E(G'[B_j])
		\subseteq M_p$, we have
		$L_j \subseteq M_p\cup N_j.$
		Consequently, $\bigcup_{j\in J_{\mathrm{large}}}L_j
		\subseteq
		M_p\cup
		\bigcup_{j\in J_{\mathrm{large}}}N_j$.
		
		Let \(T\) be the set of all endpoints of pairs belonging to
		$\bigcup_{j\in J_{\mathrm{large}}}L_j.$
		Since every \(B_j\subseteq V^\ell\),
		$T\subseteq V^\ell.$
		Also, since \(|J_{\mathrm{large}}|\le t\le K\), we have $$|T|
		\le
		2\left|
		\bigcup_{j\in J_{\mathrm{large}}}L_j
		\right|
		\le
		2\left(
		|M_p|
		+
		\sum_{j\in J_{\mathrm{large}}}|N_j|
		\right)
		\le
		2(\alpha+K\eta)n.$$
		
		\emph{Build a graph that satisfies all the repairs.}
		We construct \(Y\) so that any induced \(H_p\)-copy must contain a
		root \(a_j\). If \(j\in J_{\mathrm{large}}\), the component
		corresponding to \(H_{i(j)}\) in an induced \(H\)-copy would have
		to lie in \(B_j\setminus T\). This is impossible because
		\(Y[B_j\setminus T]\) agrees with
		\(\widehat X_j[B_j\setminus T]\), which has no induced
		\(H_{i(j)}\)-copy. For a small region, making its root adjacent
		to all other vertices rules out an induced disconnected copy
		containing that root. Here is the construction.
		
		First, on \(V(G)\setminus U_p\), start with the graph
		\(\widehat G_p\), and initially make every vertex of \(U_p\)
		isolated.
		
		Next, isolate every vertex of
		$Z^\ell\cup T$.
		
		For every $a\in A$ and $
		x\in
		V^\ell\setminus(Z^\ell\cup T)$, set $ax\in E(Y)
		\Longleftrightarrow
		ax\in E(G)$.
		
		Finally, add every missing edge inside \(A\) and every missing edge
		between $A$ and $V^h\setminus A$. Thus, \(A\) is a clique in \(Y\), and \(A\) is complete to \(V^h\setminus A\).
		
		For every $j\in J_{\mathrm{small}}$, add all still-missing edges
		from $a_j$ to $V(G)\setminus\{a_j\}$.  Consequently, every small root
		\(a_j\) is \emph{universal} in \(Y\), meaning adjacent to every
		other vertex.  The vertices of $Z^\ell\cup T$
		may now have neighbors in $A$, but they have no neighbors outside
		$A$.
		
		\emph{Why the repair works.}
		We claim that \(Y\) is induced-\(H\)-free.
		
		Suppose otherwise, and let \(Q\) be an induced copy of \(H\) in
		\(Y\).  Since the isomorphism type of $H_p$ has multiplicity one,
		$Q$ contains a unique component
		$Q_p\cong H_p.$
		Fix an isomorphism $\varphi:H\to Y[V(Q)]$ that maps $H_p$ to
		$Q_p$, and for every $i\in[k]$ let
		$Q_i:=Y[\varphi(V(H_i))].$
		These components of \(Q\) are indexed by the components of \(H\);
		they are distinct even when some \(H_i\) are isomorphic.
		
		We first show that \(Q_p\) must intersect \(A\).
		
		Suppose that
		$V(Q_p)\cap U_p=\emptyset.$
		Every vertex in \(Z^\ell\cup T\) has no neighbor outside $A$.
		Since $Q_p$ is disjoint from $U_p\supseteq A$ and $H_p$ is
		connected and nontrivial, such a vertex would be isolated inside
		$Q_p$. Thus,
		$V(Q_p)\cap(Z^\ell\cup T)=\emptyset.$
		On
		$V(G)\setminus (U_p\cup Z^\ell\cup T),$
		all changes made after starting with \(\widehat G_p\) are incident
		with vertices of \(A\subseteq U_p\). Thus,
		$Y[V(Q_p)] = \widehat G_p[V(Q_p)],$
		contradicting the fact that \(\widehat G_p\) is
		induced-\(H_p\)-free.
		
		Therefore, \(Q_p\) intersects \(U_p\). Every vertex of
		$U_p\cap V^\ell$
		belongs to \(Z^\ell\) and has no neighbor outside $A$. Consequently,
		since \(H_p\) is connected and nontrivial, \(Q_p\) must contain a vertex
		$a_j\in A$
		for some \(j\in J\).
		
		If $j\in J_{\mathrm{small}}$, then $a_j$ is universal in $Y$.
		As $H$ is disconnected, $Q$ has a component other than $Q_p$;
		every vertex of that component is adjacent to $a_j$, a
		contradiction.  Therefore,
		$j\in J_{\mathrm{large}}.$
		
		Now consider the component
		$Q_{i(j)}\cong H_{i(j)}$
		of the induced copy \(Q\).
		
		If \(Q_{i(j)}\) contains a high-degree vertex
		$b\in V^h,$
		then
		$a_jb\in E(Y),$
		because \(a_j\in A\) and \(A\) is complete to \(V^h\).
		This is impossible, since \(Q_p\) and \(Q_{i(j)}\) are distinct
		components of the induced copy \(Q\). Thus,
		$V(Q_{i(j)})\subseteq V^\ell.$
		
		Every vertex of $Z^\ell\cup T$ has no neighbor outside $A$.
		The component $Q_{i(j)}$ contains no vertex of $A$, by the
		preceding high-degree argument.  Consequently, such a vertex would be
		isolated inside $Q_{i(j)}$. Since $H_{i(j)}$ is connected and
		nontrivial, $V(Q_{i(j)})
		\cap
		(Z^\ell\cup T)
		=\emptyset$.
		
		Also, $V(Q_{i(j)})
		\cap
		\Gamma_G(a_j)
		=
		\emptyset$. Indeed, if \(x\in V(Q_{i(j)})\cap\Gamma_G(a_j)\), then
		\(x\in V^\ell\setminus(Z^\ell\cup T)\), and by the construction of
		\(Y\),
		$a_jx\in E(Y),$
		again contradicting the fact that \(Q_p\) and \(Q_{i(j)}\) are
		distinct components of an induced copy. Consequently, $V(Q_{i(j)})
		\subseteq
		V^\ell
		\setminus
		\bigl(
		\Gamma_G(a_j)\cup Z^\ell\cup T
		\bigr)
		=
		B_j\setminus T$.
		
		By the definition of \(T\), no pair in \(L_j\) has an endpoint in
		\(B_j\setminus T\). Thus, $\widehat G_p[B_j\setminus T]
		=
		\widehat X_j[B_j\setminus T]$. Also, on pairs of low-degree vertices outside
		\(Z^\ell\cup T\), the graph \(Y\) agrees with \(\widehat G_p\).
		Consequently, $Y[V(Q_{i(j)})]
		=
		\widehat X_j[V(Q_{i(j)})]$. This contradicts the fact that \(\widehat X_j\) is
		induced-\(H_{i(j)}\)-free. Therefore, \(Y\) is induced-\(H\)-free.
		
		\emph{Edit count.}
		We compare the final adjacency of every pair directly with its
		adjacency in \(G\). In particular, temporarily removing and then
		restoring an edge does not contribute to the edit distance.
		Pairs outside $U_p\cup Z^\ell\cup T$ are charged to the
		decomposition or to $M_p$.  Deleting the original edges incident
		with $Z^\ell\cup T$ costs at most
		$d(|Z^\ell|+|T|)$.
		Before the last edge-addition step, pairs
		between $A$ and the remaining low-degree vertices agree with
		$G$, while completing $A$ to the high-degree vertices costs at
		most $|A||V^h|$.  For a small root, every still-missing
		low-degree adjacency outside $Z^\ell\cup T$ has its low endpoint in
		$B_j$.
		Thus, making all small roots adjacent to every other vertex adds at most
		$K\bigl(\lambda n+|Z^\ell|+|T|\bigr)$
		further edges. Consequently,
		\[
		\begin{aligned}
			|E(G)\triangle E(Y)|
			&\le
			|E(G)\triangle E(G')|
			+|M_p|
			+d|Z^\ell|
			+d|T|
			+|A||V^h|
			+K\bigl(\lambda n+|Z^\ell|+|T|\bigr)\\
			&\le
			\frac{\varepsilon n}{10}
			+\alpha n
			+\frac{\varepsilon n}{100}
			+2d(\alpha+K\eta)n
			+\frac{\varepsilon n}{100}
			+K\lambda n
			+K|Z^\ell|
			+K|T|.
		\end{aligned}
		\]
		
		Since $\alpha
		\le
		\frac{\varepsilon}{1000Kd}$ and $
		\eta
		=
		\frac{\varepsilon}{1000Kd}$, we obtain $$\alpha n
		\le
		\frac{\varepsilon n}{1000Kd},\quad
		2d\alpha n
		\le
		\frac{2\varepsilon n}{1000K}\text{\quad and\quad}
		2dK\eta n
		=
		\frac{2\varepsilon n}{1000}.$$
		Also, since $d\ge400K/\varepsilon>K$,
		$$K|Z^\ell|\le d|Z^\ell|\le\frac{\varepsilon n}{100},\qquad
		K|T|\le d|T|\le2d(\alpha+K\eta)n\text{\quad and\quad}K\lambda n=\varepsilon n/100.$$ Substituting these estimates
		into the preceding display gives
		\[
		|E(G)\triangle E(Y)|
		\le
		\left(
		\frac{14}{100}
		+\frac{1}{1000Kd}
		+\frac{4}{1000K}
		+\frac{4}{1000}
		\right)\varepsilon n
		<\frac{3\varepsilon n}{20}
		<\varepsilon n.
		\]
		This contradicts the assumption that \(G\) is
		\(\varepsilon\)-far from being induced-\(H\)-free.
		
		Therefore, there exists some \(j^\star\in J\) such that, simultaneously
		for every \(i\in I_+\setminus\{p\}\),
		$G'[B_{j^\star}]$
		is \(\eta\)-far from being induced-\(H_i\)-free.
		If $m_0>0$, the same index also satisfies
		$|B_{j^\star}|\ge\lambda n$.
		
		\emph{The region is separated.}
		Fix any \(j\in J\).
		Let \(x\in B_j\). By definition,
		$x\notin\Gamma_G(a_j),$
		and hence \(x\) has no edge in \(G\) to the unique high-degree vertex
		\(a_j\) of \(C_j\).
		
		Now let
		$u\in V(C_j)\cap V^\ell.$
		Since
		$u\in U_p\cap V^\ell,$
		every low-degree neighbor of \(u\) belongs to $\Gamma_G(U_p\cap V^\ell)\cap V^\ell
		\subseteq Z^\ell$.
		But
		$B_j\cap Z^\ell=\emptyset.$
		Since \(x\in B_j\subseteq V^\ell\),
		$ux\notin E(G).$
		Therefore, no vertex of \(B_j\) is adjacent in \(G\) to any vertex of
		\(C_j\), and hence
		$E_G(V(C_j),B_j)=\emptyset.$
		Since \(j\in J\) was arbitrary, this holds for every rooted witness
		copy.
	\end{claimproof}

	We can now finish the proof of the dichotomy.
	
	\begin{proof}[Proof of \Cref{far}]
		Because \(\alpha\le\alpha_0\), every small removal witness with
		edit budget \(\alpha n\) is also one with edit budget \(\alpha_0 n\). By \Cref{far1},
		at most one component of \(H\) has such a witness; it is nontrivial
		and its type occurs only once in \(H\).

		If no component has a small removal witness with edit budget \(\alpha n\),
		alternative~\textnormal{(I)} holds. Otherwise, let \(H_p\) be the
		unique component with such a witness and fix any witness family.  All hypotheses of
		\Cref{far2} hold, and hence that claim gives a rooted copy
		\(C_{j^\star}\), its separated region \(B_{j^\star}\), and every
		property in alternative~\textnormal{(II)}.  The alternatives are disjoint,
		which proves the lemma.
	\end{proof}

	\section{The disconnected case: algorithm and analysis}
	\label{sec:disconnected-tester}
	
	This section turns the structural dichotomy from \Cref{far} into a tester for
	a disconnected forbidden graph. Its main result is the following theorem.
	
	\begin{theorem}[Disconnected tester with threshold-list access]
		\label{thm:test-disconnected}
		Let $H=m_0K_1\cup m_1F_1\cup\cdots\cup m_sF_s$ be a fixed disconnected
		graph, where $F_1,\ldots,F_s$ are pairwise non-isomorphic connected
		nontrivial graphs. Then there is a threshold
		\(d=(1/\varepsilon)^{O_H(1)}\) for which induced-\(H\)-freeness in
		outerplanar graphs is testable with one-sided error under threshold-list access. Its total query complexity is
		\((1/\varepsilon)^{O_H(1)}\).
	\end{theorem}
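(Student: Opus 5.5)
The plan is to design one tester that runs a search for each of the two alternatives of \Cref{far} in parallel and rejects only when it has an exactly verified induced copy of \(H\) in \(G\). Verification is exact as in \Cref{alg:find-induced-one}. Every assembled candidate consists of connected pieces, each containing at most one high-degree vertex. For every pair of vertices in the union we need the adjacency. A pair inside one piece is certified by a low-degree endpoint. A pair across two pieces with two high-degree endpoints cannot be checked directly, so the assembly step discards any union containing two high-degree vertices, except that it uses the separated-region guarantee (the other pieces lie in \(B_{j^\star}\subseteq V^\ell\)). Hence every union we verify has at most one high-degree vertex. One-sidedness then follows for every seed. Small inputs \(n<n_0\) are handled by reading the whole graph, since \(n_0=(1/\varepsilon)^{O_H(1)}\) and threshold-list queries with a full \(\mathsf{TL}_d\) scan on all vertices suffice. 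Degrees above \(d\) can be detected, and for \(n<n_0\) we may choose the threshold \(\ge n_0\) locally, so the graph is read exactly. With \(d\) as fixed in \Cref{sec:disconnected-structure}, we sample a seed \(\mathbf R\); with probability \(2/3\) it satisfies the cut bound, and we repeat \(O(1)\) times.

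\textbf{Alternative (I).} For each index \(i\in I_+\), apply \Cref{cor:sequential-induced-finder} with \(F=H_i\), budget \(\alpha\), and \(K\). The corollary is applied successively with the already found copies added to the excluded set \(S\). Since no witness exists, the far hypothesis persists for every family of at most \(K\) disjoint copies; the excluded set may also contain copies of other components. To handle this I would apply the corollary to the union, noting that at most \(kK\) copies are removed in total, which costs only \(O(dhkK)\) extra edits. So I would slightly enlarge the witness parameter, or equivalently argue that removing arbitrary bounded sets of size \(O(Kh)\) only changes farness by \(O(dKh)\ll\alpha n\). Hence we obtain \(1+h-h_{\max}\) times \(Q\) globally vertex-disjoint candidates per color. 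Isolated components are handled by sampling uniform vertices outside the found copies; far inputs have \(\Omega(n)\) such vertices. Next, discard candidates that have edges to many others. More precisely, restrict to candidates whose vertices are all low-degree except possibly one, and keep only candidates with at most one high-degree root, where roots differ across candidates. The auxiliary colored graph of adjacencies is then outerplanar after contraction, so by \Cref{upperbound} (with \(q\ge 2k\)) a rainbow independent transversal exists. Its union is an induced \(H\) in \(G\). The pitfall is that two candidates with distinct high roots may be adjacent via a high--high edge we cannot see. My remedy is to count: the high roots in the candidate pool are at most \(kK'\) vertices, and we can add these possibly unknown edges as auxiliary edges. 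The graph stays outerplanar as a minor of \(G\) (the true edges), so the transversal exists. The tester then enumerates all transversals, and verification needs the high--high pair to be nonadjacent. I expect this to be the main obstacle. My fix would be to sample neighbors of each root to detect edges, but absence cannot be certified. Instead I would argue structurally via \Cref{lem:structural}: restrict by pigeonhole so that at most one color uses high-degree candidates, making pairs always have a low endpoint. This needs a counting argument that such a restriction loses only a constant factor. If it fails, I would fall back to choosing, by the transversal lemma applied to the minor, a transversal with at most one high vertex.

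\textbf{Alternative (II).} Find up to \(K+1\) disjoint copies of each nontrivial type by the sequential finder. Since the witness component \(H_p\) is unique, for some \(p\) the search stalls. We run over every guess \(p\) and every found family (of size at most \(K\)) as a putative witness \(\mathcal C_p\). For each rooted copy \(C_j\), membership in \(B_j\) is decided locally: \(x\in B_j\) iff \(x\) is low, \(x\notin Z^\ell\) (checked by \(\mathsf{TL}_d\) of \(x\) and the known low vertices of \(U_p\)), and \(a_j\notin\Gamma_G(x)\) (read from the list of \(x\)). Run the low-degree finder restricted to \(G'[B_j]\) as excluded-set search with excluded set \(V\setminus B_j\) implemented implicitly, using farness \(\eta\). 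Collect \(Q^-\) disjoint candidates per remaining color, and sample singletons from \(B_j\), which has size at least \(\lambda n\). Since everything is low-degree, the transversal is verifiable exactly, and \(E_G(V(C_j),B_j)=\emptyset\) makes the union with \(C_j\) induced. The guarantee of \Cref{far} holds for every witness family, so whichever family we find works for some \(j^\star\le K\), which we enumerate. The query count is a constant number of finder calls with parameters \(\alpha,\eta,\lambda\ge\varepsilon^{O_H(1)}\), giving \((1/\varepsilon)^{O_H(1)}\).
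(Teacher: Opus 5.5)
There is a genuine gap, and it sits at what you call the main obstacle. Your design rejects only after exactly verifying a concrete induced copy of \(H\). In alternative~(I) this cannot be done with \((1/\varepsilon)^{O_H(1)}\) queries.

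\begin{itemize}
\item Let \(G\) be a disjoint union of about \(\sqrt n\) stars, each with \(\sqrt n\) leaves, and let \(H=2P_3\).
\item This forest is \(\Omega(n)\) edits from induced-\(2P_3\)-free.
\item It falls under alternative~(I), because by \Cref{far1} a type with multiplicity two is never exceptional.
\item Every induced \(P_3\) contains a star center of degree \(\sqrt n>d\). So every candidate of either color has a high-degree vertex, and every induced copy of \(H\) contains two centers.
\item Exactly checking that two centers are nonadjacent requires reading a full list of length \(\sqrt n\).
\end{itemize}

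Neither of your fixes is available here. No restriction ``at most one color uses high-degree candidates'' and no ``transversal with at most one high vertex'' exists in this graph.

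The missing idea is that a one-sided tester may rely on the outerplanarity promise and need not exhibit the copy. Suppose every color has \(Q=2k\) globally vertex-disjoint, exactly verified copies. The auxiliary graph defined by the \emph{true} edges of \(G\) is a minor of \(G\), hence outerplanar. So \Cref{upperbound} guarantees that some rainbow independent transversal exists, and its union is an induced \(H\) in \(G\). The tester can therefore reject at once, without knowing which transversal works. This is how the paper treats the all-\textsc{Full} case. You wrote ``so the transversal exists'' and then tried to locate it, which is unnecessary. The same rejection rule is also needed in alternative~(II). The search for \(H_p\) need not stall, and then every index is \textsc{Full}.

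There is a secondary error in how you obtain global disjointness. You exclude copies of \emph{other} components during each search and claim that removing any \(O(Kh)\) vertices changes farness by only \(O(dKh)\). This is false in general when the removed set contains a high-degree vertex. For example, in a star with \(n-1\) leaves, deleting the center takes the distance to induced-\(P_3\)-freeness from \(n-2\) to \(0\). This is exactly the phenomenon behind alternative~(II). The paper avoids it by letting each component's search exclude only its own copies. It then greedily extracts \(Q\) globally disjoint copies per color from the \(K+1=Q(1+h-h_{\max})\) copies found for that color.

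Your handling of the exceptional region agrees with the paper. This covers local membership in \(B_j\), the low-degree and singleton finders, exact verification inside \(B_j\subseteq V^\ell\), and the use of \(E_G(V(C_j),B_j)=\emptyset\).
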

	
	The algorithm first runs the connected finder repeatedly for each
	component of \(H\). If every search finds enough copies, the selection
	lemma guarantees mutually nonadjacent copies whose union induces \(H\).
	If just one search stops early, the algorithm tries each rooted copy
	found by that search and looks for the remaining components in its
	separated region. On a far input, \Cref{far} guarantees that one of these
	regions works whenever the search subroutines meet their success
	guarantees. In both branches, the exact checks certify the individual
	copies, and the selection lemma guarantees that a suitable union exists.
	This gives one-sided error.
	
	The proof follows the same order as the connected-case proof. We first fix the
	parameters and the graph queries. We then prove the local search
	tools used in the exceptional region, state the tester, prove its correctness,
	and bound its queries. Together with \Cref{thm:test-connected}, this proves
	\Cref{thm:sparse-main}. The next section then simulates the adjacency-list
	tester in the random-neighbor model.
	
	\subsection{Setting of parameters}
	
	Keep the component notation \(H_i,I_0,I_+,h_i,h,h_{\max}\) and the
	parameters \(Q,Q^-,K,d,\eta,\lambda,\alpha\) from
	\Cref{sec:disconnected-structure}. In particular, the same choice of \(d\)
	allows us to apply \Cref{lem:structural} with parameter
	\(\varepsilon/10\), and satisfies \(d\ge400K/\varepsilon\).
	Set \(n_{\mathrm{far}}:=n_0\), the size threshold in that section, and
	let
	\[
		n_{\mathrm{alg}}
		:=
		\max\left\{
		n_{\mathrm{far}},
		\left\lceil
		\frac{2Q^-h\,dh_{\max}}{\eta}
		\right\rceil
		\right\}.
	\]
	Also let
	\(L:=k(K+1)+K(k-1)Q^-\) and
	\(\delta_\star:=1/(20L)\).
		
		As in the connected case, we first work with exact threshold-list queries and
	direct random-neighbor samples at high-degree roots. Every call to
	\Cref{Local} uses the same seed as the current decomposition. The proof of \Cref{thm:sparse-main} later implements these operations
	exactly in the
	adjacency-list model.
	
	\subsection{Local tools for the exceptional region}

	\paragraph*{Local access to the separated region.}
	The exceptional case of \Cref{far2} provides a separated region \(B_j\),
	but the tester does not have this region explicitly. To use it
	algorithmically, we first show that, once the witness family
	\(\mathcal C_p\) is known, membership in \(B_j\) can be decided locally
	with a number of queries that is independent of \(n\).
	
	For a known family \(\mathcal C_p\) of at most \(K\) induced
	\(H_p\)-copies in \(G'\), use \(U_p,Z^\ell,a_j,B_j\) as defined in
	\Cref{sec:disconnected-structure}. Testing membership in these regions
	does not require knowing whether \(\mathcal C_p\) is a removal witness.

	\begin{lemma}[Testing membership in a separated region]
		\label{lem:Bj-local}
		The set \(Z^\ell\) can be constructed using \(O(Khd)\)
		threshold-list queries. Once it is stored, membership of a vertex
		in any \(B_j\) can be tested with one more threshold-list query.
	\end{lemma}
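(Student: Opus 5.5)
We construct \(Z^\ell\) directly from its definition. Since \(\mathcal C_p\) is known explicitly, so is \(U_p\), with \(|U_p|\le Kh\). For every \(u\in U_p\) we make one query \(\mathsf{TL}_d(u)\). This classifies \(u\) as high or low, and thereby gives \(A=U_p\cap V^h\) and the roots \(a_j\) for \(j\in J\). For every low-degree \(u\) the query also returns the complete list \(\Gamma_G(u)\), of size at most \(d\). Next, we query \(\mathsf{TL}_d(w)\) once for each neighbor \(w\) so obtained, and keep exactly those answered \(\mathsf{Low}\). This yields \(\Gamma_G(U_p\cap V^\ell)\cap V^\ell\). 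The total number of queries is at most \(|U_p|+d|U_p|\le (d+1)Kh=O(Khd)\). Because every answer is exact, the stored set equals \(Z^\ell=(U_p\cap V^\ell)\cup(\Gamma_G(U_p\cap V^\ell)\cap V^\ell)\).

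For the membership test, recall that \(B_j=V^\ell\setminus(\Gamma_G(a_j)\cup Z^\ell)\). Given a vertex \(x\), we first check \(x\in Z^\ell\) against the stored set, which uses no query. Otherwise we make the single query \(\mathsf{TL}_d(x)\). If the answer is \(\mathsf{High}\), then \(x\notin V^\ell\), so \(x\notin B_j\) for every \(j\). If the answer is \(\mathsf{Low}\), we have the full list \(\Gamma_G(x)\), and \(x\in\Gamma_G(a_j)\) holds exactly when \(a_j\in\Gamma_G(x)\). Thus \(x\in B_j\) if and only if \(x\) is low, \(x\notin Z^\ell\), and \(a_j\notin\Gamma_G(x)\). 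The same list answers this for all \(j\in J\) at once, so one query suffices for every region simultaneously.

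The one step that needs care is deciding adjacency to the root \(a_j\). Since \(a_j\) has high degree, its neighborhood cannot be listed. The way around this is to scan the low-degree endpoint \(x\) instead, which is the same principle used in the exact verification of \Cref{alg:find-induced-one}. No farness assumption and no seed-dependent object are involved, because \(B_j\) is defined from \(G\) alone. The test is therefore correct for any known family \(\mathcal C_p\), whether or not it is a removal witness.
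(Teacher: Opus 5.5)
Your proposal is correct and follows the paper's proof essentially step by step. You build \(Z^\ell\) by calling \(\mathsf{TL}_d\) on \(U_p\) and on the returned low-degree neighbors, at most \((d+1)Kh\) queries, and you decide membership in \(B_j\) with one \(\mathsf{TL}_d\) call at the low-degree endpoint plus a lookup of \(a_j\) in its list; checking \(Z^\ell\) before querying, and reusing one list for every \(j\), are only minor refinements of the same argument.
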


	\begin{proof}
		Since
		$|U_p|\le K|H|,$
		call $\mathsf{TL}_d$ on every vertex of $U_p$ and on every neighbor
		in a returned low-degree list. The exact lists and classifications determine \(Z^\ell\) using
		$O(Khd)$
		threshold-list queries.
		
		To test whether a vertex \(v\) belongs to \(B_j\), call
		$\mathsf{TL}_d(v)$.  A $\mathsf{High}$ answer excludes $v$.
		Otherwise check whether \(v\) lies in the stored set \(Z^\ell\)
		and whether \(a_j\) lies in the returned list \(\Gamma_G(v)\).
		The vertex belongs to \(B_j\) exactly when both answers are no.
		Looking up \(v\) in the stored set needs no graph query.
	\end{proof}
	
	When \(H\) has isolated components, \Cref{far2} additionally guarantees
	that the relevant separated region contains a linear number of vertices.
	Since membership in \(B_j\) is locally decidable by
	\Cref{lem:Bj-local}, uniform vertex sampling then suffices to find
	singleton candidates in \(B_j\), even after excluding a bounded set of
	previously selected vertices.
	
	\begin{lemma}[Finding a vertex in a large region]
		\label{lem:find-singleton-region}
		Let \(B\subseteq V(G)\), and suppose that membership in \(B\)
		can be tested using at most \(q_B\) threshold-list queries. Suppose that
		$|B|\ge\lambda n.$
		Let \(S\subseteq V(G)\) be a known set of at most \(M\) forbidden
		vertices. If
		$n\ge\frac{2M}{\lambda},$
		then, for every \(\delta\in(0,1)\), a vertex in \(B\setminus S\) can
		be found with probability at least \(1-\delta\), using $N_{\mathrm{sing}}(\lambda,\delta)
		=
		O\left(
		\frac{q_B+1}{\lambda}\log\frac2\delta
		\right)$ threshold-list and direct random-neighbor queries.
	\end{lemma}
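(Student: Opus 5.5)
The plan is rejection sampling. Repeat the following basic trial independently $T:=\lceil \frac{2}{\lambda}\log\frac2\delta\rceil$ times. Choose a uniform identifier $x\in[n]$, which costs no graph query. If $x\in S$, declare the trial a failure; this check is a lookup in the explicitly stored set and also costs nothing. Otherwise, run the assumed membership test for $B$ on $x$, using at most $q_B$ threshold-list queries. If the test says $x\in B$, output $x$ and stop. If all $T$ trials fail, output \textsc{Fail}. The output is never wrong, because a vertex is returned only after an exact membership check and a check against $S$.

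For the success probability, note that $|B\setminus S|\ge|B|-|S|\ge\lambda n-M$. The hypothesis $n\ge 2M/\lambda$ gives $M\le\lambda n/2$, so $|B\setminus S|\ge\lambda n/2$. A single trial therefore succeeds with probability at least $\lambda/2$, independently of the other trials. The probability that all trials fail is at most $(1-\lambda/2)^T\le e^{-\lambda T/2}\le \delta/2<\delta$.

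For the query bound, each trial uses at most $q_B$ threshold-list queries, plus possibly one extra query if the membership procedure first classifies the vertex, and no random-neighbor queries. The total is therefore $T(q_B+1)=O\bigl(\frac{q_B+1}{\lambda}\log\frac2\delta\bigr)$, as claimed.

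The only point needing care is that the bound must hold conditionally on any earlier history that fixes $S$ and the stored data defining $B$, such as $Z^\ell$ and the root $a_j$. Since the trials use fresh uniform identifiers, the estimate above holds for every fixed $S$ and $B$. Hence it also holds conditionally, exactly as in the conditional clause of \Cref{thm:find-one-induced}. I expect no real obstacle here. The one step where the hypothesis is used essentially is the accounting $|B\setminus S|\ge\lambda n/2$, which is why the lower bound $n\ge 2M/\lambda$ is assumed.
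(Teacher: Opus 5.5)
Your proposal is correct and follows the paper's proof. Both sample uniform vertices, use $|B\setminus S|/n\ge\lambda-M/n\ge\lambda/2$, and repeat $O(\lambda^{-1}\log(2/\delta))$ times, charging at most $q_B$ (plus one) threshold-list queries per trial while the lookup in $S$ is free. Your extra $\log(2/\delta)$ in place of the paper's $\log(1/\delta)$, and checking $S$ before $B$, change nothing.
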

	
	\begin{proof}
		A uniformly random vertex belongs to \(B\setminus S\) with probability
		at least $\frac{|B|-|S|}{n}
		\ge
		\lambda-\frac Mn
		\ge
		\frac{\lambda}{2}$.
		Sample vertices independently and uniformly with replacement, test
		membership in \(B\), and return the first sampled vertex outside \(S\).
		After $\left\lceil
		\frac{2}{\lambda}\log\frac1\delta
		\right\rceil$ samples, the failure probability is at most \(\delta\).  Membership in
		the explicit set \(S\) uses no graph query.  For the regions \(B_j\) of
		\Cref{lem:Bj-local}, one may take \(q_B=O(1)\) threshold-list queries after
		$Z^\ell$ has been constructed.
	\end{proof}

	\paragraph*{Finding an induced copy inside \(B_j\).}
	The preceding lemmas allow us to recognize the separated region \(B_j\)
	locally and, when isolated components are present, to sample singleton
	candidates from it. For a nontrivial component, \Cref{far2}
	provides a stronger structural guarantee: the induced subgraph on the
	relevant region remains far from being induced-\(F\)-free. We
	need a version of the connected finder that searches entirely inside a
	low-degree region whose membership can be tested, while avoiding
	previously selected vertices. The argument is simple: farness gives many
	vertex-disjoint copies in the region, and excluding a bounded set of
	vertices removes only a few of them. Sampling a vertex in one of the
	remaining copies and exploring its neighborhood then finds the copy.

	\begin{lemma}[Low-degree induced-copy finder]
		\label{lem:find-low-region}
		Let \(F\) be a fixed connected nontrivial graph and put
		\(r:=|V(F)|\).
		Let \(B\subseteq V^\ell\) be a set whose membership can be decided using
		a constant number of threshold-list queries, and suppose that
		\(X:=G'[B]\) is \(\eta\)-far from being induced-\(F\)-free, where the
		distance is normalized by \(n=|V(G)|\).
		
		Let \(S\subseteq B\) be a known set of at most \(M\) forbidden
		vertices. If $n\ge\frac{2drM}{\eta}$,
		then, for every \(\delta\in(0,1)\), an induced copy of \(F\)
		contained in \(X-S\) can be found with probability at least
		\(1-\delta\), using $N_{\mathrm{low}}(F,\eta,\delta)
		=
		O\left(
		\frac d\eta
		\log\frac2\delta
		\left(
		d^r(N_{\mathrm{sim}}+1)
		+
		d^{r^2}
		\right)
		\right)$
		queries, where \(N_{\mathrm{sim}}\) is the query bound from
		\Cref{simulate}. As with \(N_{\mathrm{sim}}\), the notation
		\(N_{\mathrm{low}}\) suppresses the decomposition parameters
		\(d\) and \(\varepsilon\).
	\end{lemma}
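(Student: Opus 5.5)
The plan is to mimic Case~1 of \Cref{thm:find-one-induced}, now entirely inside the low-degree region \(B\). First, since \(B\subseteq V^\ell\), every vertex of \(X=G'[B]\) has degree at most \(d\). Take a maximal family \(\mathcal Q\) of pairwise \emph{vertex}-disjoint induced \(F\)-copies in \(X\). Isolating all vertices of these copies costs at most \(dr|\mathcal Q|\) edits. The resulting graph is induced-\(F\)-free: any surviving induced copy avoids the isolated vertices, because \(F\) is connected and nontrivial. It is therefore an induced copy of \(X\) disjoint from all members of \(\mathcal Q\), which contradicts maximality. Hence \(|\mathcal Q|\ge \eta n/(dr)\). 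The forbidden set \(S\) meets at most \(M\) members of \(\mathcal Q\). Since \(n\ge 2drM/\eta\), at least \(\eta n/(2dr)\) copies avoid \(S\), and these copies cover at least \(\eta n/(2d)\) vertices. A uniformly random vertex therefore lands in such a copy with probability at least \(\eta/(2d)\).

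The basic procedure samples \(x\in[n]\) and tests membership \(x\in B\setminus S\) with \(O(1)\) threshold-list queries. It then runs a variant of \(\mathbf{ExploreLow}(x,r,\mathbf R,S)\) that additionally discards every discovered vertex not in \(B\), using the membership test on each neighbor. Edges are retained via \Cref{Local}. This is a breadth-first search of depth \(r\) in \(X-S\), which is bounded-degree. It expands at most \(S_r(d)\) vertices and inspects \(O(d^r)\) edges, at cost \(O(d^r(N_{\mathrm{sim}}+1))\) queries. The \(+1\) accounts for the membership tests.

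If \(x\) lies in a copy \(Q\) avoiding \(S\), then \(Q\) is connected on \(r\) vertices inside \(X-S\), so the exploration exposes all of it. We then enumerate the \(O(d^{r^2})\) injective maps \(V(F)\to V(\mathcal E)\) and exactly verify each candidate. All candidate vertices are low-degree, so \(\mathsf{TL}_d\) lists determine \(G[V(C)]\). The same argument as for \Cref{thm:find-one-induced} applies: a candidate connected in \(G'\) lies in one component \(C\), and \(G'[C]=G[C]\), so a verified candidate is an induced copy in \(G'\). It lies in \(B\setminus S\), hence it is induced in \(X-S\).

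Repeating the procedure \(O(\frac d\eta\log\frac2\delta)\) times gives failure probability at most \(\delta\), and multiplying the per-trial cost yields the stated bound. I expect the only delicate point to be that the exploration respects \(B\). Following edges of \(G'\) to vertices outside \(B\) could produce candidates that are not inside \(X\), and could also fail to restrict the search. Checking membership for every discovered vertex handles both issues at an \(O(1)\) overhead per vertex, provided the witness data defining \(B\) (for example \(Z^\ell\) from \Cref{lem:Bj-local}) is already stored.
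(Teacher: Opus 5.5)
Your proposal is correct and follows the paper's proof essentially step for step. The shared steps are a maximal vertex-disjoint packing with the isolation argument giving \(\eta n/(dr)\) copies, discarding the at most \(M\) copies that meet \(S\), a hit probability of \(\eta/(2d)\) for a uniform vertex, bounded exploration with exact verification, and \(O(\frac d\eta\log\frac2\delta)\) repetitions. Your explicit membership test on every discovered vertex is a slightly more careful version of the paper's instruction to expand ``only vertices belonging to \(B\setminus S\)'', and it is what ensures that the verified copy lies in \(X-S\) rather than merely in \(G'\).
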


	\begin{proof}
		Let \(\mathcal P\) be a maximal family of pairwise
		vertex-disjoint induced copies of \(F\) in \(X\).
		
		We first claim that
		$|\mathcal P| \ge \frac{\eta n}{dr}.$
		Otherwise, putting
		$W:=\bigcup_{C\in\mathcal P}V(C),$
		isolating \(W\) requires at most \(d|W|=dr|\mathcal P|<\eta n\)
		edge deletions.
		
		Every vertex of \(W\) becomes isolated. By maximality of
		\(\mathcal P\), there is no induced \(F\)-copy in \(X-W\).
		Since \(F\) is connected and nontrivial, the resulting graph is
		induced-\(F\)-free, contradicting the \(\eta\)-farness of \(X\).
		
		Since the members of \(\mathcal P\) are pairwise vertex-disjoint,
		at most \(M\) of them intersect \(S\). The assumption
		$n\ge\frac{2drM}{\eta}$
		implies
		$|\mathcal P|\ge2M.$
		Therefore, at least half of the members of \(\mathcal P\) avoid \(S\).
		Let \(\mathcal P'\) be this subfamily. Then $\left|
		\bigcup_{C\in\mathcal P'}V(C)
		\right|
		=
		r|\mathcal P'|
		\ge
		\frac{\eta n}{2d}$. Consequently, a uniformly random vertex of \(G\) belongs to one of these
		copies with probability at least
		$\frac{\eta}{2d}.$
		
		Whenever this event occurs, explore \(G'\) from the sampled vertex
		to depth \(r\), expanding only vertices belonging to
		\(B\setminus S\). Since every vertex of \(B\) has degree at most
		\(d\), the entire copy containing the sampled vertex is exposed.
		Enumerate the constant-size candidate copies and verify each
		candidate exactly in \(G\) using exact threshold-list answers.
		
		Therefore, $O\left(
		\frac d\eta\log\frac2\delta
		\right)$ independent repetitions suffice. Each repetition uses $$O_F\left(
		d^r(N_{\mathrm{sim}}+1)
		+
		d^{r^2}
		\right)$$ threshold-list and direct random-neighbor queries, proving the claimed bound.
	\end{proof}

	\subsection{The disconnected tester}
	We now combine the preceding structural results and local finders into a tester for a disconnected graph $H$. The main difficulty is no longer
	finding the individual connected components of \(H\), but ensuring that the
	selected copies are pairwise vertex-disjoint and have no cross-edges in
	\(G\).
	
	The tester follows the dichotomy established in
	\Cref{far1,far2}. It first searches sequentially for \(K+1\) pairwise
	vertex-disjoint copies of every nontrivial component, using the
	sequential finder from \Cref{cor:sequential-induced-finder}; isolated
	components are handled directly. If all component searches
	succeed, the choice of \(K\), together with the rainbow independent-transversal
	bound of \Cref{upperbound}, allows us to select one copy of each component so that the selected copies are mutually disjoint and have no
	cross-edges in \(G\), and hence they form an induced copy of \(H\).
	
	The only remaining case on a far input is the exceptional situation
	identified by \Cref{far1}: a unique nontrivial component \(H_p\), whose
	isomorphism type has multiplicity one, has a small removal witness
	in \(G'\) with edit budget \(\alpha n\).
	For a witness family
	\(\mathcal C_p=\{C_1,\ldots,C_t\}\), \Cref{far2} guarantees a rooted witness
	copy \(C_{j^\star}\) and a separated low-degree region \(B_{j^\star}\) in
	which every remaining nontrivial component is still far from induced-freeness;
	when isolated components are present, \(B_{j^\star}\) also contains a linear
	number of vertices. Also, $E_G(V(C_{j^\star}),B_{j^\star})=\emptyset$.
	The tester examines each rooted witness copy in turn. By
	\Cref{lem:Bj-local}, membership in the corresponding region \(B_j\) can be
	tested locally, while
	\Cref{lem:find-low-region,lem:find-singleton-region} find the required
	nontrivial and isolated components inside that region. A second application
	of \Cref{upperbound}, now to the remaining \(k-1\) components,
	then completes an induced copy of \(H\).
	
	The tester only has to accept or reject; it does not have to output
	an induced copy. Thus, it need not find the rainbow independent
	transversal explicitly before rejecting. Once all candidate families have been found,
	outerplanarity and \Cref{upperbound} guarantee that the input graph contains
	the required transversal. Thus rejection is safe even though some cross-edges
	between high-degree candidates cannot be queried directly.
	
	The role of \Cref{cor:sequential-induced-finder} is now explicit.
	If a component has no small removal witness, its search finds \(K+1\)
	copies with high probability. A search that stops early does not by itself
	prove that a witness exists: the finder may simply have failed.
	But on the event that every call on a far graph succeeds, stopping
	means that the remaining graph is not \(\alpha\)-far. The copies already
	found then form the small removal witness needed in alternative~(II).

	In the algorithm, \textsc{Full} means that \(K+1\) copies have been
	found. It does not mean that all copies have been found.
	\textsc{Stopped} means that a finder call returned no copy.

	\begin{algorithm}[h]
		\SetAlgoLined
		\caption{\textbf{TestInducedDisconnected} (one execution)}
		\label{alg:test-disconnected}
		\KwIn{an outerplanar graph \(G\), a fixed disconnected graph
			\(H=H_1\cup\cdots\cup H_k\), and \(\varepsilon\)}
		\KwOut{\textnormal{\textsc{Accept}} or \textnormal{\textsc{Reject}}}
		
		\If{$n<n_{\mathrm{alg}}$}{
			Call $\mathsf{TL}_{n-1}(v)$ for every $v\in[n]$, reconstruct
			$G$, and return the exact answer\;
		}
		
		Choose a fresh seed $\mathbf R$, use it throughout this execution,
		and set $G':=\mathbf{Decompose}_{\mathbf R}(G,d,\varepsilon/10)$\;
		
		\ForEach{$i\in I_0$}{
			Let $\mathcal C_i$ consist of any $K+1$ distinct singleton
			vertices and mark $i$ \textnormal{\textsc{Full}}\;
		}
		\ForEach{$i\in I_+$}{
			Set $\mathcal C_i:=\emptyset$\;
			Run at most $K+1$ stages. At each stage, put
			$S_i:=\bigcup_{C\in\mathcal C_i}V(C)$ and run the connected
			induced-copy finder from \Cref{thm:find-one-induced}, with excluded
			set \(S_i\), target
			$H_i$, farness parameter $\alpha$, and failure probability
			$\delta_\star$\;
			If the finder returns a copy, add it to $\mathcal C_i$. If it
			returns no copy, mark $i$ \textnormal{\textsc{Stopped}} and stop
			the search for this $i$\;
			If $|\mathcal C_i|=K+1$, mark $i$ \textnormal{\textsc{Full}}\;
		}
		
		\If{every index is \textnormal{\textsc{Full}}}{
			\Return{\textnormal{\textsc{Reject}}}
		}
		\If{there is not exactly one \textnormal{\textsc{Stopped}} index}{
			\Return{\textnormal{\textsc{Accept}}}
		}
		
		Let $p$ be the unique \textnormal{\textsc{Stopped}} index and write
		$\mathcal C_p=\{C_1,\ldots,C_t\}$, where $t\le K$\;
		\Return{\textbf{ExceptionalSearch}\((G,G',p,\mathcal C_p)\)}
	\end{algorithm}
	
	\begin{algorithm}[H]
		\SetAlgoLined
		\caption{\textbf{ExceptionalSearch}}
		\label{alg:test-disconnected-exceptional}
		\KwIn{\(G\), its fixed-seed decomposition \(G'\), the stopped index
			\(p\), and \(\mathcal C_p=\{C_1,\ldots,C_t\}\)}
		\KwOut{\textnormal{\textsc{Accept}} or \textnormal{\textsc{Reject}}}
		
		Put $U_p:=\bigcup_{j=1}^tV(C_j)$,
		$Z^\ell:=(U_p\cap V^\ell)\cup
		(\Gamma_G(U_p\cap V^\ell)\cap V^\ell)$, and
		$J:=\{j\in[t]:|V(C_j)\cap V^h|=1\}$\;
		
		\If{$J=\emptyset$}{
			\Return{\textnormal{\textsc{Accept}}}
		}
		
		\ForEach{$j\in J$}{
			Let $a_j$ be the unique high-degree vertex of $C_j$ and put
			$B_j:=V^\ell\setminus(\Gamma_G(a_j)\cup Z^\ell)$\;
			Set $S:=\emptyset$\;
			Process the component indices $i\ne p$ in any fixed order, seeking
			$Q^-$ candidates of each color. For $i\in I_+$, invoke
			\Cref{lem:find-low-region} sequentially on $G'[B_j]-S$ with
			target $H_i$; for $i\in I_0$, invoke
			\Cref{lem:find-singleton-region} sequentially on
			$B_j\setminus S$. Give every call failure probability
			$\delta_\star$, and add the vertices of every returned candidate
			to $S$\;
			If any invocation fails, abandon this $j$ and continue with the
			next choice in $J$\;
			\If{all $(k-1)Q^-$ invocations succeed}{
				\Return{\textnormal{\textsc{Reject}}}
			}
		}
		
		\Return{\textnormal{\textsc{Accept}}}
	\end{algorithm}
	\paragraph*{Implementation details.}
	If \(I_+=\emptyset\), then $H=m_0K_1$, and hence the
	initialization phase marks every component index \textnormal{\textsc{Full}}.
	For \(n\ge n_{\mathrm{alg}}\), the rainbow argument below then certifies
	an induced copy of \(H\); the small-input branch is handled exactly.
	The amplified structural tester runs two independent complete executions of
	\Cref{alg:test-disconnected}, including its call to
	\Cref{alg:test-disconnected-exceptional}, with fresh decomposition seeds and
	fresh finder randomness, and rejects if either execution rejects.
	
	\subsection{Correctness analysis}
	We first prove that the tester never rejects an induced-\(H\)-free graph. We
	then condition on the decomposition and finder success events and use the two
	alternatives of \Cref{far} to prove rejection on a far graph.
	
	\begin{proof}[Correctness of \Cref{thm:test-disconnected}]
		We first prove completeness. Suppose that \(G\) is
		induced-\(H\)-free.
		
		Assume first that the procedure in \Cref{alg:test-disconnected} rejects
		because every component is marked
		\textnormal{\textsc{Full}}. Therefore, for
		every \(i\in[k]\), the algorithm has found
		$K+1 = Q(1+h-h_{\max})$
		pairwise vertex-disjoint induced copies of \(H_i\).
		
		Relabel the components so that \(h_k=h_{\max}\). Greedily select
		\(Q\) candidates of each component, choosing \(H_k\) last.
		Before selecting the candidates for a given component, at most
		$Q(h-h_{\max})$
		vertices have been used. Since the available copies of the current
		component are pairwise vertex-disjoint, at most that many of them
		intersect the previously used vertices. Consequently, at least \(Q\)
		candidates remain. We obtain \(Q\) candidates of every
		component so that all chosen candidates are pairwise vertex-disjoint.
		
		Construct a \(k\)-colored auxiliary graph whose vertices are these
		candidate copies, with two vertices of different colors adjacent
		exactly when there is an edge of \(G\) between the corresponding
		copies. This auxiliary graph is obtained from a subgraph of \(G\)
		by contractions and edge deletions. It is outerplanar. Every color class has
		$Q=2k$
		vertices. By \Cref{upperbound}, the auxiliary graph has a rainbow
		independent transversal. The corresponding copies have no
		cross-edge in \(G\), and hence their union is an induced copy of
		\(H\) in \(G\), a contradiction.
		
		Now suppose that the algorithm rejects in the exceptional branch,
		for some \(p\) and \(j\in J\). It has found \(Q^-\) globally
		vertex-disjoint induced copies of every \(H_i\), \(i\neq p\),
		inside \(B_j\).
		
		If \(k=2\), then \(Q^-=1\), and hence there is only one remaining
		component. If \(k\ge3\), construct the corresponding
		\((k-1)\)-colored auxiliary graph. It is again outerplanar, and every color class has
		$Q^-=2(k-1)$
		vertices. By \Cref{upperbound}, one can select one copy of every
		\(H_i\), \(i\neq p\), with no cross-edge between distinct selected
		copies.
		
		Also, by the definition of \(B_j\),
		$E_G(V(C_j),B_j)=\emptyset.$
		Indeed, \(B_j\) contains no neighbor of the high-degree vertex
		\(a_j\). If
		$u\in V(C_j)\cap V^\ell,$
		then \(u\in U_p\cap V^\ell\), and every low-degree neighbor of
		\(u\) belongs to \(Z^\ell\), but
		\(B_j\cap Z^\ell=\emptyset\). Thus, \(C_j\), together with the
		selected copies inside \(B_j\), forms an induced copy of \(H\) in
		\(G\), again a contradiction.
		
		Therefore, an induced-\(H\)-free input is always accepted.
		
		We now prove soundness for one execution. Suppose that \(G\) is
		\(\varepsilon\)-far from being induced-\(H\)-free.  Let $\mathcal G_{\mathrm{cut}}
		:=
		\left\{|E(G)\setminus E(G')|\le\varepsilon n/10\right\}$
		be the cut-bound event, and condition on \(\mathcal G_{\mathrm{cut}}\).
		By \Cref{lem:structural},
		$\Pr[\mathcal G_{\mathrm{cut}}]\ge2/3$.
		Since $n\ge n_{\mathrm{alg}}\ge n_{\mathrm{far}}$,
		all numerical hypotheses of \Cref{far} are satisfied.
		
		Let \(\mathcal G_{\mathrm{find}}\) be the event that every invocation of the connected
		finder whose current residual graph is \(\alpha\)-far succeeds, every
		invocation of the low-degree finder whose hypotheses hold succeeds,
		and every singleton-finder invocation whose region has the asserted
		linear size succeeds.  There are at most
		$L = k(K+1)+K(k-1)Q^-$
		relevant invocations. By the residual guarantee in
		\Cref{thm:find-one-induced}, each applicable connected-finder call
		has the claimed success probability conditional on the preceding
		history. The regional finders have the same conditional guarantee
		because they use fresh samples. Thus each applicable call has
		conditional failure probability at most
		$\delta_\star=\frac{1}{20L},$
		a union bound gives
		$\Pr[\mathcal G_{\mathrm{find}}\mid\mathcal G_{\mathrm{cut}}]\ge
		1-L\delta_\star = \frac{19}{20}.$
		We now analyze the execution on
		\(\mathcal G_{\mathrm{cut}}\cap\mathcal G_{\mathrm{find}}\).
		
		By \Cref{far}, there are two cases.
		
		\medskip
		\noindent
		\textbf{Case I.}
		No component \(H_i\) has a small removal witness in \(G'\)
		with edit budget \(\alpha n\).
		
		After the vertices of any at most \(K\) pairwise vertex-disjoint
		induced \(H_i\)-copies have been removed, the residual graph
		remains \(\alpha\)-far from being induced-\(H_i\)-free.
		Thus \Cref{cor:sequential-induced-finder} applies. Under
		$\mathcal G_{\mathrm{find}}$, the sequential search for every
		\(i\in I_+\) finds \(K+1\) copies; every \(i\in I_0\) was marked
		\textnormal{\textsc{Full}} directly. Consequently, every component is
		marked \textnormal{\textsc{Full}}, and the algorithm rejects.
		
		\medskip
		\noindent
		\textbf{Case II.}
		There is a unique \(p\in[k]\) such that \(H_p\) has a small
		removal witness in \(G'\) with edit budget \(\alpha n\).
		
		By \Cref{far1}, \(p\in I_+\), and the isomorphism type of
		\(H_p\) occurs exactly once. For every
		\(i\in I_+\setminus\{p\}\), the graph \(H_i\) has no such witness. Therefore, the corresponding sequential search finds
		\(K+1\) pairwise vertex-disjoint induced copies of \(H_i\). Every
		\(i\in I_0\) is marked \textnormal{\textsc{Full}} directly.
		
		If the search for \(H_p\) also finds \(K+1\) copies, then every
		component is marked \textnormal{\textsc{Full}}, and the
		algorithm rejects. Otherwise, suppose that the search for \(H_p\) stops after finding $\mathcal C_p
		=
		\{C_1,\ldots,C_t\}$ where $t\le K$. Since the search for $H_p$ has stopped while
		$\mathcal G_{\mathrm{find}}$ occurs, the residual graph $G'\left[
		V(G')\setminus
		\bigcup_{j=1}^tV(C_j)
		\right]$
		cannot be \(\alpha\)-far from being induced-\(H_p\)-free: otherwise
		the connected finder would have succeeded. Therefore, \(\mathcal C_p\) is a small removal witness for
		\(H_p\) in \(G'\) with edit budget \(\alpha n\).
		
		\Cref{far2} applies to every such witness family, including this
		one returned by the algorithm. Its size hypothesis holds because
		\(n\ge n_{\mathrm{alg}}\ge n_{\mathrm{far}}=n_0\).

		Consequently,
		$J\neq\emptyset,$
		and there exists some
		$j^\star\in J$
		such that, simultaneously for every
		\(i\in I_+\setminus\{p\}\),
		$G'[B_{j^\star}]$
		is \(\eta\)-far from being induced-\(H_i\)-free. If \(m_0>0\), then
		also
		$|B_{j^\star}|\ge\lambda n.$
		
		The algorithm tries every \(j\in J\), and hence it eventually tries
		\(j^\star\). During this trial, the set \(S\) of already selected
		vertices always satisfies
		$|S|\le Q^-h.$
		For every \(i\in I_+\setminus\{p\}\), putting \(r=h_i\), the
		definition of \(n_{\mathrm{alg}}\) gives $n
		\ge
		\frac{2Q^-h\,dh_{\max}}{\eta}
		\ge
		\frac{2dr|S|}{\eta}$.
		Therefore, \Cref{lem:find-low-region} applies at every nontrivial
		stage. At every singleton stage, the definition of
		\(n_{\mathrm{far}}\) similarly gives $n\ge\frac{2Q^-h}{\lambda}\ge\frac{2|S|}{\lambda}$,
		and hence \Cref{lem:find-singleton-region} applies. The algorithm finds all
		required candidates for every remaining component and rejects.
		
		Consequently, whenever \(\mathcal G_{\mathrm{cut}}\cap\mathcal G_{\mathrm{find}}\) occurs,
		the algorithm rejects.  Therefore, conditioned on
		\(\mathcal G_{\mathrm{cut}}\), an \(\varepsilon\)-far input is
		rejected with probability at least \(19/20\). Thus, one execution rejects with probability at least
		$(2/3)(19/20)=19/30$.  Run two independent complete executions,
		with fresh partition seeds and fresh finder randomness, and reject if
		either rejects.  The rejection probability is at least
		$1-(11/30)^2=779/900>2/3$, while one-sidedness is preserved.
	\end{proof}

	\subsection{Query complexity}
	We now bound each phase of the tester and verify the polynomial dependence on
	\(1/\varepsilon\) claimed in \Cref{thm:test-disconnected}.
	
	Let $r_{\max}
	:=
	\max\bigl(\{h_i:i\in I_+\}\cup\{1\}\bigr)$. Recall from the definition preceding \Cref{alg:test-disconnected} that $L:=k(K+1)+K(k-1)Q^-,\,\delta_\star:=\frac{1}{20L}$.
	
	By \Cref{lem:structural,simulate}, for the decomposition parameter
	\(\zeta=\varepsilon/10\), we have $N_{\mathrm{part}}
	=
	\operatorname{poly}
	\left(
	\frac{d^2}{\varepsilon}
	\right)$ and $
	N_{\mathrm{sim}}
	=
	O(dN_{\mathrm{part}})$.
	
	Using the query bound \(N_{\mathrm{find}}\) from
	\Cref{thm:find-one-induced}, define
	\[
	N_{\mathrm{find}}^{\max}
	:=
	\begin{cases}
		\displaystyle
		\max_{i\in I_+}\{
		N_{\mathrm{find}}(H_i,\alpha,\delta_\star)\},
		& I_+\ne\emptyset,\\[1ex]
		0,
		& I_+=\emptyset.
	\end{cases}
	\]
	If \(I_+\ne\emptyset\), then $N_{\mathrm{find}}^{\max}
	=
	O\left(
	\max_{i\in I_+}\left\{
	\frac{\log(1/\delta_\star)}
	{p_{H_i}(\alpha,d)}
	\left(
	d^{h_i}N_{\mathrm{sim}}
	+
	d^{h_i^2+1}
	\right)\right\}
	\right)$.
	
	Similarly, using the query bound \(N_{\mathrm{low}}\) from
	\Cref{lem:find-low-region}, define
	\[
	N_{\mathrm{low}}^{\max}
	:=
	\begin{cases}
		\displaystyle
		\max_{i\in I_+}
		N_{\mathrm{low}}(H_i,\eta,\delta_\star),
		& I_+\ne\emptyset,\\[1ex]
		0,
		& I_+=\emptyset.
	\end{cases}
	\]
	If \(I_+\ne\emptyset\), then $N_{\mathrm{low}}^{\max}
	=
	O\left(
	\frac d\eta
	\log\frac1{\delta_\star}
	\left(
	d^{r_{\max}}(N_{\mathrm{sim}}+1)
	+
	d^{r_{\max}^2}
	\right)
	\right)$.
	
	By \Cref{lem:find-singleton-region,lem:Bj-local}, let $N_{\mathrm{sing}}
	:=
	O\left(
	\frac1\lambda\log\frac2{\delta_\star}
	\right)$
	bound the number of queries needed for one singleton search in a region
	\(B_j\).
	
	In one execution, the initial sequential phase makes at most
	\(k(K+1)\) calls to the connected induced-copy finder and uses at
	most $k(K+1)N_{\mathrm{find}}^{\max}$
	threshold-list and direct random-neighbor queries.
	
	If the exceptional phase is entered, constructing \(Z^\ell\) costs
	\(O(Khd)\) threshold-list queries. There are at most \(K\) rooted witness
	copies, and for each one the algorithm makes at most \((k-1)Q^-\) calls,
	each either to the low-degree finder or to the singleton finder. Consequently, the
	exceptional phase uses at most $O(Khd)
	+
	K(k-1)Q^-
	\max\{N_{\mathrm{low}}^{\max},N_{\mathrm{sing}}\}$
	threshold-list and direct random-neighbor queries.
	
	Therefore, one execution uses at most $k(K+1)N_{\mathrm{find}}^{\max}
	+
	K(k-1)Q^-
	\max\{N_{\mathrm{low}}^{\max},N_{\mathrm{sing}}\}
	+
	O(Khd)$
	queries. Since the tester performs two independent executions, its total
	query complexity is at most twice this quantity.
	
	All quantities $k,\ h,\ K,\ Q,\ Q^-,
	\ \delta_\star$ depend only on \(H\). Also, $d=\left(\frac1\varepsilon\right)^{O_H(1)},\ 
	\alpha,\eta,\lambda
	\ge
	\varepsilon^{O_H(1)}$ 
	and
	\(N_{\mathrm{part}}=(1/\varepsilon)^{O_H(1)}\).
	The bounds above and the estimate
	\(p_{H_i}(\alpha,d)\ge
	\alpha^{O_H(1)}d^{-O_H(1)}\)
	therefore give total query complexity $\left(\frac1\varepsilon\right)^{O_H(1)}$. 
	If \(n<n_{\mathrm{alg}}\), the structural tester reconstructs the graph with
	\(n\) calls to \(\mathsf{TL}_{n-1}\). These are \(n\) threshold-list calls and \(O(n^2)\) primitive adjacency-list queries. This is still
	\((1/\varepsilon)^{O_H(1)}\) because
	\(n_{\mathrm{alg}}=(1/\varepsilon)^{O_H(1)}\).
	This proves the query bound and completes the proof of
	\Cref{thm:test-disconnected}.

	\paragraph*{Completing the adjacency-list theorem.}
	The connected and disconnected analyses now cover every fixed nonempty
	forbidden graph. It remains to replace each exact threshold-list call by its
	adjacency-list implementation.
	
	\begin{proof}[Proof of \Cref{thm:sparse-main}]
		If \(H\) is not outerplanar, accept: an induced subgraph of an outerplanar
		graph is outerplanar. If \(H=K_1\), the known value of \(n\) gives the
		answer. Otherwise, run the
		tester from \Cref{thm:test-connected} when \(H\) is connected and the tester
		from \Cref{thm:test-disconnected} when \(H\) is disconnected.
		
		If the disconnected tester takes its small-input branch, answer its \(n\)
		calls to \(\mathsf{TL}_{n-1}\) exactly. Each costs at most \(n\)
		adjacency-list queries, for a total of \(O(n^2)\le
		(1/\varepsilon)^{O_H(1)}\).
		
		In every other branch, answer each call to \(\mathsf{TL}_d(v)\) exactly. First query
		\(\deg_G(v)\). If it is greater than \(d\), return \(\mathsf{High}\).
		Otherwise, query all \(\deg_G(v)\le d\) indexed neighbors and return the
		complete list. Therefore, one threshold-list call costs at most \(d+1\)
		adjacency-list queries. For each requested uniform neighbor of a high-degree
		root \(a\), query \(\deg_G(a)\), choose a uniform index in
		\([\deg_G(a)]\), and query that neighbor.
		
		Both structural testers make
		\(\left(\frac1\varepsilon\right)^{O_H(1)}\)
		threshold-list and high-root sampling calls. They always accept an induced-\(H\)-free graph,
		and they reject an \(\varepsilon\)-far graph with probability at least
		\(1-(11/30)^2=779/900>2/3\). Their displayed parameters and query bounds are
		polynomial in \(1/\varepsilon\) for each fixed \(H\). The total number of
		adjacency-list queries is \((1/\varepsilon)^{O_H(1)}\).
	\end{proof}
	
	\section{From adjacency-list access to the random-neighbor model}
	\label{sec:rn-simulation}
	
	This section proves the main random-neighbor result, \Cref{thm:main}, by
	simulating the adjacency-list tester from \Cref{thm:sparse-main}. The main
	issue is that a random-neighbor query does not reveal a degree or a complete
	neighbor list. We solve this by sampling enough times to recover every
	bounded-degree list used in one adaptive execution. Conditioned on one
	simultaneous recovery event, the simulated run is exactly the same as the
	adjacency-list run. The recovery event can fail with small probability, which
	is why the final tester has two-sided error.
	
	We first show that all neighborhood recoveries succeed together
	with high probability, and that correct recovery gives an exact simulation. We then give the
	full random-neighbor algorithm, prove its error bound, and bound its query
	complexity. The one-sided lower bound in \Cref{app:one-sided-barrier} shows
	that the two-sided error is needed in general.
	
	\subsection{Recovering bounded-degree neighborhoods}
	\label{sec:rn-recovery}
	
	The adjacency-list algorithms use exact threshold-list access. This subsection
	prepares the move to the random-neighbor model. Its main results are
	\Cref{lem:rn-cache,lem:rn-faithful}: the first gives one recovery event for
	all bounded-degree lists used in an adaptive run, and the second shows that
	the whole simulated run agrees with the exact run on that event. The adjacency-list tester needs the complete
	neighborhood of a bounded-degree vertex and must recognize when a vertex is
	high-degree. The random-neighbor oracle does not provide either operation
	directly.
	The following subsection applies these results to the tester and proves
	\Cref{thm:main}.
	
	Recall the threshold-list query \(\mathsf{TL}_t\) from
	\Cref{pre}. We now show that a bounded number of these
	queries can be recovered at once, with high probability, using only
	random-neighbor samples.
	
	The recovery procedure uses two elementary observations.  If
	\(\deg_G(v)>t\), then observing \(t+1\) distinct neighbors certifies that
	\(v\) is high-degree.  If \(\deg_G(v)\le t\), repeated random-neighbor
	samples recover the entire neighborhood with high probability by the
	coupon-collector bound.  For each
	pair \((v,t)\), the answer determined on the first call is reused on all later calls; this keeps adaptive queries consistent.
	
	\begin{algorithm}[H]
		\SetAlgoLined
		\caption{\textbf{Recover}$(v,t,M)$}
		\label{alg:recover}
		\KwIn{a vertex $v$, a threshold $t$, and a sample budget $M$}
		\KwOut{$\mathsf{High}$, or $\mathsf{Low}$ and a sorted neighbor list}
		\If{an answer for $(v,t)$ has already been fixed}{
			\Return{that answer.}}
		Set $S:=\emptyset$\;
		\For{$M$ independent repetitions}{
			Query $u:=\mathsf{RN}(v)$\;
			\If{$u\ne\perp$}{
				add $u$ to $S$}
			\If{$|S|=t+1$}{
				fix the answer as $\mathsf{High}$ and \Return{$\mathsf{High}$.}}
		}
		Fix the answer as $(\mathsf{Low},S^{\uparrow})$ and return it.
	\end{algorithm}
	
	\begin{lemma}[Simultaneous neighborhood recovery]
		\label{lem:rn-cache}
		Fix \(D\ge1\), \(B\ge1\), and \(\eta\in(0,1)\), and set $M
		:=
		\left\lceil
		(D+1)\log\frac{(D+1)B}{\eta}
		\right\rceil$. Suppose that during an execution at most \(B\) distinct pairs
		\((v,t)\) are queried through \Cref{alg:recover}, and every threshold
		\(t\) is at most \(D\).  If every call uses \(M\) random-neighbor samples, then,
		with probability at least \(1-\eta\), all recovered answers are
		simultaneously correct.  The total number of random-neighbor queries is at
		most \(BM\).
	\end{lemma}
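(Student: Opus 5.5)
We analyze each of the at most \(B\) recovery calls separately and show that a single call returns a wrong answer with probability at most \(\eta/B\). A union bound over the calls then gives the simultaneous statement.

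\emph{When a call can err.} Consider the call that first fixes the answer for a pair \((v,t)\). If \(\deg_G(v)>t\), the call never outputs a wrong list of size \(t+1\) or more. The only possible error is stopping with fewer than \(t+1\) distinct neighbors and declaring \(\mathsf{Low}\). If \(\deg_G(v)\le t\), every sample is a true neighbor, so \(\mathsf{High}\) is never returned. The only error is missing some neighbor. If \(v\) is isolated, the oracle always returns \(\perp\), and the empty list is correct. So in both cases the bad event is that the \(M\) samples fail to contain some required set of neighbors.

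\emph{Bounding the bad event.} In the low case, fix one neighbor \(u\). The probability that \(M\) independent samples all miss \(u\) is
\[
\bigl(1-1/\deg_G(v)\bigr)^M\le e^{-M/(D+1)},
\]
since \(\deg_G(v)\le t\le D\). A union bound over at most \(D\) neighbors gives failure probability at most \(D e^{-M/(D+1)}\).

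In the high case, the degree may be huge, so I would not try to collect all neighbors. Instead, fix any set \(W\) of exactly \(t+1\) neighbors of \(v\). Seeing every vertex of \(W\) certainly produces \(t+1\) distinct samples. Each \(w\in W\) is hit by one sample with probability \(1/\deg_G(v)\), but this may be tiny, so this route fails.

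This is the main obstacle, and I would handle it by a different argument. While fewer than \(t+1\) distinct neighbors have been seen, there remain at least \(\deg_G(v)-t\ge1\) unseen neighbors. Hence each sample yields a new neighbor with probability at least \(1/(t+1)\) when \(\deg_G(v)\ge t+1\). More precisely, with \(j\le t\) neighbors seen, the probability of a new one is \((\deg-j)/\deg\ge (t+1-j)/(t+1)\).

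This is exactly the coupon-collector process on \(t+1\) coupons, stochastically dominated. So the probability of not reaching \(t+1\) distinct neighbors within \(M\) samples is at most the probability that a coupon collector on \(t+1\) coupons fails within \(M\) draws. That is at most
\[
(t+1)e^{-M/(t+1)}\le (D+1)e^{-M/(D+1)}.
\]
I would make the domination rigorous by a coupling: map the current unseen neighbors onto the unseen coupons and note that the success probability of each step is pointwise at least the coupon-collector one.

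\emph{Conclusion.} With
\[
M=\left\lceil (D+1)\log\frac{(D+1)B}{\eta}\right\rceil,
\]
each first call errs with probability at most \((D+1)e^{-M/(D+1)}\le \eta/B\). This bound also holds conditionally on the entire earlier history of the adaptive execution, because each call uses fresh samples and the history only determines which pair \((v,t)\) is queried.

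The union bound over at most \(B\) distinct pairs then gives simultaneous correctness with probability at least \(1-\eta\). Later calls on the same pair reuse the fixed answer and spend no queries. Each of the at most \(B\) first calls uses at most \(M\) queries, giving the total bound \(BM\).
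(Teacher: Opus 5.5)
Your proof is correct and follows the paper's argument. It uses a coupon-collector union bound in the low-degree case and reduces the high-degree case to the worst case \(\deg_G(v)=t+1\); your per-step coupling is the paper's observation that the geometric success probabilities \((k-i)/k\) increase with \(k\). It then takes a union bound over the at most \(B\) first calls, each conditioned on the earlier history (the abandoned detour with the fixed set \(W\) can simply be deleted).
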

	
	\begin{proof}
		Consider the first call to \Cref{alg:recover} for a pair \((v,t)\), and condition on the complete history before the \(M\) fresh random-neighbor samples used for this pair. Put $k:=\deg_G(v)$.  If $k=0$, every reply is $\perp$ and the answer is exact.
		If $1\le k\le t$, the only failure is that some neighbor is missed.  Consequently, $\Pr[S\ne \Gamma_G(v)]
		\le k\left(1-\frac1k\right)^M
		\le D e^{-M/D}
		\le \frac{\eta}{B}$.
		
		Now suppose $k\ge t+1$.  The waiting time to see $t+1$ distinct neighbors is
		the sum, for $i=0,\ldots,t$, of independent geometric waiting times with
		success probabilities $(k-i)/k$.  These probabilities increase with $k$.
		Therefore, failure is maximized at $k=t+1$, where it is the event that at least one
		of $t+1$ coupons is missed.  Thus, $\Pr[|S|\le t]
		\le (t+1)\left(1-\frac1{t+1}\right)^M
		\le (D+1)e^{-M/(D+1)}
		\le \frac{\eta}{B}$.
		The same conditional bound holds whenever a pair \((v,t)\) is queried for the first time, regardless of the earlier history and even if an answer fixed earlier is incorrect. Since at most \(B\) distinct pairs \((v,t)\) are queried during the execution, a union bound proves the simultaneous guarantee. Reusing the answer from the first call keeps the answers consistent, and the query bound follows directly.
	\end{proof}
	
	Simultaneous correctness is important because the structural tester is
	adaptive.  In particular, different calls to the partition oracle must be
	answered with respect to the same bounded-degree graph and the same fixed
	random seed. Consequently, it is not enough that each local reconstruction is
	correct with high probability in isolation.  The next lemma shows that,
	whenever all recovered neighborhoods are correct, the entire execution
	coincides with an execution in which the threshold-list queries are answered
	exactly.
	
	\begin{lemma}[Correct recovery gives an exact simulation]
		\label{lem:rn-faithful}
		Consider an algorithm that queries degrees only to distinguish
		\(\deg_G(v)\le d\) from \(\deg_G(v)>d\), reads complete neighbor
		lists only in the first case, and samples uniform neighbors of
		high-degree vertices. In an \emph{exact run}, all these operations
		are answered exactly, with lists sorted by vertex identifier.
		In a \emph{simulated run}, answer each threshold-list query by
		\(\textbf{Recover}(v,d,M)\), reusing the answer fixed on the first
		call, and use fresh direct \(\mathsf{RN}\) queries for high-degree
		neighbor samples.

		Compare the two runs using the same internal random choices,
		including partition-oracle seeds, and the same sampled neighbors.
		If every recovered answer is correct, the two runs make the same
		queries, receive the same answers, and give the same output.
		In particular, all partition-oracle calls refer to one partition,
		and all calls to \Cref{Local} refer to one graph \(G'_{\mathbf R}\).
	\end{lemma}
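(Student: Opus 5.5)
The proof is a coupling argument by induction on the steps of the execution. Fix all internal randomness of the algorithm: the partition-oracle seed \(\mathbf R\), the uniform vertex samples, and the choices made while enumerating candidates. Also fix the sequence of neighbors returned by the direct \(\mathsf{RN}\) calls at high-degree roots, and use the same values in both runs. The exact run is then a deterministic function of the oracle answers it receives. We view each run as a sequence of elementary operations: a threshold-list query \(\mathsf{TL}_d(v)\), a uniform vertex sample, or a uniform neighbor sample at a vertex classified as high. We prove by induction on the step index \(i\) that the two runs issue the same operation at step \(i\) and receive the same answer.

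Suppose the first \(i-1\) operations and answers agree. The algorithm's next action depends only on its internal state, which is determined by the shared randomness and the agreed transcript, so both runs issue the same operation.
\begin{itemize}
\item If the operation is a uniform vertex sample, the answers agree by the coupling.
\item If it is a neighbor sample at a vertex \(a\), then \(a\) was classified as \(\mathsf{High}\) by an earlier threshold-list answer, which agrees in the two runs. Since that answer is correct, \(\deg_G(a)>d\). The exact run samples a uniform element of \(\Gamma_G(a)\), and the simulated run's \(\mathsf{RN}(a)\) has the same distribution, so we may couple them to be equal.
\item If it is \(\mathsf{TL}_d(v)\), the simulated run returns \(\mathbf{Recover}(v,d,M)\). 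This is either the value fixed at the first call or a fresh recovery. By hypothesis every recovered answer is correct: it is \(\mathsf{High}\) exactly when \(\deg_G(v)>d\), and otherwise it is \((\mathsf{Low},\Gamma_G(v)^\uparrow)\), the same sorted list the exact run returns.
\end{itemize}
The answers therefore coincide, and since the output is a function of the transcript and the shared randomness, the outputs agree.

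For the final assertion, note that every call to \(\mathbf{findPartition}\) and to \Cref{Local} accesses the bounded-degree graph \(\widetilde G\) only through threshold-list queries. It uses these to classify endpoints and to read sorted low-degree neighbor lists, as described before \Cref{simulate}. On the agreement event, each such call therefore sees exactly the fixed adjacency-list representation of \(\widetilde G\) used in the exact run, together with the same seed \(\mathbf R\). By the consistency in \Cref{partition}, all partition-oracle answers refer to the single partition \(\mathcal P_{\mathbf R}\). By \Cref{simulate}, all answers of \Cref{Local} refer to the single graph \(G'_{\mathbf R}\).

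The main obstacle is the coupling of the high-degree samples in the adaptive setting. Which vertex \(a\) gets sampled depends on earlier answers, so the coupling has to be defined step by step: at each step the simulated \(\mathsf{RN}(a)\) is identified with the exact run's uniform index choice. This is legitimate only because the agreed \(\mathsf{High}\) classification guarantees that \(a\) is the same vertex in both runs with the same neighbor distribution. A second point needs care. Reusing cached answers is exactly what makes repeated calls on the same vertex return identical lists, as they do in the exact run. This reuse is the reason the hypothesis is stated for recovered answers, which are fixed once per vertex, rather than for individual calls.
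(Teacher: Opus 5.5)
Your proposal is correct and follows essentially the paper's argument: induction over the sequence of operations, with shared internal randomness and identical high-root samples. In both, consistency of \(\mathbf{findPartition}\) and \Cref{Local} comes from the fact that correct recovered lists give the partition oracle one fixed, ID-sorted representation of \(\widetilde G\) together with the same seed \(\mathbf R\); the paper proves this consistency first and then inducts, while you derive it after the transcript induction, which makes no difference.
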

	
	\begin{proof}
		Assume that the simultaneous correctness event from \Cref{lem:rn-cache}
		occurs.  Then recovery returns exactly the increasing-ID list of every
		low-degree vertex and classifies every high-degree vertex correctly.  A
		neighbor query in
		$\widetilde G=(V(G),E(G[V^\ell]))$
		is answered by taking the recovered list of a low vertex, retaining
		exactly its low neighbors, and preserving increasing-ID order; a high vertex
		has the empty list.  Therefore, every primitive query made by
		$\mathbf{findPartition}$ is answered according to one fixed representation of
		$\widetilde G$.  With the same seed it returns the same part on every call.
		
		Recovered lists then reconstruct exactly the graph on a partition part and
		its high-degree neighbors.  The set of high-degree neighbors, the chosen multiway cut, the
		component after the cut, and its label are therefore the same as
		in the exact run, and hence every
		\Cref{Local} answer refers to the same globally defined
		$G'_{\mathbf R}$.  Use the same random neighbor of \(a\) in both runs.
		Inducting over the sequence of operations now gives the same
		low-degree explorations,
		high-root branches, separated-region membership answers, candidate sets, and
		induced-copy verification.  In the last operation every tested pair has a low
		endpoint whose complete recovered list decides the adjacency exactly.
	\end{proof}

	\subsection{The random-neighbor tester}
	
	We now combine the connected and disconnected adjacency-list algorithms with
	the recovery guarantees from \Cref{sec:rn-recovery}. The main result of this
	subsection is \Cref{thm:main}. The algorithm replaces every threshold-list
	call by neighborhood recovery, stores each answer for later reuse,
	and keeps direct random-neighbor samples at high-degree roots unchanged.
	We first give the algorithm, then compare it with an exact run using the
	same random choices, and finally bound the error probability and the number
	of queries.
	
	For the large-input branch, let \(\mathcal A\) denote the relevant structural
	tester: the tester from \Cref{thm:test-connected} when \(H\) is connected,
	and the tester from \Cref{thm:test-disconnected} when \(H\) is disconnected.
	In either case, \(\mathcal A\) includes the two independent complete
	executions used for amplification.
	
	By the query bounds in \Cref{thm:test-connected} and
	\Cref{thm:test-disconnected}, there exist fixed
	constants $B_0=B_0(H,\varepsilon)\ge 1
	\text{ and }
	R_0=R_0(H,\varepsilon)\ge 0$
	such that every execution of \(\mathcal A\) involves at most \(B_0\)
	distinct threshold-list records and at most \(R_0\) direct
	random-neighbor queries at high-degree roots. We use these bounds in the
	random-neighbor simulation below.

	\begin{algorithm}[h]
		\SetAlgoLined
		\caption{\textbf{TestInducedRN}}
		\label{alg:test-rn}
		\KwIn{an outerplanar graph \(G\), a fixed graph \(H\), and
			\(\varepsilon\)}
		\KwOut{\textnormal{\textsc{Accept}} or \textnormal{\textsc{Reject}}}
		
		\If{$H$ is not outerplanar}{
			\Return{\textnormal{\textsc{Accept}}}
		}
		
		\If{$H=K_1$}{
			\Return{\textnormal{\textsc{Reject}} iff $n>0$}
		}
		
		\If{$H$ is disconnected and $n<n_{\mathrm{alg}}$}{
			Set $D_0:=\max\{1,n_{\mathrm{alg}}-1\},\,
			B_s:=n_{\mathrm{alg}}$
			and $M_s:=
			\left\lceil
			(D_0+1)
			\log\bigl(12(D_0+1)B_s\bigr)
			\right\rceil$\;
			Discard all previously fixed recovery answers and call
			\textbf{Recover}$(v,n-1,M_s)$ for every $v\in[n]$\;
			Construct the simple graph \(\widehat G\) in which \(uv\) is an edge
			exactly when each recovered list contains the other endpoint, and
			return the exact answer for \(\widehat G\)\;
		}
		
		Let \(\mathcal A\) be the amplified structural tester described above,
		and set $M_0:=
		\left\lceil
		(d+1)\log\bigl(12(d+1)B_0\bigr)
		\right\rceil$\;
		Discard all previously fixed recovery answers and simulate $\mathcal A$\;
		Whenever \(\mathcal A\) calls \(\mathsf{TL}_d(v)\), answer it by
		\textbf{Recover}$(v,d,M_0)$; whenever \(\mathcal A\) requests a
		random neighbor of a high-degree root, issue a fresh call to
		\(\mathsf{RN}(v)\)\;
		If the simulation would create more than \(B_0\) distinct recovered
		records or make more than \(R_0\) direct high-root random-neighbor
		calls, return \textnormal{\textsc{Accept}}\;
		\Return{the output of the simulated structural tester.}
	\end{algorithm}

	\begin{proof}[Proof of \Cref{thm:main}]
		We analyze \Cref{alg:test-rn}. The case \(H=K_1\) is decided exactly,
		and every outerplanar input is induced-\(H\)-free when \(H\) is not
		outerplanar. We may therefore assume that \(H\neq K_1\) is outerplanar.
		
		First consider the small-input branch, which can arise only when \(H\)
		is disconnected and \(n<n_{\mathrm{alg}}\). Since every vertex has
		degree at most \(n-1\le D_0\), we may apply \Cref{lem:rn-cache} with
		\(D=D_0\), \(B=B_s\), and \(\eta=1/12\). By the choice of \(M_s\),
		with probability at least \(11/12\), every call
		\textbf{Recover}\((v,n-1,M_s)\) returns the complete neighbor list of
		\(v\). On this event the algorithm reconstructs \(G\) exactly and
		returns the correct answer. Consequently, this branch has error
		probability at most \(1/12\). It uses at most \(nM_s\)
		random-neighbor queries. Since
		\(n_{\mathrm{alg}}=(1/\varepsilon)^{O_H(1)}\), this branch uses
		\(\left(\frac1\varepsilon\right)^{O_H(1)}\) queries.
		
		We now consider the large-input branch. Let \(\mathcal A\) be the
		corresponding amplified structural tester. By
		\Cref{thm:test-connected,thm:test-disconnected}, if \(G\) is
		induced-\(H\)-free, then \(\mathcal A\) accepts surely, while if
		\(G\) is \(\varepsilon\)-far from being induced-\(H\)-free, then
		\(\mathcal A\) rejects with probability at least $1-\left(\frac{11}{30}\right)^2
		=
		\frac{779}{900}$.
		
		Let \(\mathcal F\) be the event that at least one recovered
		threshold-list record created during the simulation is incorrect.
		Before terminating, the simulation creates at most \(B_0\) such
		records. Therefore, by \Cref{lem:rn-cache} with
		\(D=d\), \(B=B_0\), and \(\eta=1/12\), the choice of \(M_0\) gives $\Pr[\mathcal F]\le\frac1{12}$.
		
		Suppose that \(\mathcal F\) does not occur. By
		\Cref{lem:rn-faithful}, the simulated run and the exact run of the
		structural tester agree at every step when they use the same
		internal random choices and direct random-neighbor outcomes. Every such
		execution respects the deterministic bounds \(B_0\) and \(R_0\).
		Therefore, on \(\neg\mathcal F\), the simulation never terminates because
		either bound is exceeded and returns exactly the output of the
		corresponding structural-tester execution.
		
		Consequently, with the two runs compared in this way, the output of
		\Cref{alg:test-rn} can differ from the output of \(\mathcal A\) only
		if \(\mathcal F\) occurs. In particular, if \(G\) is
		induced-\(H\)-free, then
		\[
		\Pr[
		\textnormal{\textsc{Accept}}
		\mid
		G\text{ is induced-}H\text{-free}
		]
		\ge
		1-\frac1{12}
		=
		\frac{11}{12}.
		\]
		If \(G\) is \(\varepsilon\)-far from being induced-\(H\)-free, then
		\[
		\Pr[
		\textnormal{\textsc{Reject}}
		\mid
		G\text{ is }\varepsilon\text{-far}
		]
		\ge
		\frac{779}{900}-\frac1{12}
		=
		\frac{176}{225}
		>
		\frac23.
		\]
		
		Finally, the large-input branch makes at most $B_0M_0+R_0$
		random-neighbor queries. The structural bounds give
		\(B_0,R_0,d=(1/\varepsilon)^{O_H(1)}\), and the definition of \(M_0\)
		adds only a logarithmic factor. Therefore, this quantity is
		\(\left(\frac1\varepsilon\right)^{O_H(1)}\). Together with the
		small-input analysis, this
		proves \Cref{thm:main}.
	\end{proof}
	
		\bibliography{ref}

	\appendix
	\section{The one-sided random-neighbor barrier}
	\label{app:one-sided-barrier}
	
	The main theorem gives a two-sided random-neighbor tester, while the
	adjacency-list tester is one-sided. This appendix shows that the difference
	is needed. It proves \Cref{prop:one-sided-barrier} by comparing disjoint
	unions of \(P_3\)'s with disjoint unions of triangles. The proof shows that
	no constant-query one-sided random-neighbor tester can distinguish these
	families with the required guarantee.
	
	\begin{proof}[Proof of \Cref{prop:one-sided-barrier}]
		Let \(n=3m\). Partition the labels into \(m\) triples. On every triple let
		\(G_m\) induce a copy of \(P_3\), and obtain \(Y_m\) by adding the missing
		edge, so that every triple induces a triangle. Both graphs are outerplanar
		and have maximum degree at most two, while \(Y_m\) is induced-\(P_3\)-free.
		Each triple of \(G_m\) remains an induced \(P_3\) unless at least one of its
		three internal pairs is edited. Consequently, the distance of \(G_m\) from
		induced-\(P_3\)-freeness is exactly \(m=n/3\).
		
		Also, \(E(G_m)\subseteq E(Y_m)\). Fix an integer \(q\ge 0\), and let
		\(\mathcal A\) be any randomized algorithm that makes at most \(q\)
		random-neighbor queries. For every complete record \(\tau\) of its queries, replies,
		and output, we have $\Pr[\tau\mid G_m]
		\le
		2^q\Pr[\tau\mid Y_m]$.
		Indeed, condition on the internal random coins of \(\mathcal A\). At each
		query, the probability of the fixed oracle answer on \(G_m\) is at
		most twice its probability on \(Y_m\), because every queried vertex has
		degree one or two in \(G_m\) and degree two in \(Y_m\). Multiplying these
		conditional bounds over at most \(q\) queries and then averaging over the
		internal coins gives the displayed inequality.
		
		Let \(\mathcal T_{\mathrm{rej}}\) be the set of complete records on
		which \(\mathcal A\) rejects. Summing the displayed inequality over
		\(\tau\in\mathcal T_{\mathrm{rej}}\) gives $\Pr[\mathcal A\text{ rejects }G_m]
		\le
		2^q
		\Pr[\mathcal A\text{ rejects }Y_m]$.
		A one-sided tester has zero rejection probability on the
		induced-\(P_3\)-free graph \(Y_m\), and hence it is also zero on the
		\(1/4\)-far graph \(G_m\), contradicting soundness.
	\end{proof}

	\section{Proof of the decomposition lemma}\label{app:decomposition}
	\Cref{sec:global-partition} stated the decomposition used throughout the
	tester and deferred its edge-count proof. This appendix supplies that
	proof. It bounds the high--high edges, the edges crossing low-degree
	partition parts, and the multiway cuts. It then checks the two structural
	properties of every component and derives the polynomial choice of the
	degree threshold. These bounds complete the proof of
	\Cref{lem:structural}.
	
	\begin{proof}[Proof of \Cref{lem:structural}]
		Let $V^h:=\{v\in V(G):\deg_G(v)>d\}$ and $V^\ell:=V(G)\setminus V^h$. Since \(G\) is outerplanar, we have $|E(G)|\le 2n-3$, and hence $\sum_{v\in V(G)}\deg_G(v)<4n$. Therefore, $|V^h|<\frac{4n}{d}$.
		We bound the number of edges removed in the first five steps of
		\Cref{GP}.
		
		First, \(G[V^h]\) is outerplanar, and hence step 2 removes at most $ |E(G[V^h])|
		\le 2|V^h|
		<\frac{8n}{d}
		\le \frac{\varepsilon n}{4} $, where the last inequality follows from $d\ge\frac{240}{\varepsilon}\log(2s+1)$.
		
		Second, in step 3 we apply the partition oracle with proximity
		parameter \(\varepsilon/(4d)\).  Condition on the event guaranteed by
		\Cref{partition} that at most
		$[\varepsilon/(4d)]d|V(\widetilde G)|$ edges cross partition parts.
		Since $|V(\widetilde G)|=n$, the number of edges of \(G[V^\ell]\)
		joining different parts is at most
		$\frac{\varepsilon}{4d}\cdot dn=\frac{\varepsilon n}{4}$.
		
		It remains to bound the multiway cuts removed in step 5. For every
		partition part \(P\), let $ \Gamma^h(P):=\Gamma_{G_3}(P)\cap V^h $. Only parts with \(|\Gamma^h(P)|\ge2\) require a nonempty multiway cut.
		As in the proof of \cite[Theorem 3.5]{BKN16}, by
		\cite[Claim 3.4]{BKN16}, $\sum_{P:\,|\Gamma^h(P)|\ge2}|\Gamma^h(P)|
		\le 15|V^h|$. Also, by \cite[Corollary 3.3]{BKN16}, for every such part
		\(P\) there exists a \(\Gamma^h(P)\)-multiway cut \(M_P\) of size $|M_P|
		\le
		2(|\Gamma^h(P)|-1)\log(2|P|+1)
		\le
		2|\Gamma^h(P)|\log(2s+1)$. Since \Cref{GP} chooses a minimum-size multiway cut, its chosen cut
		satisfies the same bound. Therefore, if \(M\) is the union of all
		multiway cuts removed in step 5, then
		\[
		\begin{aligned}
			|M|
			&\le
			2\log(2s+1)
			\sum_{P:\,|\Gamma^h(P)|\ge2}|\Gamma^h(P)|\\
			&\le
			30|V^h|\log(2s+1)\\
			&<
			\frac{120n}{d}\log(2s+1)\le
			\frac{\varepsilon n}{2}.
		\end{aligned}
		\]
		
		Consequently, before the final restoration step, the total number of removed
		edges is at most $\frac{\varepsilon n}{4}
		+
		\frac{\varepsilon n}{4}
		+
		\frac{\varepsilon n}{2}
		=
		\varepsilon n$. Step 6 only restores edges of the original graph \(G\), and hence it cannot
		increase the distance from \(G\). Therefore, \(G'\) is
		\(\varepsilon\)-close to \(G\).
		
		We next verify the component structure. After step 2, there is no edge
		between two vertices of \(V^h\). After step 3, no edge joins two
		distinct low-degree partition parts. Finally, for each part \(P\),
		the \(\Gamma^h(P)\)-multiway cut separates every pair of vertices in \(\Gamma^h(P)\).
		Therefore, after step 5 no connected component contains two
		vertices of \(V^h\).
		
		In step 6, edges are restored only between vertices already belonging
		to the same connected component. Thus, this step does not merge two
		different components, and every connected component of \(G'\) still
		contains at most one vertex of \(V^h\), equivalently at most one
		vertex whose degree in \(G\) is at least \(d+1\).
		
		Also, step 6 restores every original edge of \(G\) whose endpoints
		lie in the same connected component \(C\). Since no edge outside
		\(E(G)\) is ever added, we obtain $G'\subseteq G$ and $G'[C]=G[C]$ for every connected component \(C\) of \(G'\).
		
		Finally, \Cref{partition} gives $s=\operatorname{poly}\left(\frac{d}{\varepsilon/(4d)}\right)=\operatorname{poly}\left(\frac{d^2}{\varepsilon}\right)$. 
		Thus, \(\log(2s+1)=O(\log d+\log(1/\varepsilon))\).
		For a sufficiently large absolute constant \(N\),
		the choice $d\ge\frac{N}{\varepsilon^2}$ implies $d\ge
		\frac{240}{\varepsilon}\log(2s+1)$. This completes the proof.
	\end{proof}

	\section{AI Disclosure}\label{app:ai-disclosure}
	We used OpenAI's ChatGPT as a research and writing assistant. The main structural ideas and theoretical development in \Cref{sec:global-partition,sec:disconnected-structure}, including the problem formulation, outerplanar decomposition framework, treatment of disconnected forbidden graphs, and main combinatorial arguments, were developed by the authors.

    ChatGPT materially assisted with the development and presentation of \Cref{sec:induced-finder,sec:disconnected-tester}, including organizing algorithms and proofs, refining intermediate lemmas, checking parameters, and identifying potential gaps or ambiguities. It also assisted with exposition, notation, organization, and bibliographic checks.

    The authors independently reviewed and verified all AI-assisted mathematical and bibliographic content and take full responsibility for the accuracy, integrity, and originality of the submission.
\end{document}